\documentclass[11pt]{article}
\usepackage{amsmath}
\usepackage{graphicx,psfrag,epsf}
\usepackage{enumerate}
\usepackage{url} % not crucial - just used below for the URL 

%add more packages
\usepackage{color, framed, alltt, amssymb,bbm,amsthm,tabularx,caption,subcaption,amsmath,mathtools,cancel,lipsum,subcaption,float}
\usepackage[outercaption]{sidecap}
\usepackage[mathscr]{euscript}
\usepackage[normalem]{ulem}
\usepackage{arydshln}
\usepackage{etoolbox}
% \BeforeBeginEnvironment{figure}{\vskip-1ex}
% \AfterEndEnvironment{figure}{\vskip-1ex}

\useunder{\uline}{\ul}{}
\DeclarePairedDelimiter{\norm}{\lVert}{\rVert}

\newcommand{\empavg}{\frac{1}{n}\sum_{i=1}^{n}}
\newcommand{\ltheta}{ \hat{\theta} }

\DeclareMathOperator*{\argmin}{\arg\!\min}
\usepackage[linesnumbered,ruled]{algorithm2e}

\newtheorem{theorem}{Theorem}[section]

\newtheorem{lemma}[theorem]{Lemma}

\newtheorem{proposition}{Proposition}[section]
\newtheorem{definition}{Definition}[section]

\newtheorem{assumption}{Assumption}

%bibliography
\usepackage[utf8]{inputenc}
\usepackage[round]{natbib}
\bibliographystyle{plainnat}
\usepackage[resetlabels]{multibib}
\newcites{Supp}{References}
% \usepackage[backend=biber,style=authoryear,natbib=true]{biblatex}
% \usepackage{csquotes}
%\bibliography{PUlasso}
\usepackage[hypertexnames=false]{hyperref}

%
%\pdfminorversion=4
% NOTE: To produce non-blinded version, replace "0" with "1" below.
\newcommand{\blind}{1}

% DON'T change margins - should be 1 inch all around.
\addtolength{\oddsidemargin}{-.5in}%
\addtolength{\evensidemargin}{-.5in}%
\addtolength{\textwidth}{1in}%
\addtolength{\textheight}{-.3in}%
\addtolength{\topmargin}{-.8in}%

\begin{document}
\def\spacingset#1{\renewcommand{\baselinestretch}%
{#1}\small\normalsize} \spacingset{1}

%%% Beginning of the document

\if1\blind
{
\title{\bf PUlasso: High-dimensional variable selection with presence-only data}
\author{Hyebin Song\\
Department of Statistics, University of Wisconsin-Madison\\
and \\
Garvesh Raskutti \thanks{
Both HS and GR were partially supported by NSF-DMS 1407028. GR was also partially supported by ARO W911NF-17-1-0357.}\hspace{.2cm}\\
Department of Statistics, University of Wisconsin-Madison}
\maketitle
} \fi

\if0\blind
{
\bigskip
\bigskip
\bigskip
\begin{center}
{\LARGE\bf PUlasso: High-dimensional variable selection with presence-only data}
\end{center}
\medskip
} \fi

%\bigskip
%200 or fewer words

\begin{abstract}
In various real-world problems, we are presented with classification problems with \emph{positive and unlabeled data}, referred to as presence-only responses. In this paper, we study variable selection in the context of presence only responses where the number of features or covariates $p$ is large. The combination of \emph{presence-only responses} and \emph{high dimensionality} presents both statistical and computational challenges. In this paper, we develop the \emph{PUlasso} algorithm for variable selection and classification with positive and unlabeled responses. Our algorithm involves using the majorization-minimization (MM) framework which is a generalization of the well-known expectation-maximization (EM) algorithm. In particular to make our algorithm scalable, we provide two computational speed-ups to the standard EM algorithm. We provide a theoretical guarantee where we first show that our algorithm converges to a stationary point, and then prove that any stationary point within a local neighborhood of the true parameter achieves the minimax optimal mean-squared error under both strict sparsity and group sparsity assumptions. We also demonstrate through simulations that our algorithm out-performs state-of-the-art algorithms in the moderate $p$ settings in terms of classification performance. Finally, we demonstrate that our PUlasso algorithm performs well on a biochemistry example.
\end{abstract}

\noindent%
{\it Keywords:} PU-learning, majorization-minimization, non-convexity, regularization.
\vfill

\newpage
\spacingset{1.45} % DON'T change the spacing!

% \begin{refsection}
\section{Introduction}
\label{sec:intro}

In many classification problems, we are presented with the problem where it is either prohibitively expensive or impossible to obtain negative responses and we only have positive and unlabeled \emph{presence-only} responses (see e.g.~\cite{ward_presence-only_2009}). For example, presence-only data is prevalent in geographic species distribution modeling in ecology where presences of species in specific locations are easily observed but absences are difficult to track (see e.g.~\cite{ward_presence-only_2009}), text mining (see e.g.~\cite{liu_building_2003}), bioinformatics (see e.g.~\cite{elkan_learning_2008}) and many other settings. Classification with presence-only data is sometimes referred to as PU-learning (see e.g.~\cite{liu_building_2003,elkan_learning_2008}). In this paper we address the problem of variable selection with presence-only responses.

\subsection{Motivating application: Biotechnology}

Although the theory and methodology we develop apply generally, a concrete application that motivates this work arises from biological systems engineering. In particular, recent high-throughput technologies generate millions of biological sequences from a library for a protein or enzyme of interest (see e.g.~\cite{Fowler2014,Hietpas2011}). In Section~\ref{SecExperiment} the enzyme of interest is beta-glucosidase (BGL) which is used to decompose disaccharides into glucose which is an important step in the process of converting plant matter to bio-fuels~(\cite{Romero2015}). The performance of the BGL enzyme is measured by the concentration of glucose that is produced and a positive response arises when the disaccharide is decomposed to glucose and a negative response arises otherwise. Hence there are two scientific goals: firstly to determine how the sequence structure influences the biochemical functionality; secondly, using this relationship to engineer and design BGL sequences with improved functionality.

Given these two scientific goals, we are interested in both the \emph{variable selection} and \emph{classification} problem since we want to determine which positions in the sequence most influence positive responses as well as classify which protein sequences are functional. Furthermore the number of variables here is large since we need to model long and complex biological sequences. Hence our variable selection problem is \emph{high-dimensional}. In Section~\ref{SecExperiment} we demonstrate the success of our algorithm in this application context.

\subsection{Problem setup}
To state the problem formally, let $x \in \mathbb{R}^p$ be a $p$-dimensional covariate such that $x \sim \mathbb{P}_X$, $y \in \{0,1\}$ an associated response, and $z \in \{0,1\}$ an associated label. If a sample is labeled ($z=1$), its associated outcome is positive ($y=1$). On the other hand, if a sample is unlabeled ($z=0$), it is assumed to be randomly drawn from the population with only covariates $x$ not the response $y$ being observed. Given $n_{\ell}$ labeled and $n_u$ unlabeled samples, the goal is to draw inferences about the relationship between $y$ and $x$.
We model the relationship between the probability of a response $y$ being positive and $(x, \theta)$ using the standard logistic regression model:\begin{equation}
\label{eq:py|x}
\mathbb{P}(y=1|x;\theta) =  \frac{e^{\eta_\theta(x)}}{1+e^{\eta_\theta(x)}}, \qquad \eta_\theta(x) = \theta^Tx
\end{equation}
and $y|x\sim \mathbb{P}(\cdot|x;\theta^*)$ where $\theta^* \in \mathbb{R}^p$  refers to the unknown true parameter. Also, we assume the label $z$ is assigned only based on the latent response $y$ independent from $x$. Viewing $z$ as a noisy observation of latent $y$, this assumption corresponds to a missing at random assumption, a classical assumption in latent variable problems.

Given such $z$, we select $n_l$ labeled and $n_u$ unlabeled samples from samples with $z=1$ and $z=0$ respectively. An important issue is how  positive and unlabeled samples are selected. In this paper we adopt a case-control approach (for example,~\cite{mccullagh_generalized_1989}) which is suitable for our biotechnology application and many others. In particular we introduce another binary random variable $s \in \{0,1\}$ representing whether a sample is selected ($s=1$) or not ($s=0$) to model different sampling rates in selecting labeled and unlabeled samples. Since there are $n_\ell$ labeled and $n_u$ unlabeled samples, we have
\begin{equation*}
	\dfrac{\mathbb{P}(z=1|s=1)}{\mathbb{P}(z=0|s=1)} = \dfrac{n_\ell}{n_u},
\end{equation*}
and we see only selected samples, $(x_i,z_i,s_i=1)_{i=1}^{n_\ell +n_u}$. It is further assumed that the selection is only based on the label $z$, independent of $x$ and $y$. We note that this case-control scheme~(\cite{lancaster_case-control_1996,ward_presence-only_2009}), opposed to the single-training sampling scheme~(\cite{elkan_learning_2008}) is needed to model the case where unlabeled samples are random draws from the original population, since positive samples have to be over-represented in the dataset to satisfy such model assumption. 

In our biotechnology application the case-control setting is appropriate since the high-throughput technology leads to the unlabeled samples being drawn randomly from the original population~(see~\cite{Romero2015} for details). As is displayed in Fig.~\ref{fig:htpseqfcn}, sequences are selected randomly from a library and positive samples are generated through a screening step. Hence the positive sequences are sampled randomly from the positive sequences while the unlabeled sequences are based on random sampling from the original sequence library. This experiment corresponds exactly to the case-control sampling scheme discussed.

\begin{figure}[htbp]
    \centering
    \includegraphics{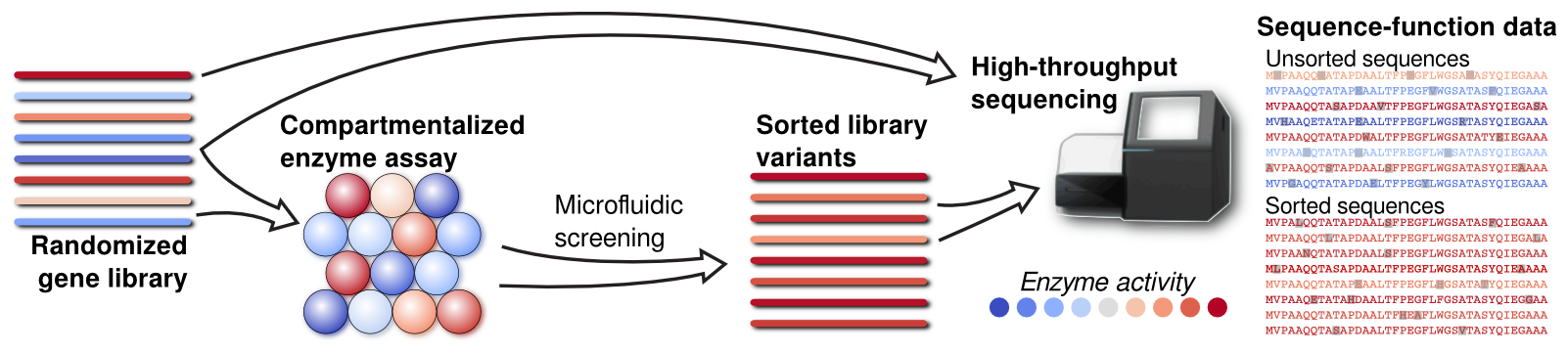}
    \caption{High-throughput sequencing diagram}
    \label{fig:htpseqfcn}
\end{figure}
Furthermore, the true positive prevalence is
\begin{equation}\label{def:pi}
	\pi := \mathbb{P}(y=1) = \int \dfrac{e^{\eta_{\theta^*}(x)}}{1+e^{\eta_{\theta^*}(x)}}d \mathbb{P}_X(x) \in (0,1)
\end{equation}
and $\pi$ is assumed known. In our biotechnology application, $\pi$ is estimated precisely using an alternative experiment~(\cite{Romero2015}). 

In the biological sequence engineering example, $(x_i)_{i=1}^{n_\ell +n_u}$ correspond to binary covariates of biological sequences. In the BGL example, for each of the $d$ positions, there are $M$ possible categories of amino acids. Therefore the covariates correspond to the indicator of an amino acid appearing in a given position~($p = O(dM)$) as well as pairs of amino acids~($p = O(d^2M^2)$), and so on. Here $d = O(1000)$ and $M \approx 20$ make the problem high-dimensional.

High-dimensional PU-learning presents computational challenges since the standard logistic regression objective leads to a non-convex likelihood when we have positive and unlabeled data. To address this challenge, we build on the expectation-maximization (EM) procedure developed in \cite{ward_presence-only_2009}  and provide two computational speed-ups. In particular we introduce the \emph{PUlasso} for high-dimensional variable selection with positive and unlabeled data. Prior work that involves the EM algorithm in the low-dimensional setting in \cite{ward_presence-only_2009} involves solving a logistic regression model at the M-step. To adapt to the high-dimensional setting and make the problem scalable, we include an $\ell_1$-sparsity or
$\ell_1/\ell_2$-group sparsity penalty and provide two speed-ups. Firstly we use a quadratic majorizer of the logistic regression objective, and secondly we use techniques in linear algebra to exploit sparsity of the design matrix $X$ which commonly arises in the applications we are dealing with. Our PUlasso algorithm fits into the majorization-minimization~(MM) framework~(see e.g.~\cite{lange_optimization_2000, Ortega}) for which the EM algorithm is a special case.

\subsection{Our contributions}

In this paper, we make the following major contributions:

\begin{itemize}
\item Develop the PUlasso algorithm for doing variable selection and classification with presence-only data. In particular we build on the existing EM algorithm developed in~\cite{ward_presence-only_2009} and add two computational speed-ups, quadratic majorization and exploiting sparse matrices. These two speed-ups improve speed by several orders of magnitude and allows our algorithm to scale to datasets with millions of samples and covariates.
\item Provide theoretical guarantees for our algorithm. First we show that our algorithm converges to a stationary point of the non-convex objective, and then show that any stationary point within a local neighborhood of $\theta^*$ achieves the minimax optimal mean-squared error for sparse vectors. To provide statistical guarantees we extend the existing results of generalized linear model with a canonical link function~(\cite{negahban_unified_2012,loh_regularized_2013}) to a non-canonical link function and show optimality of stationary points of non-convex objectives in high-dimensional statistics.
To the best of our knowledge the PUlasso is the first algorithm where PU-learning is provably optimal in the high-dimensional setting.
\item Demonstrate through a simulation study that our algorithm performs well in terms of classification compared to state-of-the-art PU-learning methods in~\cite{du_marthinus_convex_2015,elkan_learning_2008,liu_building_2003}, both for low-dimensional and high-dimensional problems. 
\item Demonstrate that our PUlasso algorithm allows us to develop improved protein-engineering approaches. In particular we apply our PUlasso algorithm to sequences of BGL~(beta-glucosidase) enzymes to determine which sequences are functional. We demonstrate that sequences selected by our algorithm have a good predictive accuracy and we also provide a scientific experiment which shows that the variables selected lead to BGL proteins that are engineered with improved functionality. 
\end{itemize}

The remainder of the paper is organized as follows: in Section~\ref{SecPU} we provide the background and introduce the PUlasso algorithm, including our two computational speed-ups and provide an algorithmic guarantee that our algorithm converges to a stationary point;  in Section~\ref{SecStat} we provide statistical mean-squared error guarantees which show that our PUlasso algorithm achieves the minimax rate; Section~\ref{SecSim} provides a comparison in terms of classification performance of our PUlasso algorithm to state-of-the-art PU-learning algorithms; finally in Section~\ref{SecExperiment}, we apply our PUlasso algorithm to the BGL data application and provide both a statistical validation and simple scientific validation for our selected variables.  

{\bf Notation: } For scalars $a,b \in \mathbb{R}$, we denote $a \wedge b  =\min\{a,b\}, a\vee b = \max \{a,b\}$. Also, we denote $a\gtrsim b$ if there exists a universal constant $c>0$ such that $a \geq c b$. For $v,w \in \mathbb{R}^p$, we denote $\ell_1$, $\ell_2$, and $\ell_\infty$ norm as $\norm{v}_1 = \sum_{i=1}^n |v_i| $, $\norm{v}_2 = \sqrt{v^Tv}$, and  $\norm{v}_\infty = \sup_j |v_j|$ and use $v \circ w \in \mathbb{R}^p$ to denote Hadamard product (entry-wise product) of $v,w$. For a set $S$, we use $|S|$ to denote the cardinality of $S$. For any subset $S \subseteq \{1,\dots,p\}$, $v_S \in \mathbb{R}^{|S|}$ denotes the sub-vector of the vector $v$ by selecting the components with indices in $S$. Likewise for matrix $A \in \mathbb{R}^{n \times p}$, $A_S \in \mathbb{R}^{n\times|S|}$ denotes a sub-matrix by selecting columns with indices in $S$. For a group $\ell_1/\ell_2$ norm, the norm is characterized by a partition $\mathcal{G}:= (g_1,\dots,g_J)$ of $\{1,\dots,p\}$ and associated weights $(w_j)_1^J$. We let $\mathscr{G}:= (\mathcal{G},(w_j)_1^J)$ and define the $\ell_1/\ell_2$ norm as $\norm{v}_{\mathscr{G},2,1} := \sum_j w_j \norm{v_{g_j}}_2$. We often need a dual norm of $\norm{\cdot}_{\mathscr{G},2,1} $. We use $\bar{\mathscr{G}}$ to denote $\bar{\mathscr{G}}:= (\mathcal{G},(w_j^{-1})_1^J)$ and write $\norm{v}_{\bar{\mathscr{G}},2,\infty} = \max_j w_j^{-1} \norm{v_{g_j}}_2$. Finally we write $\mathbb{B}_q(r,v)$ for an $\ell_q$ ball with radius $r$ centered at $v \in \mathbb{R}^p$, and denote as $\mathbb{B}_q(r)$ if $v=0$.

For a convex function $f:\mathbb{R}^p \rightarrow \mathbb{R}$, we use $\partial f(x) $ to denote the set of sub-gradients at the point $x$ and $ \triangledown f (x)$ to denote an element of  $\partial f(x) $.
Also for a function $f+g$ such that $f$ is differentiable (but not necessarily convex) and $g$ is convex, we define $\partial (f+g)(x) := \{\triangledown f (x) + h \in \mathbb{R}^p ; h \in \partial g(x)\} $ with a slight abuse of notation. Also, we say $f(n) = O(g(n))$, $f(n) = \Omega(g(n))$, and $f(n) = \Theta(g(n))$ if $|f|$ is asymptotically bounded above, bounded below, and bounded above and below by $g$.

For a random variable $x\in\mathbb{R}$, we say $x$ is a sub-Gaussian random variable with sub-Gaussian parameter $\sigma_x >0 $ if $E[\exp(t(x-E[x]))]\leq \exp(t^2 \sigma_x^2/2)$ for all $t \in \mathbb{R}$ and we denote as $x \sim \mbox{subG}(\sigma_x^2)$  with a slight abuse of notation. Similarly, we say $x$ is a sub-exponential random variable with sub-exponential parameter $(\nu,b) $ if $E[\exp(t(x-E[x]))]\leq \exp(t^2 \nu^2/2)$ for all $|t| \leq 1/b$ and we denote as $x \sim \mbox{subExp}(\nu,b)$. A collection of random variables $(x_1,\dots,x_n)$ is referred to as $x_1^n$.

\section{PUlasso algorithm}
\label{SecPU}

In this section, we introduce our PUlasso algorithm. First, we discuss the prior EM algorithm approach developed in~\cite{ward_presence-only_2009} and apply a simple regularization scheme. We then discuss our two computational speed-ups, the quadratic majorization for the M-step and exploiting sparse matrices. We prove that our algorithm has the descending property and converges to a stationary point, and show that our two speed-ups increase speed by several orders of magnitude.

\subsection{Prior approach: EM algorithm with regularization}
\label{SecPrev}

First we use the prior result in~\cite{ward_presence-only_2009} to determine the observed log-likelihood~(in terms of the $z_i$'s) and the full log-likelihood~(in terms of the unobserved $y_i$'s and $z_i$'s). The following lemma, derived in \cite{ward_presence-only_2009}, gives the form of the observed and the full log-likelihood in the case-control sampling scheme.

\begin{lemma}[\cite{ward_presence-only_2009}]\label{lem:2.1}
The observed log-likelihood $\log L(\theta;x_1^n,z_1^n) $ for our presence-only model in terms of $(x_i, z_i,s_i=1)_{i=1}^n$ is:
\begin{align}\label{lem2.1:logLikObs}
\log L(\theta;{x_1^n,z_1^n})  & = \log \left(\prod_i \mathbb{P}_{\theta}(z_i | x_i,s_i=1)\right)\nonumber\\
& = \sum_{i=1}^n \log  \left( \dfrac{\frac{n_l}{\pi n_u} e^{\theta^Tx}}{1+(1+\frac{n_l}{\pi n_u})e^{\theta^Tx}}\right)^{z_i}\left( \dfrac{1+ e^{\theta^Tx}}{1+(1+\frac{n_l}{\pi n_u})e^{\theta^Tx}}\right)^{1-z_i}
\end{align}
The full log-likelihood $\log L_f(\theta;x_1^n,y_1^n,z_1^n) $ in terms of $(x_i, y_i, z_i,s_i=1)_{i=1}^n$ is
\begin{align}\label{lem2.1:logLikFull}
\log L_f(\theta;{ x_1^n,y_1^n,z_1^n}) &= \log  \left(\prod_i \mathbb{P}_{\theta}(y_i,z_i|x_i,s_i=1)\right) \nonumber \\
& \propto \sum_{i=1}^n [y_i(x_i^T \theta +\log\dfrac{n_\ell+\pi n_u}{\pi n_u}) - \log(1+\exp(x_i^T\theta+\log\dfrac{n_\ell+\pi n_u}{\pi n_u}))]
\end{align}
where $ n_\ell, n_u $ are the number of positive and unlabeled observations, $n = n_\ell +n_u$  and $ \pi $ is defined in \eqref{def:pi}.
\end{lemma}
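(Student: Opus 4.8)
The plan is to derive both likelihoods from the generative assumptions by working entirely inside the selected pool (conditioning on $s=1$) and treating that pool as a two-component mixture. The defining features of the case-control scheme are: the $n_\ell$ labeled points are i.i.d.\ draws from the conditional covariate law $f(x\mid y=1)$ (since $z=1$ forces $y=1$ and, because $s\perp(x,y)\mid z$, selection does not distort the covariate distribution within a label class), the $n_u$ unlabeled points are i.i.d.\ draws from the population marginal $\mathbb{P}_X$, and the assumed ratio $\mathbb{P}(z=1\mid s=1)/\mathbb{P}(z=0\mid s=1)=n_\ell/n_u$ fixes the mixing proportions to $\mathbb{P}(z=1\mid s=1)=n_\ell/(n_\ell+n_u)$ and $\mathbb{P}(z=0\mid s=1)=n_u/(n_\ell+n_u)$. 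The single computational fact I will reuse is the likelihood ratio
\[
\frac{f(x\mid y=1)}{d\mathbb{P}_X(x)}=\frac{\mathbb{P}(y=1\mid x;\theta)}{\mathbb{P}(y=1)}=\frac{1}{\pi}\,\frac{e^{\eta_\theta(x)}}{1+e^{\eta_\theta(x)}},
\]
which follows from Bayes' rule together with \eqref{eq:py|x} and the definition \eqref{def:pi} of $\pi$.

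For the observed log-likelihood I would compute the posterior odds of the label by Bayes' rule in the pool, combining the mixing proportions with the likelihood ratio above:
\[
\frac{\mathbb{P}_\theta(z=1\mid x,s=1)}{\mathbb{P}_\theta(z=0\mid x,s=1)}=\frac{n_\ell}{n_u}\,\frac{f(x\mid y=1)}{d\mathbb{P}_X(x)}=\frac{n_\ell}{\pi n_u}\,\frac{e^{\eta_\theta(x)}}{1+e^{\eta_\theta(x)}}.
\]
Converting these odds to a probability and clearing denominators by multiplying numerator and denominator by $1+e^{\eta_\theta(x)}$ produces exactly the two factors appearing in \eqref{lem2.1:logLikObs}; taking a product over $i$ and a logarithm, with the exponents $z_i$ and $1-z_i$ selecting the two cases, yields the stated formula.

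For the full log-likelihood the crucial structural observation is that $z$ depends on $y$ alone and not on $\theta$, so factoring $\mathbb{P}_\theta(y,z\mid x,s=1)=\mathbb{P}(z\mid y,x,s=1)\,\mathbb{P}_\theta(y\mid x,s=1)$ isolates a first factor that is free of $\theta$ (indeed one checks $\mathbb{P}(z=1\mid y=1,x,s=1)=n_\ell/(n_\ell+\pi n_u)$, a constant, while $\mathbb{P}(z=0\mid y=0,\cdot)=1$), which is precisely why the claim is only up to proportionality. It then remains to identify $\mathbb{P}_\theta(y\mid x,s=1)$, which I would obtain by summing the labeled and unlabeled contributions to the two joint pool densities, using that a $y=0$ point can only be unlabeled:
\[
f(x,y{=}1\mid s{=}1)=\Big(\tfrac{n_\ell}{n_\ell+n_u}\tfrac{1}{\pi}+\tfrac{n_u}{n_\ell+n_u}\Big)\tfrac{e^{\eta_\theta(x)}}{1+e^{\eta_\theta(x)}}\,d\mathbb{P}_X(x),\quad f(x,y{=}0\mid s{=}1)=\tfrac{n_u}{n_\ell+n_u}\,\tfrac{1}{1+e^{\eta_\theta(x)}}\,d\mathbb{P}_X(x).
\]
Forming the ratio collapses the common $\tfrac{e^{\eta_\theta(x)}}{1+e^{\eta_\theta(x)}}\,d\mathbb{P}_X(x)$ factors and leaves the logistic form with a fixed offset,
\[
\mathbb{P}_\theta(y=1\mid x,s=1)=\frac{\exp\!\big(\eta_\theta(x)+\log\tfrac{n_\ell+\pi n_u}{\pi n_u}\big)}{1+\exp\!\big(\eta_\theta(x)+\log\tfrac{n_\ell+\pi n_u}{\pi n_u}\big)},
\]
from which the summand $y_i\big(x_i^T\theta+\log\tfrac{n_\ell+\pi n_u}{\pi n_u}\big)-\log\big(1+\exp(x_i^T\theta+\log\tfrac{n_\ell+\pi n_u}{\pi n_u})\big)$ of \eqref{lem2.1:logLikFull} is immediate.

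I expect the main obstacle to be the latent-variable bookkeeping rather than any hard estimate: one must track how the case-control weights $n_\ell/(n_\ell+n_u)$ and $n_u/(n_\ell+n_u)$ enter, correctly account for positives appearing both among the labeled draws and (with logistic probability) among the unlabeled draws, and keep the conditional law of $y$ within the unlabeled stratum consistent. A secondary point I would state explicitly is the justification of the ``$\propto$'': it rests exactly on $\mathbb{P}(z\mid y,x,s=1)$ being free of $\theta$, so that the offset $\log\tfrac{n_\ell+\pi n_u}{\pi n_u}$ carries all the information the sampling proportions inject into the $\theta$-dependent part of the likelihood.
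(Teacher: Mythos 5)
Your derivation is correct and is essentially the standard case-control/Bayes argument: the paper itself does not reprove Lemma~2.1 (it defers to \cite{ward_presence-only_2009}), and your computation of the posterior odds of $z$ in the selected pool, the factorization $\mathbb{P}_\theta(y,z\mid x,s=1)=\mathbb{P}(z\mid y,x,s=1)\,\mathbb{P}_\theta(y\mid x,s=1)$ with the first factor constant ($n_\ell/(n_\ell+\pi n_u)$ or $1$), and the resulting offset logistic form all reproduce the stated formulas exactly. The only cosmetic slip is the remark that the ``common $\tfrac{e^{\eta_\theta(x)}}{1+e^{\eta_\theta(x)}}\,d\mathbb{P}_X(x)$ factors'' cancel in the ratio of the two joint pool densities --- the shared factor is $\tfrac{1}{1+e^{\eta_\theta(x)}}\,d\mathbb{P}_X(x)$, leaving $e^{\eta_\theta(x)}$ times the coefficient ratio --- but the displayed conclusion is nevertheless right.
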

\noindent The proof can be found in \cite{ward_presence-only_2009}. Our goal is to estimate the parameter  $\theta^*:=\argmin_{\theta  \in \mathbb{R}^p} E[-\log L(\theta; x_1^n,z_1^n)]$, which we assume to be unique. In the setting where $p$ is large, we add a regularization term. We are interested in cases when there exists or does not exist a group structure within covariates. To be general we use the group $\ell_1/\ell_2$-penalty for which $\ell_1$ is a special case. Hence our overall optimization problem is:
\begin{equation}\label{eq:objective}
 \underset{\theta}{\text{minimize}}
\qquad  -\dfrac{1}{n}\sum_{i=1}^{n} \log L(\theta;x_i,z_i) +P_\lambda(\theta)
\end{equation}
where $ \log L(\theta; x_i, z_i) $ is the observed log-likelihood. For a penalty term, we use the group sparsity regularizer
\begin{equation}\label{def:pen}
P_\lambda(\theta) := \lambda \norm{\theta}_{\mathscr{G},2,1} = \lambda \sum_{j=1}^{J}w_j\norm{\theta_{g_j}}_2
\end{equation}
with $\mathscr{G}= (\mathcal{G},(w_j)_{j=1}^J)$, such that $\mathcal{G}:=(g_1,...,g_J)$ is a partition of $(1,\dots,p)$ and $w_j>0$. We note that $\norm{\theta}_{\mathscr{G},2,1} = \norm{\theta}_{1}$ if $J = p$, $g_j = \{j\}$ and $w_j =1$, $\forall j$. For notational convenience we denote the overall objective  $\mathscr{F}_n(\theta)$ as
\begin{align}\label{eq:objfunL}
\mathscr{F}_n(\theta) &:= -\dfrac{1}{n}\sum_{i=1}^{n} \log L(\theta;x_i,z_i) +P_\lambda(\theta)= \mathscr{L}_n(\theta)+P_\lambda(\theta)
\end{align}
where we define the loss function $\mathscr{L}_n(\theta)$ as $\mathscr{L}_n(\theta) := -n^{-1}\sum_{i=1}^{n} \log L(\theta;x_i,z_i) $ and $P_\lambda(\theta) = \lambda \norm{\theta}_{\mathscr{G},2,1} = \lambda \sum_{j=1}^{J}w_j\norm{\theta_{g_j}}_2$.

In the original proposal of the group lasso, \cite{yuan_model_2006} recommended to use \eqref{def:pen} for orthonormal group matrices $ X_{g_j} $, i.e. $X_{g_j}^TX_{g_j}/n = I_{|g_j|\times|g_j|}$. If group matrices are not orthonormal however, it is unclear whether we should orthonormalize group matrices prior to application of the group lasso. This question was addressed in \cite{simon_standardization_2012}, and the authors provide a compelling argument that prior orthonormalization has both theoretical and computational advantages. In particular,~\cite{simon_standardization_2012} demonstrated that the following orthonormalization procedure is intimately connected with the uniformly most powerful invariant testing for inclusion of a group. To describe this orthonormalization explicitly, we obtain standardized group matrices $Q_{g_j} \in \mathbb{R}^{n \times |g_j|}$ and scale matrices $R_{g_j} \in \mathbb{R}^{|g_j| \times |g_j|}$ for $j\geq 2$ using the QR-decomposition such that 
\begin{equation}\label{eq:PXQR}
P_0 X_{g_j} = Q_{g_j} R_{g_j} \text{ and }  Q_{g_j}^T Q_{g_j} = n I_{|g_j| \times |g_j|} 
\end{equation}
where $ P_0 = (I_{n \times n} - \frac{\mathbbm{1}_n\mathbbm{1}_n^T}{n})$ is the projection matrix onto the orthogonal space of $ \mathbbm{1}_n $. Letting $ Q:= [\mathbbm{1}_n,Q_{g_2},\dots,Q_{g_J}] = [q_1^T,\dots,q_n^T]$, the original optimization problem~\eqref{eq:objective} can be expressed in terms of $q_i$'s and becomes:
\begin{equation}\label{eq:newobjective}
\argmin_\nu \left\lbrace-\dfrac{1}{n}\sum_{i=1}^{n} \log L(\nu;q_i,z_i) + \lambda\sum_{j=1}^J w_j\norm{\nu_{g_j}}_2\right\rbrace 
\end{equation}
where we use the transformation $\theta$ to $\nu$:
\begin{equation}\label{eq:transformation}
\theta_{g_j} = 
\begin{cases}
\nu_1 -\sum_{j=2}^J \frac{\mathbbm{1}_n^T}{n} X_{g_j} R_{g_j}^{-1}\nu_{g_j} & j=1\\
R_{g_j}^{-1} \nu_{g_j} & j\geq 2.\\
\end{cases}
\end{equation}
We note that this corresponds to the standard centering and scaling of the predictors in the case of standard lasso. For more discussion about group lasso and standardization, see e.g. \cite{huang_selective_2012}.

A standard approach to performing this minimization is to use the EM-algorithm approach developed in~\cite{ward_presence-only_2009}. In particular we treat $y_1^n$ as hidden variables and estimate them in the E-step. Then use estimated $\hat{y}_1^n$ to obtain the full log-likelihood $\log L_f(\theta;x_1^n,\hat{y}_1^n,z_1^n)$ in the M-step.

\scalebox{0.85}{
\begin{algorithm}[H]\label{alg:em}
\SetAlgoLined
Input: an initialization $ \theta^0$ such that $\mathscr{F}_n(\theta^0)\leq \mathscr{F}_n(\theta_{null})$ \\
% \mathbb{R}epeat{convergence}{
	\For{m=0,1,2,\dots,}{
		\begin{itemize}
			\item E-step : estimate $ y_i $ at $ \theta = \theta^m $ by			\begin{equation}\label{alg1:E_step}
			\hat{y_i}(\theta^m)  =  \left( \dfrac{e^{x_i^T\theta^m}}{1+e^{x_i^T\theta^m}}\right)^{1-z_i}
			\end{equation}
			\item M-step : obtain $ \theta^{m+1} $ by
			\begin{equation}\label{alg1:M_step}
			\theta^{m+1} \in  \argmin_\theta \left\lbrace-\dfrac{1}{n}\sum_{i=1}^n \left(\hat{y_i}(\theta^m)\left(x_i^T \theta +b\right)-\log(1+e^{x_i^T\theta+b})\right)+P_\lambda(\theta)\right\rbrace
			\end{equation}
			where  $ b := \log\dfrac{n_\ell+\pi n_u}{\pi n_u}  $
		\end{itemize}
	}
	
% 	}
\caption{Regularized EM algorithm for the optimization problem \eqref{eq:objective}}
\end{algorithm}
}

The E-step follows from $E_{\theta^m}[y_i|z_i,x_i,s_i=1] = \left(\dfrac{e^{x_i^T\theta^m}}{1+e^{x_i^T\theta^m}} \right)^{1-z_i}$  since $ z_i=1  $ implies $ y_i=1  $ and when $ z_i=0 $, observations in the unlabeled data are random draws from the population. An initialization $\theta^0$ can be any $\mathbb{R}^p$ vector such that $\mathscr{F}_n(\theta^0)\leq \mathscr{F}_n(\theta_{null})$ where $\theta_{null}$ is the parameter corresponding to the intercept-only model. 
If we are provided with no additional information, we may use $\theta_{null}$ for the initialization. We use $\theta^0 = \theta_{null}$ as the initialization for the remainder of the paper. For the M-step it was originally proposed to use a logistic regression solver. We can use a regularized logistic regression solver such as the {\bf  glmnet} R  package to solve \eqref{alg1:M_step}. We discuss a computationally more efficient way of solving~\eqref{alg1:M_step} in the subsequent section.

\subsection{PUlasso : A Quadratic Majorization for the M-step}
Now we develop our PUlasso algorithm which is a faster algorithm for solving \eqref{eq:objective} by using quadratic majorization for the M-step. The main computational bottleneck in algorithm \ref{alg:em} is the M-step which requires minimizing a regularized logistic regression loss at each step. This sub-problem does not have a closed-form solution and needs to be solved iteratively, causing inefficiency in the algorithm. However the most important property of  the objective function in the M-step is that it is a surrogate function of the likelihood which ensures the descending property~(see e.g.~\cite{lange_optimization_2000}). Hence we replace a logistic loss function with a computationally faster quadratic surrogate function. In this aspect, our approach is an example of the more general majorization-minimization~(MM) framework~(see e.g.~\cite{lange_optimization_2000,Ortega}). 

On the other hand, our loss function itself belongs to a generalized linear model family, as we will discuss in more detail in the subsequent section. A number of works have developed methods for efficiently solving regularized generalized linear model problems. A standard approach is to make a quadratic approximation of the log-likelihood and use solvers for a regularized least-square problem. Works include using an exact Hessian~(\cite{lee_efficient_2006, friedman_regularization_2010}), an approximate Hessian~(\cite{meier_group_2008}) or a Hessian bound~(\cite{krishnapuram_sparse_2005,simon_standardization_2012,breheny_group_2013}) for the second order term. Solving a second-order approximation problem amounts to taking a Newton step, thus convergence is not guaranteed without a step-size optimization~(\cite{lee_efficient_2006,meier_group_2008}), unless a global bound of the Hessian matrix is used. Our work can be viewed as in the line of these works where a quadratic approximation of the loss function is made and then an upper bound of the Hessian matrix is used to preserve a majorization property.

A coordinate descent~(CD) algorithm~(\cite{wu_coordinate_2008,friedman_regularization_2010}) or a block coordinate descent~(BCD) algorithm~(\cite{yuan_model_2006,puig_multidimensional_2011,simon_standardization_2012, breheny_group_2013}) has been a very efficient and standard way to solve a quadratic problem with $\ell_1$ penalty or $\ell_1/\ell_2$ penalty and we also take this approach. When a feature matrix $X \in \mathbb{R}^{n\times p}$ is sparse, we can set up the algorithm to exploit such sparsity through a sparse linear algebra calculation. We discuss this implementation strategy in the Section~\ref{subsec:sparseBCD}.

Now we discuss the PUlasso algorithm and the construction of quadratic surrogate functions in more details. Using the MM framework we construct the set of majorization functions $ -\overline{Q}(\theta; \theta^m) $ with the following two properties:
\begin{equation}\label{eq:surrogateProperties}
\overline{Q}(\theta^m;\theta^m) = Q(\theta^m;\theta^m), \quad \overline{Q}(\theta;\theta^m) \leq Q(\theta;\theta^m), \forall \theta
\end{equation}
where our goal is to minimize $ -Q $ where $ Q(\theta ;\theta^m) := n^{-1}E_{\theta^m}[\log L_f (\theta)|{ z_1^n,x_1^n,s_1^n}=1] $.

Using the Taylor expansion of $ Q(\theta;\theta^m) $ at $\theta = \theta^m$, we obtain $Q(\theta;\theta^m)$
\begin{align*}
&=Q(\theta^m;\theta^m)+\dfrac{1}{n}[X^T(\hat{y}(\theta^m)-\mu^*(\theta^m))]^T\Delta_m- \frac{1}{2n}\int_0^1\Delta_m^TX^TW(\theta +s\Delta_m)X\Delta_m ds\\
&\geq Q(\theta^m;\theta^m)+\dfrac{1}{n}(\hat{y}(\theta^m)-\mu^*(\theta^m))^TX\Delta_m - \frac{1}{8n}\Delta_m^TX^TX\Delta_m
\end{align*}
where we define $\Delta_m := \theta - \theta^m$, $\mu^*(\theta^m)_{i} :=\dfrac{e^{x_i^T\theta^m + b}}{1+e^{x_i^T\theta^m + b}}$, $b := \log\dfrac{n_\ell+\pi n_u}{\pi n_u}$ and $W \in \mathbb{R}^{n \times n}$ is a diagonal matrix with $[W(\theta)]_{ii}: = \mu^*(\theta)_i(1-\mu^*(\theta)_i)$. The inequality follows from $W(\theta) \prec \frac{1}{4} I_{n \times n},\; \forall\; \theta$. Thus setting $ \overline{Q} $ as follows:
\begin{equation*}
\overline{Q}(\theta;\theta^m) := Q(\theta^m;\theta^m)+\dfrac{1}{n}(\hat{y}(\theta^m)-\mu^*(\theta^m))^T(X\theta - X\theta^m) - \frac{1}{8n}(\theta-\theta^m)^TX^TX(\theta-\theta^m),
\end{equation*}
$\overline{Q}$ satisfies both conditions in~\eqref{eq:surrogateProperties}. Also with some algebra, it follows that 
\begin{equation*}
\overline{Q}(\theta;\theta^m) = -\frac{1}{8n}(4(\hat{y}(\theta^m)-\mu^*(\theta^m))+X\theta^m - X\theta)^T(4(\hat{y}(\theta^m)-\mu^*(\theta^m))+X\theta^m - X\theta)+c(\theta^m)
\end{equation*}
for some $ c(\theta^m) $ which does not depend on $ \theta $. Hence $ -\overline{Q}$ acts as a quadratic surrogate function of $-Q$ which replaces our M-step for the original EM algorithm. Therefore our PUlasso algorithm can be represented as follows.

\scalebox{0.85}{
\begin{algorithm}[H]\label{alg:mm}
\SetAlgoLined
Input: an initialization $ \theta^0$ such that $\mathscr{F}_n(\theta^0)\leq \mathscr{F}_n(\theta_{null})$ \\
% \mathbb{R}epeat{convergence}{
\For{m=0,1,2,\dots,}{
		\begin{itemize}
			\item E-step : estimate $ y_i $ at $ \theta = \theta^m $ by
			\begin{equation}\label{alg2:E_step}
			\hat{y_i}(\theta^m)  =  \left( \dfrac{e^{x_i^T\theta^m}}{1+e^{x_i^T\theta^m}}\right)^{1-z_i}
			\end{equation}
			\item QM-EM step : obtain $ \theta^{m+1} $ by
			\begin{enumerate}
				\item create a working response vector $ u(\theta^m) $ at $ \theta= \theta^m $
				\begin{equation}\label{alg2:M_step}
				u(\theta^m) := 4(\hat{y}(\theta^m)-\mu^*(\theta^m))+X\theta^m
				\end{equation}
				\item solve a quadratic loss problem with a penalty
				\begin{equation}\label{alg2:QLprblm}
				\theta^{m+1} \in  \argmin_\theta\left\lbrace\dfrac{1}{2n}(u (\theta^m)-X\theta)^T(u (\theta^m)-X\theta)+4 P_\lambda(\theta)\right\rbrace
				\end{equation}
			\end{enumerate}
		\end{itemize}
	}
% }
\caption{PUlasso : QM-EM algorithm for the optimization problem \eqref{eq:objective}}
\end{algorithm}
}

Now we state the following proposition to show that both the regularized EM and PUlasso algorithms have the desirable descending property and converge to a stationary point. For convenience we define the feasible region $\widetilde{\Theta_0}$, which contains all $\theta$ whose objective function value is better than that of the intercept-only model, defined as:
\begin{equation}\label{def:Tilde_Theta0}
	\widetilde{\Theta_0} := \{ \theta \in \mathbb{R}^p ; \mathscr{F}_n(\theta) \leq \mathscr{F}_n(\theta_{null})\}
\end{equation}
where $\theta_{null} = [\log\frac{\pi}{1-\pi},0,...,0]^T$, an estimate corresponding to the intercept-only model. We let $ \mathscr{S} $ be the set of stationary points satisfying the first order optimality condition, i.e., 
\begin{equation}\label{eq:stationaryset}
\mathscr{S} := \{\theta; \exists \triangledown \mathscr{F}_n(\theta) \in \partial \mathscr{F}_n(\theta) \mbox{ such that } \triangledown \mathscr{F}_n(\theta)^T(\theta'-\theta) \geq 0, \forall\; \theta' \in \widetilde{\Theta_0}\}.
\end{equation}
\noindent One of the important conditions is to ensure that all iterates of our algorithm lie in $\widetilde{\Theta_0}$ which is trivially satisfied if $\theta^0 = \theta_{null}$.

\begin{proposition}\label{prop:convergence}
%We have, 
The sequence of estimates $(\theta^m) $ obtained by Algorithms~\ref{alg:em} or~\ref{alg:mm} satisfies 
\begin{itemize}\setlength\itemsep{0em}
\item[(i)] $ \mathscr{F}_n(\theta^m) \geq \mathscr{F}_n(\theta^{m+1})  $, and $ \mathscr{F}_n(\theta^m) > \mathscr{F}_n(\theta^{m+1})  $  if $ \theta^m \not\in \mathscr{S} $. 
\item[(ii)] All limit points of $(\theta^m)_1^{\infty}$ are elements of the set $\mathscr{S}$, and $ \mathscr{F}_n(\theta^m) $ converges monotonically to $ \mathscr{F}_n(\widetilde{\theta}) $ for some $ \widetilde{\theta} \in \mathscr{S}$.
\item[(iii)] The sequence $(\theta^m)$ has at least one limit point, which must be a stationary point of $ \mathscr{F}_n(\theta) $ by (ii).
\end{itemize}
\end{proposition}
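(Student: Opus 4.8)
The plan is to treat both algorithms uniformly within the majorization--minimization framework, using the standard EM decomposition of the observed loss together with the extra quadratic majorization of PUlasso. I would write $\mathscr{L}_n(\theta) = -Q(\theta;\theta^m) + H(\theta;\theta^m)$, where $Q(\theta;\theta^m) = n^{-1}E_{\theta^m}[\log L_f(\theta)\mid z_1^n,x_1^n,s_1^n=1]$ is the expected complete-data log-likelihood and $H$ is the conditional-entropy remainder, which by Gibbs' inequality satisfies $H(\theta;\theta^m)\le H(\theta^m;\theta^m)$ for all $\theta$ with a vanishing gradient $\nabla_\theta H(\theta^m;\theta^m)=0$. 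For Algorithm~\ref{alg:em} the M-step minimizes $-Q(\cdot;\theta^m)+P_\lambda$, while for Algorithm~\ref{alg:mm} it minimizes the quadratic surrogate $-\overline{Q}(\cdot;\theta^m)+P_\lambda$, where $\overline{Q}$ obeys \eqref{eq:surrogateProperties}. I would let $M$ denote the (possibly set-valued) M-step map $\theta^m\mapsto\theta^{m+1}$ restricted to $\widetilde{\Theta_0}$ and carry only the PUlasso case, since the EM case is recovered by taking $\overline{Q}=Q$.

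For part (i) I would chain the available inequalities. Since $\theta^{m+1}$ minimizes $-\overline{Q}(\cdot;\theta^m)+P_\lambda$, we have $-\overline{Q}(\theta^{m+1};\theta^m)+P_\lambda(\theta^{m+1}) \le -\overline{Q}(\theta^m;\theta^m)+P_\lambda(\theta^m) = -Q(\theta^m;\theta^m)+P_\lambda(\theta^m)$, using the tangency $\overline{Q}(\theta^m;\theta^m)=Q(\theta^m;\theta^m)$; combining with $-Q\le-\overline{Q}$ and $H(\theta^{m+1};\theta^m)\le H(\theta^m;\theta^m)$ gives $\mathscr{F}_n(\theta^{m+1}) = -Q(\theta^{m+1};\theta^m)+H(\theta^{m+1};\theta^m)+P_\lambda(\theta^{m+1}) \le \mathscr{F}_n(\theta^m)$. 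The strict-descent claim rests on a gradient-matching identity: differentiating the explicit form of $\overline{Q}$ displayed above yields $\nabla_\theta\overline{Q}(\theta^m;\theta^m)=\nabla_\theta Q(\theta^m;\theta^m)=n^{-1}X^T(\hat y(\theta^m)-\mu^*(\theta^m))=-\nabla\mathscr{L}_n(\theta^m)$. Hence the subdifferential of the convex surrogate objective at $\theta^m$ coincides with $\partial\mathscr{F}_n(\theta^m)$, so by \eqref{eq:stationaryset} we have $\theta^m\in\mathscr{S}$ if and only if $\theta^m$ already solves the convex M-step over $\widetilde{\Theta_0}$. If $\theta^m\notin\mathscr{S}$ the minimization is strict and the chain above becomes strict, giving $\mathscr{F}_n(\theta^{m+1})<\mathscr{F}_n(\theta^m)$; all iterates remain in $\widetilde{\Theta_0}$ by induction and descent, so the M-step stays well posed.

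For part (iii) I would show $\widetilde{\Theta_0}$ in \eqref{def:Tilde_Theta0} is compact. It is closed by continuity of $\mathscr{F}_n$ and bounded by coercivity: every factor of the observed likelihood lies in $(0,1)$, so $\mathscr{L}_n\ge 0$, while $P_\lambda(\theta)=\lambda\norm{\theta}_{\mathscr{G},2,1}\to\infty$ along any penalized direction and the loss itself diverges along an unpenalized intercept direction (since both $n_\ell,n_u>0$ supply labeled and unlabeled terms). Thus $\mathscr{F}_n(\theta)\to\infty$ as $\norm{\theta}_2\to\infty$, the sublevel set is bounded, and the bounded sequence $(\theta^m)\subset\widetilde{\Theta_0}$ has a limit point by Bolzano--Weierstrass. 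For part (ii), the values $\mathscr{F}_n(\theta^m)$ are nonincreasing by (i) and bounded below by $0$, hence converge to some $\mathscr{F}^*$, and the per-step decrease $\mathscr{F}_n(\theta^m)-\mathscr{F}_n(\theta^{m+1})\to0$.

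It remains to show every limit point is stationary, and this is where I expect the real difficulty. I would argue $\widetilde{\theta}=\lim_k\theta^{m_k}\in\mathscr{S}$ by a Zangwill-type closed-map argument: the surrogate $-\overline{Q}(\cdot;\theta)+P_\lambda$ depends continuously on $\theta$ through $\hat y(\theta)$ and $\mu^*(\theta)$, so the decrease functional $\Delta(\theta):=\mathscr{F}_n(\theta)-\inf_{\theta'\in M(\theta)}\mathscr{F}_n(\theta')$ is lower semicontinuous, nonnegative, and by the analysis of (i) strictly positive exactly off $\mathscr{S}$; passing to the limit along $(\theta^{m_k})$ forces $\Delta(\widetilde{\theta})=0$, and continuity of $\mathscr{F}_n$ gives $\mathscr{F}_n(\widetilde{\theta})=\mathscr{F}^*$. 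The hard part is that in the high-dimensional regime $p>n$ the surrogate Hessian $\tfrac{1}{4n}X^TX$ is singular, so the M-step argmin need not be unique and $M$ is genuinely set-valued; one therefore cannot invoke continuity of a single-valued iteration map and must instead verify closedness of $M$ and lower semicontinuity of $\Delta$ directly (for instance via Berge's maximum theorem on the convex M-step), so that the global convergence theorem pins every limit point into $\mathscr{S}$.
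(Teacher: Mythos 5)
Your proposal is correct and follows essentially the same route as the paper: the EM decomposition $\mathscr{F}_n(\theta)=-Q(\theta;\theta^m)+P_\lambda(\theta)+H(\theta;\theta^m)$ with Jensen's inequality and the surrogate tangency/gradient-matching identity for part (i), and compactness of the sublevel set $\widetilde{\Theta_0}$ combined with Zangwill's Global Convergence Theorem (whose closed-map condition is handled, as you anticipate, for the point-to-set M-step via joint continuity of the surrogate, following Wu's classical EM argument) for parts (ii) and (iii).
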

Proposition~\ref{prop:convergence} shows that we obtain a stationary point of the objective~\eqref{eq:objfunL} as an output of both the regularized EM algorithm and our PUlasso algorithm. The proof uses the standard arguments based on Jensen's inequality, convergence of EM algorithm and MM algorithms and is deferred to the supplement S1.1.

\subsubsection{Block Coordinate Descent Algorithm for M-step and Sparse Calculation}\label{subsec:sparseBCD}

In this section, we discuss the specifics of finding a minimizer for the M-step~\eqref{alg2:QLprblm} for each iteration of our PUlasso algorithm. After pre-processing the design matrix as described in~\eqref{eq:newobjective}, \eqref{eq:transformation}, we solve the following optimization problem using a standard block-wise coordinate descent algorithm. 
\begin{equation}\label{eq:newQLprblm}
\argmin_\nu \left\lbrace \dfrac{1}{2n}\norm{u-Q\nu}_2^2+ 4\lambda\sum_{j=1}^J w_j\norm{\nu_{g_j}}_2\right\rbrace 
\end{equation}
\scalebox{0.85}{
\begin{algorithm}[H]\label{alg3:BCD}\abovedisplayskip=-20pt
Given initial parameter $ \nu = [\nu_1,\nu_{g_2}^T\dots,\nu_{g_J}^T]^T $, a residual vector $r = u - \sum_{j=1}^{J}Q_{g_j} \nu_{g_j}$ \\
\For{j=1}{update $\nu_1$ and $r$ using \eqref{alg3:BCD-1}-\eqref{alg3:BCD-3} }
\Repeat{convergence}{
	\For{j=2,\dots,J}{
		\begin{align}
		z_j &= n^{-1}Q_{g_j}^Tr + \nu_{g_j}\label{alg3:BCD-1}\\
		\nu_{g_j} ' &\leftarrow S(z_j, 4\lambda w_j)\label{alg3:BCD-2}\\
		r' & \leftarrow r + Q_{g_j}(\nu_{g_j} - \nu_{g_j}')\label{alg3:BCD-3}\\
		r &\leftarrow r' , \nu_{g_j} \leftarrow \nu_{g_j}'\nonumber
		\end{align}}}
\caption{Fitting \eqref{eq:newQLprblm} using Block Coordinate Descent}
\end{algorithm}}

$S(.,\lambda)$ is the soft thresholding operator defined as follows:
\[S(z,\lambda) := 
\begin{cases}
(\norm{z}_2 -\lambda)\dfrac{z}{\norm{z}_2} & \text{ if } \norm{z}_2 > \lambda \\
0&\text{otherwise.} \\
\end{cases}
\]
Note that we do not need to keep updating the intercept $\nu_1$ since $ Q_{g_j} , j\geq 2$ are orthogonal to $ Q_{g_1} \equiv \mathbbm{1}_n $. For more details, see e.g. \cite{breheny_group_2013}.

For our biochemistry example and many other examples, $X$ is a sparse matrix since each entry is an indicator of whether an amino acid is in a position. In Algorithm~\ref{alg3:BCD},  we do not exploit this sparsity since $Q$ will not be sparse even when $X$ is sparse. If we want to exploit sparse $X$ we use the following algorithm.\\

\scalebox{0.85}{
\begin{algorithm}[H]\label{alg4:s-BCD}\abovedisplayskip=-20pt
	Given initial parameter $ \nu = [\nu_1,\nu_{g_2}^T\dots,\nu_{g_J}^T]^T $, $r = u - P_0(\sum_{j=1}^{J}X_{g_j}R_{g_j}^{-1} \nu_{g_j})$ \\
\For{j=1}{update $\nu_1$ and $r$ using \eqref{alg3:BCD-1}-\eqref{alg3:BCD-3}.}

	\Repeat{convergence}{
		\For{j=2,\dots,J}{
			\begin{align}
			z_j &= n^{-1} R_{g_j}^{-1}X_{g_j}^Tr -  R_{g_j}^{-1}\left(X_{g_j}^T\mathbbm{1}_n/n\right)\left(\mathbbm{1}_n^Tr/n\right)+ \nu_{g_j}\label{alg4:sBCD-1}\\
			\nu_{g_j} ' &\leftarrow S(z_j, 4\lambda w_j)\label{alg4:sBCD-2}\\
			r' & \leftarrow r+ X_{g_j}R_{g_j}^{-1}(\nu_{g_j} - \nu_{g_j}')\label{alg4:sBCD-3}\\
			a_j  &\leftarrow  \mathbbm{1}_n^TX_{g_j}R_{g_j}^{-1}(\nu_{g_j} - \nu_{g_j}')/n\label{alg4:sBCD-4}\\
			r &\leftarrow r' , \nu_{g_j} \leftarrow \nu_{g_j}'\nonumber
			\end{align}}
		\begin{equation}\label{alg4:sBCD-5}
		r \leftarrow r - (\sum_{j=2}^{J}a_j )\mathbbm{1}_n
		\end{equation}
		
		% $\widetilde{r} \leftarrow r$
	}
	\caption{Fitting \eqref{eq:newQLprblm} and exploiting  sparse X}
\end{algorithm}
}

To explain the changes to this algorithm, we modify \eqref{alg3:BCD-1} and \eqref{alg3:BCD-3} so that we directly use $X$ rather than $Q$ to exploit the sparsity of $X$. Using \eqref{eq:PXQR}, we first substitute $Q_{g_j}$ with $P_0X_{g_j} R_{g_j}^{-1}$ to obtain
\begin{align}
z_j &= n^{-1} R_{g_j}^{-1}X_{g_j}^TP_0r+ \nu_{g_j}\label{alg4:deriv1}\\
%	\nu_{g_j} ' &\leftarrow S(z_j, 4 \lambda \sqrt{|g_j|})\label{BCD-2'}\\
r' & \leftarrow r + P_0X_{g_j} R_{g_j}^{-1}(\nu_{g_j} - \nu_{g_j}').\label{alg4:deriv2}
\end{align}
However carrying out \eqref{alg4:deriv1}-\eqref{alg4:deriv2} instead of \eqref{alg3:BCD-1}-\eqref{alg3:BCD-3} incurs a greater computational cost. Calculating $Q_{g_j}^Tr$ requires $n|g_j|$ operations. On the contrary, the minimal number of operations required to do a matrix multiplication of $R_{g_j}^{-1}X_{g_j}^TP_0r$ is $n^2+n|g_j|+|g_j|^2$, when it is parenthesized as $R_{g_j}^{-1}(X_{g_j}^T(P_0r))$. In many cases $|g_j|$ is small~(for standard lasso, $|g_j|=1, \forall j$ and for our biochemistry example, $|g_j|$ is at most 20), but the additional increase in $n$ can be very costly (especially in our example where $n$ is over 4 million).

For a more efficient calculation, we first exploit the structure of $ P_0 = I_{n \times n} - \frac{\mathbbm{1}_n\mathbbm{1}_n^T}{n}$ when multiplying $P_0 $ with a vector, which reduces the cost from $n^2$ operations to $2n$ operations. Also, we carry out calculations using $X_{g_j}$ instead of $P_0X_{g_j}$ when calculating residuals and do the corrections all at once. 

Before going into detail about \eqref{alg4:sBCD-1}-\eqref{alg4:sBCD-4}, we first discuss the computational complexity. Comparing \eqref{alg4:sBCD-1} with \eqref{alg3:BCD-1}, the first term only requires an additional $|g_j|^2$ operations. The second term $(X_{g_j}^T\mathbbm{1}_n)/n$ can be stored during the initial QR decomposition; thus the only potentially expensive operation is calculating an average of $r$ which requires $n$ operations.
Comparing \eqref{alg4:sBCD-3} with \eqref{alg3:BCD-3}, only $|g_j|^2$ additional operations are needed when we parenthesize as $X_{g_j}(R_{g_j}^{-1}(\nu_{g_j} - \nu_{g_j}'))$. Note that if we had kept $P_0$, there would have been an additional $2n$ operations even though we had used the structure of $P_0$.
In the calculation of \eqref{alg4:sBCD-5}, we note that $n$ operations are involved in subtracting $\sum_{j=2}^{J}a_j$ from $r$ because $ a_j $ are scalars. In summary, we essentially reduce additional computational cost from $O(n^2)$ to $nJ$ per cycle by carrying out \eqref{alg4:sBCD-1}-\eqref{alg4:sBCD-4} instead of \eqref{alg4:deriv1}-\eqref{alg4:deriv2}.

Now we derive/explain the formulas in Algorithm~\ref{alg4:s-BCD}. To make quantities more explicit, we use $r_j$ and $r_j'$ to denote a residual vector before/after update at $j$ using Algorithm \ref{alg3:BCD} and $\tilde{r}_j$ and $\tilde{r}_j'$ using Algorithm \ref{alg4:s-BCD}. By definition, $r_{j+1} = r'_{j}$ and $\tilde{r}_{j+1} = \tilde{r}'_{j}$. Also we note that in the beginning of the cycle $r_2 =\tilde{r}_2$. Equation \eqref{alg4:sBCD-1} can be obtained from \eqref{alg4:deriv1} by replacing $P_0$ with $I_{n \times n} - \frac{\mathbbm{1}_n\mathbbm{1}_n^T}{n}$. Now we show that modified residuals still correctly update coefficients. Starting from $j=2$, a calculated residual $\tilde{r}_{j}'$ is a constant vector off from a correct residual $r_j'$, as we see below:
\begin{align}
r_{j}^{'} &= r_j+P_0X_{g_j}R_{g_j}^{-1}(\nu_{g_j}- \nu_{g_j}')\label{alg4:deriv3}\\
&=r_j+X_{g_j}R_{g_j}^{-1}(\nu_{g_j} - \nu_{g_j}')- \mathbbm{1}_n\frac{\mathbbm{1}_n^T}{n}X_{g_j}R_{g_j}^{-1}(\nu_{g_j} - \nu_{g_j}')\label{alg4:deriv4}\\
&=\tilde{r}_j'- \mathbbm{1}_n a_j\label{alg4:deriv5}
\end{align}
where we recall that $a_j = \frac{\mathbbm{1}_n^T}{n}X_{g_j}R_{g_j}^{-1}(\nu_{g_j} - \nu_{g_j}')$. We note $P_0 r_j' = P_0 \tilde{r}_j'$ because $P_0 \mathbbm{1}_n = 0$. Then the next $z_{j+1}$, thus new $\nu_{g_{j+1}}$, are still correctly calculated since
\begin{align}
z_{j+1} =  n^{-1} R_{g_{j+1}}^{-1}X_{g_{j+1}}^T P_0r_{j+1}+ \nu_{g_{j+1}} =  n^{-1} R_{g_{j+1}}^{-1}X_{g_{j+1}}^T P_0\tilde{r}_{j+1}+ \nu_{g_{j+1}}. 
\end{align}

The next residual $\tilde{r}_{j+1}'$ is again off by a constant from the correct residual $r_{j+1}'$. To see this, $r_{j+1}' = r_{j+1} + P_0X_{g_{j+1}}R_{g_{j+1}}^{-1}(\nu_{g_{j+1}}- \nu_{g_{j+1}}') = \tilde{r}_{j+1} + P_0X_{g_{j+1}}R_{g_{j+1}}^{-1}(\nu_{g_{j+1}}- \nu_{g_{j+1}}')-a_j \mathbbm{1}_n$ by \eqref{alg4:deriv2}. Going through \eqref{alg4:deriv3}-\eqref{alg4:deriv5} with $j$ being replaced by $j+1$, we obtain 
$$ r_{j+1}' = \tilde{r}_{j+1}' -(a_j+a_{j+1}) \mathbbm{1}_n.$$
Inductively, we have correct $z_j$, thus $\nu_{g_j}$ for all $j\geq2$. At the end of the cycle, we correct the residual vector all at once by letting $r \leftarrow r - (\sum_{j=2}^{J}a_j )\mathbbm{1}_n$. 

\subsection{R Package details}

We provide a publicly available R implementation of our algorithm in the {\bf PUlasso} package. For a fast and efficient implementation, all underlying computation is implemented in C++. The package uses warm start and strong rule~(\cite{friedman_pathwise_2007,tibshirani_strong_2012}), and a cross-validation function is provided as well for the selection of the regularization parameter $\lambda$. Our package supports a parallel computation through the R package {\bf parallel}.

\subsection{Run-time improvement}

Now we illustrate the run-time improvements for our two speed-ups. Note that we only include $p$ up to $100$ so that we can compare to the original regularized EM algorithm. For our biochemistry application $p = O(10^4)$ and $n = O(10^6)$ which means the regularized EM algorithm is too slow to run efficiently. Hence we use smaller values of $n$ and $p$ in our run-time comparison. It is clear from our results that the quadratic majorization step is several orders of magnitude faster than the original EM algorithm, and exploiting the sparsity of $X$ provides a further $30 \%$ speed-up.

\begin{table}[htbp]
\centering
\scalebox{0.9}{
\begin{tabular}{ccccc}
	\hline
	&(n,p) & PUlasso & EM & time reduction(\%) \\ 
	\hline
	Dense matrix & n=1000, p=10 & 0.94 & 443.72 & 99.79 \\ 
	&n=5000, p=50 & 2.52 & 1844.98 & 99.86 \\ 
	&n=10000, p=100 & 9.45 & 5066.86 & 99.81 \\ 
	Sparse matrix &n=1000, p=10 & 0.40 & 196.86 & 99.80 \\ 
	&n=5000, p=50 & 2.01 & 614.65 & 99.67 \\ 
	&n=10000, p=100 & 4.29 & 1201.09 & 99.64 \\ 
	\hline
\end{tabular}}
\caption{Timings (in seconds). Sparsity level in $ X $ = 0.95,  $n_\ell/n_u = 0.5$. Total time for $100 \, \lambda$ values, averaged over 3 runs.}
\end{table}
\begin{table}[htbp]
\centering
\scalebox{0.9}{
\begin{tabular}{cccc}
	\hline
	(n,p) & sparse calculation & dense calculation & time reduction(\%) \\ 
	\hline
	n=10000, p=100 & 12.91 & 19.24 & 32.89 \\ 
	n=30000, p=100 & 25.64 & 38.73 & 33.79 \\ 
	n=50000, p=100 & 39.47 & 57.18 & 30.97 \\ 
	\hline
\end{tabular}}
	\caption{Timings (in seconds) using sparse and dense calculation for fitting the same simulated data. Sparsity level in X = 0.95, $n_\ell/n_u = 0.5$. Total time for $ 100 \, \lambda $ values, averaged over 3 runs.}
\end{table}	

%%% statistical guarantee 
\section{Statistical Guarantee}
\label{SecStat}

We now turn our attention to statistical guarantees for our PUlasso algorithm under the statistical model~\eqref{eq:py|x}. In particular we provide error bounds for any stationary point of the non-convex optimization problem \eqref{eq:objective}. Proposition~\ref{prop:convergence} guarantees that we obtain a stationary point from our PUlasso algorithm. 

 We first note that the observed likelihood~\eqref{lem2.1:logLikObs} is a generalized linear model (GLM) with a non-canonical link function. To see this, we rewrite the observed likelihood \eqref{lem2.1:logLikObs} as
 \begin{equation}\label{eq:lik_in_glm}
 	L(\theta;{x_1^n,z_1^n}) =\prod_{i=1}^{n}  \exp\left(z_i\eta_i -A(\eta_i)\right)
 \end{equation}
 after some algebraic manipulations, where we define $\eta_i := \log(n_\ell/\pi n_u)+x_i^T\theta-\log(1+e^{x_i^T\theta})$ and $A(\eta_i) := \log (1+e^{\eta_i})$. Also, we let $\mu(\eta_i) := A'(\eta_i)$, which is the conditional mean of $z_i$ given $x_i$, by the property of exponential families. For the convenience of the reader, we include the derivation from \eqref{lem2.1:logLikObs} to \eqref{eq:lik_in_glm} in the supplementary material S2.1. The mean of $z_i$ is related with $\theta^T x_i$ via the link function $g$ through $g(\mu(\eta_i)) = \theta^T x_i $, where $g$ satisfies $(g\circ\mu)^{-1}(\theta^T x_i) = \log(n_\ell/\pi n_u)+x_i^T\theta-\log(1+e^{x_i^T\theta})$. Because $(g\circ\mu)^{-1}$ is not the identity function, the likelihood is not convex anymore. For a more detailed discussion about the GLM with non-canonical link, see e.g. \cite{mccullagh_generalized_1989, fahrmeir_consistency_1985}.

 A number of works have been devoted to sparse estimation for generalized linear models. A large number of previous works have focused on generalized linear models with convex loss functions~(negative log-likelihood with a canonical link) plus $\ell_1$ or $\ell_1/\ell_2$ penalties. Results with the $\ell_1$ penalty include a risk consistency result (\cite{van_de_geer_high-dimensional_2008}) and estimation consistency in $\ell_2$ or $\ell_1$ norms~(\cite{kakade_learning_2010}). For a group-structured penalty, a probabilistic bound for the prediction error was given in \cite{meier_group_2008}. An $\ell_2$ estimation error bound in the case of the group lasso was given in \cite{blazere_oracle_2014}.

\cite{negahban_unified_2012} re-derived an $\ell_2$ error bound of an $\ell_1$-penalized GLM estimator under the unified framework for M-estimators with a convex loss function. This result about the regularized GLM was generalized in \cite{loh_regularized_2013} where penalty functions are allowed to be non-convex, while the same convex loss function  was used. Since the overall objective function is non-convex, authors discuss error bounds obtained for any stationary point, not a global minimum. In this aspect, our work closely follows this idea. However, our setting differs from \cite{loh_regularized_2013} in two aspects: first, the loss function in our setting is non-convex, in contrast with a convex loss function~(a negative log-likelihood with a canonical link) with non-convex regularizer in \cite{loh_regularized_2013}. Also, an additive penalty function was used in the work of \cite{loh_regularized_2013}, but we consider a group-structured penalty.

After the initial draft of this paper was written, we became aware of two recent papers (\cite{Elsener2018-tm,Mei2018-ec}) which studied non-convex M-estimation problems in various settings including binary linear classification, where the goal is to learn $\theta^*$ such that $E[z_i|x_i] = \sigma (x_i^T\theta^*)$ for a known $\sigma(\cdot)$. The proposed estimators are stationary points of the optimization problem: $\argmin_\theta n^{-1}\sum_{i=1}^n (z_i - \sigma(x_i^T\theta))^2 +\lambda \|\theta\|_1$ in both papers.
% A squared loss is considered in both papers, and the proposed estimator is defined to be a stationary point of the optimization problem: $\argmin_\theta n^{-1}\sum_{i=1}^n (z_i - \sigma(x_i^T\theta))^2 +\lambda \|\theta\|_1$. \cite{Elsener2018-tm} studied the oracle inequality for stationary points and derived an $\ell_1$ and $\ell_2$ error bound of a stationary point. In \cite{Mei2018-ec}, a ``landscape" of the empirical risk was investigated by studying the shape of the population risk and by establishing uniform convergence results, and non-asymptotic error bounds for stationary points are derived. 
%
As the focus of our paper is to learn a model with a structural contamination in responses, 
our choice of mean and loss functions differ from both papers. In particular, our choice of mean function is different from the sigmoid function, which was the representative example of $\sigma(\cdot)$ in both papers, and we use the negative log-likelihood loss in contrast to the squared loss.
% The non-convexity in our setting thus comes from the choice of mean function, rather than the choice of the loss function as in other two papers. Also, w
We establish error bounds by proving a modified restricted strong convexity condition, which will be discussed shortly, while error bounds of the same rates were established in~\cite{Elsener2018-tm} through a sharp oracle inequality, and a uniform convergence result over population risk in~\cite{Mei2018-ec}.

 Due to the non-convexity in the observed log-likelihood, we limit the feasible region $\Theta_0$ to  
\begin{equation}\label{def:Theta0}
	\Theta_0 := \{\theta\in \mathbb{R}^p; \norm{\theta}_2 \leq r_0, \norm{\theta}_{\mathscr{G},2,1} \leq R_n\}
\end{equation}
for theoretical convenience. Here $r_0,R_n>0$ must be chosen appropriately and we discuss these choices later. Similar restriction is also assumed in~\cite{loh_regularized_2013}.

%%% 3.1 Assumptions
\subsection{Assumptions}
We impose the following assumptions. First, we define a sub-Gaussian tail condition for a random vector $x \in \mathbb{R}^p$; we say $x$ has a sub-Gaussian tail with parameter $\sigma_x^2$, if for any fixed $v \in \mathbb{R}^p$, there exists $\sigma_x>0$ such that $E[\exp(t (x-E[x])^Tv)] \leq \exp(t^2\norm{v}_2^2 \sigma_x^2/2)$ for any $t \in \mathbb{R}$. We recall that $\theta^*$ is the true parameter vector, which minimizes the population loss.

\begin{assumption}\label{a1}
The rows $x_i \in \mathbb{R}^p$, $i = 1, 2, \dots , n$ of the design matrix are i.i.d. samples from a mean-zero distribution with sub-Gaussian tails with parameter $\sigma_x^2$. Moreover, $\Sigma_x := E[x_ix_i^T]$ is a positive definite and with minimum eigenvalue $ \lambda_{min}(\Sigma_x)\geq K_0$ where $K_0$ is a constant bounded away from $0$. We further assume that $(x_{ij})_{j \in g_j}$ are independent for all $j \in g_j$ and $g_j \in \mathcal{G}$. 
\end{assumption}

 Similar assumptions appear in for e.g. \cite{negahban_unified_2012}.
% is a sub-Gaussian random variable with parameter $\norm{v}_2^2 \sigma_x^2$
This restricted minimum eigenvalue condition~(see e.g.~\cite{RasWaiYu10b} for details) is satisfied for weakly correlated design matrices. We further assume independence across covariates \emph{within groups} since sub-Gaussian concentration bound assuming independence within groups is required.
\begin{assumption}\label{a2} 
For any $r>0$, there exists $K_1^r$ such that $\max_i |x_i^T \theta| \leq K_1^r $ a.s. for all $\theta$ in the set $\{\theta\;:\;\|\theta - \theta^*\|_2 \leq r \cap supp(\theta-\theta^*) \subseteq g_j$ for some $g_j \in \mathcal{G}\}$. 
\end{assumption}

Assumption~\ref{a2} ensures that $|x_i^T \theta^*|$ is bounded a.s., which guarantees that the underlying probability $(1+ e^{-x_i^T \theta^*})^{-1}$ is between $0$ and $1$, and $|x_i^T\theta|$ is also bounded within a compact sparse neighborhood of $\theta^*$ which ensures concentration to the population loss. Comparable assumptions are made in \cite{Elsener2018-tm,Mei2018-ec} where similar non-convex M estimation problems are investigated.

\begin{assumption}\label{a3}
The ratio of the number of labeled to unlabeled data , i.e. $ n_\ell/n_u $ is lower bounded away from 0 and upper bounded for all $ n = n_\ell + n_u$, as $n \rightarrow \infty$. Equivalently, there is a constant $K_2$ such that $| \log\left(n_\ell/\pi n_u \right)| \leq K_2$
\end{assumption}
Assumption~\ref{a3} ensures that the number of labeled samples $n_{\ell}$  is not too small or large relative to $n$. The reason why $n_{\ell}$ can not be too large is that the labeled samples are only positives and we need a reasonable number of negative samples which are a part of the unlabeled samples. 
\begin{assumption}[Rate conditions]\label{a4} 
We assume a high-dimensional regime where both $(n,p) \rightarrow \infty$ and $\log p = o(n)$.  For $\mathscr{G} = ((g_1,\dots,g_J), (w_j)_1^J))$ and $m:= \max_j |g_j|$,  we assume $J = \Omega(n^{\beta})$ for some $\beta>0$, $m = o (n \wedge J)$, $\min_j w_j = \Omega(1)$, and $\max_j w_j =o (n \wedge J) $.
\end{assumption}
Assumption \ref{a4} states standard rate conditions in a high-dimensional setting. In terms of the group structure, we assume that growth of $p$ is not totally attributed to the expansion of a few groups; the number of groups $J$ increases with $n$, and the maximum group size $m$ is of small order of both $n$ and $J$. Also we note that a typical choice of $w_j  = \sqrt{ |g_j|} $ satisfies Assumption 4 because $\min_j w_j \geq 1$, $\max_j w_j = \sqrt{m}$ and $\sqrt{m}/n , \sqrt{m}/J =o(1)$.

Finally we define the restricted strong convexity assumption for a loss function following the definition in~\cite{loh_regularized_2013}.

\begin{definition}[Restricted strong convexity] We say $ \mathscr{L}_n $ satisfies a \emph{restricted strong convexity}~(RSC) condition with respect to $\theta^*$ with curvature $ \alpha>0 $ and tolerance function $ \tau $ over $ \Theta_0 $ if the following inequality is satisfied for all $ \theta \in \Theta_0 $:

\begin{equation}\label{RSC}
\left(\triangledown \mathscr{L}_n(\theta) - \triangledown \mathscr{L}_n(\theta^*)\right)^T\Delta \geq \alpha \norm{\Delta}_2^2 -\tau(\norm{\Delta}_{\mathscr{G},2,1}) 
\end{equation}
where $ \Delta := \theta - \theta^* $ and $ \tau(\norm{\Delta}_{\mathscr{G},2,1}) = \tau_1 \norm{\Delta}_{\mathscr{G},2,1}^2 \dfrac{\log J+m}{n} + \tau_2 \norm{\Delta}_{\mathscr{G},2,1} \sqrt{\dfrac{\log J+m}{n}} $.
\end{definition}
In the special case where $\norm{\Delta}_{\mathscr{G},2,1} = \norm{\Delta}_{1}$ and hence $\tau(\norm{\Delta}_{1}) = \tau_1 \norm{\Delta}_{1}^2 \dfrac{\log p}{n} + \tau_2 \norm{\Delta}_{1} \sqrt{\dfrac{\log p}{n}} $, similar RSC conditions were discussed in~\cite{negahban_unified_2012} and \cite{loh_regularized_2013} with different $ \tau$ and $\Theta_0$. One of the important steps in our proof is to prove that RSC holds for the objective function $\mathscr{L}_n(\theta)$.

%%% guarantee
\subsection{Guarantee}	

Under Assumptions~\ref{a1}-\ref{a4}, we will show in Theorem \ref{thm:3.2} that the RSC condition holds with high probability over $\{\theta ; \norm{\theta}_2\leq r_0\} $ and therefore over $\Theta_0$, for $\Theta_0$ defined in \eqref{def:Theta0}. Under the RSC assumption, the following proposition, which is a modification of Theorem 1 in \cite{loh_regularized_2013}, provides $ \ell_1/\ell_2 $ and $ \ell_2 $ bounds of an error vector $ \hat{\Delta}:= \hat{\theta} - \theta^*$. Recall that $m = \max_j |g_j|$ (the size of the largest group) and $J$ is the number of groups.
\begin{proposition}\label{prop:3.1}
Suppose the empirical loss $ \mathscr{L}_n $ satisfies the RSC condition \eqref{RSC} with $ \tau(\norm{\Delta}_{\mathscr{G},2,1}) = \tau_1 \norm{\Delta}_{\mathscr{G},2,1}^2 \dfrac{\log J+m}{n} + \tau_2 \norm{\Delta}_{\mathscr{G},2,1} \sqrt{\dfrac{\log J+m}{n}} $ over $ \Theta_0 $ where $ \Theta_0 $ is feasible region for the objective \eqref{eq:objective}, as defined in \eqref{def:Theta0}, and the true parameter vector $\theta^*$ is feasible, i.e. $\theta^* \in \Theta_0$. Consider $ \lambda $ such that
\begin{equation}\label{prop3.1:lam_condition}
4 \max \left\lbrace \norm{\triangledown \mathscr{L}_n(\theta^*) }_{\bar{\mathscr{G}},2,\infty}, \left(\tau_1 \dfrac{2R_n (\log J+m)}{n}+\tau_2\sqrt{\dfrac{(\log J+m)}{n}}\right) \right\rbrace \leq \lambda.
\end{equation}
Let $ \hat{\theta} $ be a stationary point of \eqref{eq:objective}.
Then the following error bounds 
\begin{equation}\label{prop3.1:errbounds}
\norm{\hat{\Delta}}_2 \leq (\max_{j \in S} w_j)\dfrac{3\sqrt{s}\lambda}{2\alpha}\qquad
\text{and}	 \qquad \norm{\hat{\Delta}}_{\mathscr{G},1,2} \leq (\max_{j \in S} w_j)^2\dfrac{6s\lambda}{\alpha}, 
\end{equation} hold where $ S := \{j \in (1,\dots,J); \theta^*_{g_j} \neq 0\} $ and $s := |S|$.
\end{proposition}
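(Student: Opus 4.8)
The plan is to follow the primal--dual M-estimation recipe of \cite{negahban_unified_2012,loh_regularized_2013}, adapted from the additive $\ell_1$ penalty to the group $\ell_1/\ell_2$ norm. The starting point is the first-order stationarity condition in~\eqref{eq:stationaryset}: since $\theta^*$ is feasible, taking the competitor $\theta' = \theta^*$ produces a subgradient $\hat{z} \in \partial\norm{\hat{\theta}}_{\mathscr{G},2,1}$ with
\[
\triangledown\mathscr{L}_n(\hat{\theta})^T\hat{\Delta} + \lambda\,\hat{z}^T\hat{\Delta} \leq 0, \qquad \hat{\Delta} := \hat{\theta} - \theta^*.
\]
Applying the RSC inequality~\eqref{RSC} to lower bound $\triangledown\mathscr{L}_n(\hat{\theta})^T\hat{\Delta} \geq \triangledown\mathscr{L}_n(\theta^*)^T\hat{\Delta} + \alpha\norm{\hat{\Delta}}_2^2 - \tau(\norm{\hat{\Delta}}_{\mathscr{G},2,1})$ then rearranges this into the master inequality
\[
\alpha\norm{\hat{\Delta}}_2^2 \leq -\triangledown\mathscr{L}_n(\theta^*)^T\hat{\Delta} + \tau(\norm{\hat{\Delta}}_{\mathscr{G},2,1}) - \lambda\,\hat{z}^T\hat{\Delta}.
\]

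The next step is to bound the three terms on the right using the choice of $\lambda$ in~\eqref{prop3.1:lam_condition}. For the gradient term I would apply the H\"older-type inequality for the dual pair $(\norm{\cdot}_{\mathscr{G},2,1}, \norm{\cdot}_{\bar{\mathscr{G}},2,\infty})$, giving $-\triangledown\mathscr{L}_n(\theta^*)^T\hat{\Delta} \leq \norm{\triangledown\mathscr{L}_n(\theta^*)}_{\bar{\mathscr{G}},2,\infty}\norm{\hat{\Delta}}_{\mathscr{G},2,1} \leq \tfrac{\lambda}{4}\norm{\hat{\Delta}}_{\mathscr{G},2,1}$ by the first entry of the maximum. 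For the tolerance term, feasibility $\theta^*,\hat{\theta}\in\Theta_0$ forces $\norm{\hat{\Delta}}_{\mathscr{G},2,1} \leq 2R_n$, which linearizes the quadratic piece of $\tau$ and makes $\tau(\norm{\hat{\Delta}}_{\mathscr{G},2,1}) \leq (\tau_1\tfrac{2R_n(\log J+m)}{n} + \tau_2\sqrt{\tfrac{\log J+m}{n}})\norm{\hat{\Delta}}_{\mathscr{G},2,1} \leq \tfrac{\lambda}{4}\norm{\hat{\Delta}}_{\mathscr{G},2,1}$ by the second entry. The penalty term is treated group by group: on groups $j$ with $\theta^*_{g_j}=0$ the subgradient satisfies $\hat{z}_{g_j}^T\hat{\Delta}_{g_j} = w_j\norm{\hat{\Delta}_{g_j}}_2$, while on the active groups $\norm{\hat{z}_{g_j}}_2 \leq w_j$ gives $\hat{z}_{g_j}^T\hat{\Delta}_{g_j} \geq -w_j\norm{\hat{\Delta}_{g_j}}_2$, so that $-\lambda\,\hat{z}^T\hat{\Delta} \leq \lambda\norm{\hat{\Delta}_S}_{\mathscr{G},2,1} - \lambda\norm{\hat{\Delta}_{S^c}}_{\mathscr{G},2,1}$ (writing $\norm{\hat{\Delta}_S}_{\mathscr{G},2,1} := \sum_{j\in S} w_j\norm{\hat{\Delta}_{g_j}}_2$).

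Substituting the three bounds and splitting $\norm{\hat{\Delta}}_{\mathscr{G},2,1} = \norm{\hat{\Delta}_S}_{\mathscr{G},2,1} + \norm{\hat{\Delta}_{S^c}}_{\mathscr{G},2,1}$ collapses the master inequality to $\alpha\norm{\hat{\Delta}}_2^2 \leq \tfrac{3}{2}\lambda\norm{\hat{\Delta}_S}_{\mathscr{G},2,1} - \tfrac{1}{2}\lambda\norm{\hat{\Delta}_{S^c}}_{\mathscr{G},2,1}$. Nonnegativity of the left side yields the cone condition $\norm{\hat{\Delta}_{S^c}}_{\mathscr{G},2,1} \leq 3\norm{\hat{\Delta}_S}_{\mathscr{G},2,1}$, hence $\norm{\hat{\Delta}}_{\mathscr{G},2,1} \leq 4\norm{\hat{\Delta}_S}_{\mathscr{G},2,1}$. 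The final ingredient is the group analogue of the passage $\norm{\cdot}_1 \leq \sqrt{s}\norm{\cdot}_2$: factoring out the largest weight and applying Cauchy--Schwarz over the $s=|S|$ active groups gives $\norm{\hat{\Delta}_S}_{\mathscr{G},2,1} \leq (\max_{j\in S} w_j)\sqrt{s}\,\norm{\hat{\Delta}}_2$. Feeding this into $\alpha\norm{\hat{\Delta}}_2^2 \leq \tfrac{3}{2}\lambda\norm{\hat{\Delta}_S}_{\mathscr{G},2,1}$ and dividing by $\norm{\hat{\Delta}}_2$ gives the $\ell_2$ bound, and substituting the $\ell_2$ bound into $\norm{\hat{\Delta}}_{\mathscr{G},2,1} \leq 4(\max_{j\in S}w_j)\sqrt{s}\,\norm{\hat{\Delta}}_2$ gives the $\ell_1/\ell_2$ bound.

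The difficulty here is bookkeeping rather than depth: the whole argument rests on correctly pairing the group norm with its dual $\norm{\cdot}_{\bar{\mathscr{G}},2,\infty}$ and on tracking the weight factor $\max_{j\in S} w_j$ through the Cauchy--Schwarz step, which is exactly what separates these bounds from the $\ell_1$ statements in \cite{loh_regularized_2013}. The one genuinely delicate point is the tolerance term: the quadratic piece $\tau_1\norm{\hat{\Delta}}_{\mathscr{G},2,1}^2(\log J+m)/n$ is only absorbable because feasibility caps $\norm{\hat{\Delta}}_{\mathscr{G},2,1}$ at $2R_n$, so the constants $r_0,R_n$ defining $\Theta_0$ must later be chosen compatibly with the $\lambda$ condition, and the noise term $\norm{\triangledown\mathscr{L}_n(\theta^*)}_{\bar{\mathscr{G}},2,\infty}$ must be shown to concentrate at the rate $\sqrt{(\log J+m)/n}$ --- both of which are deferred to Theorem~\ref{thm:3.2}.
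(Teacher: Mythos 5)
Your proposal is correct and follows essentially the same argument as the paper's proof: first-order stationarity at $\hat\theta$ against the feasible competitor $\theta^*$, RSC to produce the master inequality, the dual-norm H\"older bound and the $2R_n$ cap to absorb the gradient and tolerance terms into $\tfrac{\lambda}{2}\norm{\hat\Delta}_{\mathscr{G},2,1}$, the cone condition, and the weighted Cauchy--Schwarz step over the $s$ active groups. The only cosmetic difference is that you handle the penalty term by computing the group subgradients explicitly, whereas the paper uses convexity of $P_\lambda$ plus the triangle inequality --- these yield the identical bound $\lambda\norm{\hat\Delta_S}_{\mathscr{G},2,1}-\lambda\norm{\hat\Delta_{S^c}}_{\mathscr{G},2,1}$.
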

The proof for Proposition \ref{prop:3.1} is deferred to the supplementary material S2.3. From~\eqref{prop3.1:errbounds}, we note the squared $\ell_2 $-error to grow proportionally with $s$ and $ \lambda^2 $. If $\theta^* \in \Theta_0$ and the choice of $\lambda = \Theta \left(\sqrt{\frac{\log J+m}{n}}\right)$ satisfies the inequality \eqref{prop3.1:lam_condition}, we obtain squared $\ell_2$ error which scales as $s\frac{\log J+m}{n}$, provided that the RSC condition holds over $\Theta_0$. In the case of lasso we recover $ \frac{s\log p}{n} $ parametric optimal rate since  $J=p, m=1$. 

With the choice of $r_0 \geq \|\theta^*\|_2 $ and $R_n = \Theta\left(\sqrt{\frac{n}{\log J+m}}\right)$\footnote{We note that the group $\ell_1$ constraint is active only if $\sqrt{\frac{n}{\log J+m}} = \mathcal{O}\left((\max_j w_j) r_0 \sqrt{J}\right)$. If $R_n \geq (\max_j w_j) r_0\sqrt{J} $, $\Theta_0 =  \{\theta ; \|\theta\|_2 \leq r_0, \|\theta\|_{\mathscr{G},2,1}\leq R_n\} \supseteq \{\theta ; \|\theta\|_2 \leq r_0, \|\theta\|_{\mathscr{G},2,1} \leq (\max_j w_j) r_0\sqrt{J}\}\supseteq \{\theta ; \|\theta\|_2 \leq r_0\}$ by the $\ell_1$-$\ell_2$ inequality, i.e. if $\|\theta\|_2 \leq r_0$, $\|\theta\|_{\mathscr{G},2,1} \leq (\max_j w_j) r_0\sqrt{J} $. The other direction is trivial, and thus $\Theta_0$ is reduced to $\Theta_0 = \{\theta ; \|\theta\|_2 \leq r_0\}$.}, we ensure $\theta^*$ is feasible and $\lambda = \Theta \left(\sqrt{\frac{\log J+m}{n}}\right)$ satisfies the inequality \eqref{prop3.1:lam_condition} with high probability. Clearly  $\left(\tau_1 \frac{2R_n (\log J+m)}{n}+\tau_2\sqrt{\frac{\log J+m}{n}}\right)$ is of the order $\sqrt{\frac{\log J +m}{n}} $ with the choice of $R_n =\Theta\left(\sqrt{\frac{n}{\log J+m}}\right) $, and following Lemma \ref{lem:3.1}, we have $\norm{\triangledown \mathscr{L}_n(\theta^*) }_{\bar{\mathscr{G}},2,\infty} = \mathcal{O}\left(\sqrt{\frac{\log J+m}{n}}\right)$ with high probability. Thus inequality \eqref{prop3.1:lam_condition} is satisfied with $\lambda = \Theta \left(\sqrt{\frac{\log J+m}{n}}\right)$ w.h.p. as well.

\begin{lemma}\label{lem:3.1}
Under Assumptions \ref{a1}-\ref{a4}, for any given $\epsilon>0$, there is a positive constant $c$ such that 
$$\mathbb{P}\left( \|\triangledown \mathscr{L}_n(\theta^*) \|_{\bar{\mathscr{G}},2,\infty} \geq c\sqrt{\dfrac{\log J+m}{n}} \right)\leq \epsilon$$
given a sample size $n\gtrsim (\log p + m) \vee (1/\epsilon)^{1/\beta}$.
\end{lemma}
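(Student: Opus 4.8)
The plan is to identify the gradient $\nabla\mathscr{L}_n(\theta^*)$ explicitly and show it is an average of i.i.d.\ mean-zero vectors whose group-blocks concentrate at the sub-Gaussian rate. Using the GLM form $\mathscr{L}_n(\theta)=\frac{1}{n}\sum_i[A(\eta_i)-z_i\eta_i]$ with $\eta_i=\log(n_\ell/\pi n_u)+x_i^T\theta-\log(1+e^{x_i^T\theta})$, differentiation gives $\nabla_\theta\eta_i=(1+e^{x_i^T\theta})^{-1}x_i$, so that
\[
V:=\nabla\mathscr{L}_n(\theta^*)=\frac{1}{n}\sum_{i=1}^n\big(\mu(\eta_i^*)-z_i\big)\,h_i\,x_i,\qquad h_i:=\frac{1}{1+e^{x_i^T\theta^*}}\in(0,1).
\]
Two structural facts drive everything. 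First, since $\mu(\eta_i^*)=E[z_i\mid x_i]$, each summand is mean-zero conditionally on $x_i$, so $E[V]=0$ (equivalently, $\theta^*$ satisfies the population first-order condition). Second, the scalar factor $(\mu(\eta_i^*)-z_i)h_i$ is mean-zero given $x_i$ and bounded in $[-1,1]$, hence conditionally sub-Gaussian with parameter at most $1$ by Hoeffding's lemma.

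Fixing a group $g_j$, I would bound $\norm{V_{g_j}}_2$ by discretization: with a $1/2$-net $N_j$ of the unit sphere in $\mathbb{R}^{|g_j|}$, which satisfies $|N_j|\le 5^{|g_j|}\le 5^m$, one has $\norm{V_{g_j}}_2\le 2\max_{u\in N_j}u^TV_{g_j}$. For a fixed direction $u$,
\[
u^TV_{g_j}=\frac{1}{n}\sum_{i=1}^n\big(\mu(\eta_i^*)-z_i\big)h_i\,\big(u^T(x_i)_{g_j}\big)
\]
is an average of i.i.d.\ mean-zero terms. Each term is a product of the bounded factor above and the variable $u^T(x_i)_{g_j}$, which by Assumption~\ref{a1} (using $\norm{u}_2=1$) is sub-Gaussian with parameter $\sigma_x$; conditioning on $x_i$ and then integrating shows the product is sub-exponential with parameters $(\nu,b)$ of order $\sigma_x$. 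Bernstein's inequality then yields a tail of the form $2\exp(-cn\min\{t^2/\nu^2,\,t/b\})$ for $u^TV_{g_j}$.

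Combining over the $|N_j|\le 5^m$ net points and the $J$ groups by a union bound, and using $w_j^{-1}=O(1)$ from $\min_j w_j=\Omega(1)$ (Assumption~\ref{a4}), I would set $t=c\sqrt{(\log J+m)/n}$. In the high-dimensional regime ($\log p=o(n)$, $m=o(n)$ from Assumption~\ref{a4}) we have $(\log J+m)/n\to 0$, so Bernstein operates in its sub-Gaussian branch and
\[
\mathbb{P}\Big(\max_j w_j^{-1}\norm{V_{g_j}}_2\ge c\sqrt{\tfrac{\log J+m}{n}}\Big)\le 2\,J\,5^{m}\exp\big(-c'(\log J+m)\big).
\]
Taking $c$ (hence $c'$) large enough makes the concentration exponent dominate the entropy term $\log(J\,5^m)=\log J+m\log 5$, so the right-hand side decays like $J^{-c''}$; since $J=\Omega(n^\beta)$, this is at most $\epsilon$ once $n\gtrsim(\log p+m)\vee(1/\epsilon)^{1/\beta}$, where the first term keeps $(\log J+m)/n$ small enough to stay in the sub-Gaussian branch and the second converts the polynomial-in-$J$ decay into the prescribed $\epsilon$.

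The main obstacle is this two-sided balancing: one must establish the sub-exponential control of the products $(\mu(\eta_i^*)-z_i)h_i\,(u^T(x_i)_{g_j})$ with explicit parameters, and then choose the constant in $t$ so that the per-direction concentration exponent $c'(\log J+m)$ strictly dominates the covering-plus-union entropy $\log J+m\log 5$ at \emph{exactly} the scale $\sqrt{(\log J+m)/n}$, while checking that $t$ is small enough for the sub-Gaussian branch of Bernstein to apply. The within-group independence in Assumption~\ref{a1} is what guarantees the clean sub-Gaussian tail of $u^T(x_i)_{g_j}$ exploited in this step.
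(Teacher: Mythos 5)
Your proposal is correct and reaches the same $\sqrt{(\log J+m)/n}$ rate, but it takes a genuinely different route from the paper's proof. The paper first conditions on the event $\max_j n^{-1}\sum_i x_{ij}^2 \leq 36\sigma_x^2$ (controlled by a Bernstein bound for the sub-exponential variables $x_{ij}^2$), then exploits the exponential-family identity $E[\exp(-t_i z_i x_{ij})\mid x_i]=\exp\{A(f(x_i^T\theta^*)-t_ix_{ij})-A(f(x_i^T\theta^*))\}$ together with $\sup_u A''(u)\leq 1/4$ to show each gradient coordinate is conditionally $\mbox{subG}(C/4n)$; it then controls $\|\triangledown\mathscr{L}_n(\theta^*)_{g_j}\|_2^2$ directly as a sum of squares of within-group-independent sub-Gaussians, which is sub-exponential (Lemmas S2.3 and S2.6 in the supplement), and finishes with a union bound over the $J$ groups. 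You instead bound the scalar factor $(\mu(\eta_i^*)-z_i)h_i$ crudely by Hoeffding's lemma, treat each product term as sub-exponential via the conditional-MGF composition, and replace the sum-of-squares machinery with a $1/2$-net of each group's unit sphere, paying an extra entropy cost of $m\log 5$ that is absorbed at the same $\log J+m$ scale. What each buys: the paper's conditional-MGF computation gives sharper constants and avoids the net entirely, but it genuinely needs the within-group independence of Assumption~\ref{a1} to make the sum of squared coordinates a sum of independent sub-exponentials; your net argument only needs the directional sub-Gaussianity of $u^T(x_i)_{g_j}$, which already follows from the sub-Gaussian tail condition without within-group independence (your closing remark attributing the sub-Gaussian tail to within-group independence is therefore a slight mischaracterization, though harmless). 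The price of your route is the two-sided bookkeeping you identify: verifying that $t=c\sqrt{(\log J+m)/n}$ stays in the sub-Gaussian branch of Bernstein (guaranteed by $\log p=o(n)$ and $m=o(n)$) and choosing $c$ so the concentration exponent dominates $\log J+m\log 5$; both steps go through, and the final conversion of the $J^{-c''}$ tail into $\epsilon$ via $J=\Omega(n^\beta)$ and $n\gtrsim(1/\epsilon)^{1/\beta}$ matches the paper exactly.
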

The proof for Lemma~\ref{lem:3.1} is provided in the supplement S2.4. Now we state the main theorem of this section which shows that RSC condition holds uniformly over a neighborhood of the true parameter. 

\begin{theorem}\label{thm:3.2} For any given $r>0$ and $\epsilon>0$, there exist strictly positive constants  $\alpha, \tau_1 $ and $ \tau_2$  depending on $ \sigma_x, K_0,K_1^r$ and $K_2$ such that
\begin{equation}\label{thm3.3:main_ineq}
\left(\triangledown \mathscr{L}_n(\theta) - \triangledown \mathscr{L}_n(\theta^*)\right)^T\Delta \geq \alpha \|\Delta\|_2^2 - \tau_1 \|\Delta\|_{\mathscr{G},2,1}^2 \dfrac{\log J+m}{n} - \tau_2 \|\Delta\|_{\mathscr{G},2,1} \sqrt{\dfrac{\log J+m}{n}} 
\end{equation}
holds for all $\theta$ such that $\|\Delta\|_2:=\|\theta-\theta^*\|_2\leq r$ with probability at least $ 1-\epsilon $, given $(n,p) $ satisfying $n \gtrsim (\log J+m) \vee (1/\epsilon)^{1/\beta}$.
\end{theorem}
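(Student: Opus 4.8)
The plan is to reduce the RSC inequality to a lower bound on a quadratic form in the empirical Hessian, integrated along the segment from $\theta^*$ to $\theta$, and then to split that Hessian into a deterministic curvature piece and a stochastic residual piece. First I would write, by the fundamental theorem of calculus applied to $s\mapsto\triangledown\mathscr{L}_n(\theta^*+s\Delta)^T\Delta$,
$$\left(\triangledown\mathscr{L}_n(\theta)-\triangledown\mathscr{L}_n(\theta^*)\right)^T\Delta = \frac{1}{n}\sum_{i=1}^n\left(\int_0^1\psi_i(\theta^*+s\Delta)\,ds\right)(x_i^T\Delta)^2,$$
where $\psi_i(\theta)$ is the second derivative of the per-sample observed negative log-likelihood along the linear predictor $t_i=x_i^T\theta$. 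Using the GLM form \eqref{eq:lik_in_glm} one computes $\psi_i(\theta)=\mu_i(1-\mu_i)f_i^2-(\mu_i-z_i)f_i(1-f_i)$, where $f_i=(1+e^{t_i})^{-1}$ and $\mu_i=\mu(\eta_i)$. The essential difference from the canonical-link case is the residual term $-(\mu_i-z_i)f_i(1-f_i)$, which has no definite sign and reflects the non-convexity of $\mathscr{L}_n$; the whole difficulty is to show it is of smaller order than the curvature term $\mu_i(1-\mu_i)f_i^2\ge 0$.

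For the curvature term, I would use Assumptions~\ref{a2} and~\ref{a3} to keep $t_i$ and $\log(n_\ell/\pi n_u)$ bounded on the neighborhoods entering the analysis, so that $\mu_i$ and $f_i$ stay bounded away from $0$ and $1$ and hence $\mu_i(1-\mu_i)f_i^2\ge\kappa$ for a constant $\kappa=\kappa(K_1^r,K_2)>0$. This yields a deterministic lower bound $\kappa\,\norm{X\Delta}_2^2/n$. I would then invoke a restricted-eigenvalue concentration bound for the sub-Gaussian design of Assumption~\ref{a1}: with minimum eigenvalue $K_0$ and within-group independence, $\norm{X\Delta}_2^2/n \ge c_0K_0\norm{\Delta}_2^2 - c_1\norm{\Delta}_{\mathscr{G},2,1}^2\tfrac{\log J+m}{n}$ uniformly in $\Delta$ with high probability. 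Establishing the group-structured rate $\log J+m$ (in place of the $\ell_1$ rate $\log p$) is where the group geometry enters: one covers the group-sparse sphere group by group and unions over the $J$ groups, using Assumption~\ref{a4} to control the number of groups and the maximal group size $m$. This step supplies the term $\alpha\norm{\Delta}_2^2$ and the quadratic tolerance $\tau_1\norm{\Delta}_{\mathscr{G},2,1}^2\tfrac{\log J+m}{n}$.

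The remaining work is to control the residual piece $\tfrac{1}{n}\sum_i\big(\int_0^1(\mu_i-z_i)f_i(1-f_i)\,ds\big)(x_i^T\Delta)^2$ uniformly over $\norm{\theta-\theta^*}_2\le r$. I would split $\mu_i-z_i=(\mu_i-\mu_i^*)+(\mu_i^*-z_i)$, where $\mu_i^*=\mu(\eta_i(\theta^*))$; the first part is deterministic and Lipschitz in $x_i^T\Delta$ on the bounded neighborhood, and can be absorbed into the curvature (possibly shrinking $\alpha$ and $r$), while the second is mean-zero given $x_i$ and bounded by one. I would then bound the supremum of this mean-zero empirical process by symmetrization and a peeling/covering argument, reducing to single-group-supported perturbations---precisely the regime in which Assumption~\ref{a2} guarantees boundedness---and unioning over the $J$ groups. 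The bounded, mean-zero summands give a sub-Gaussian increment, and the resulting bound is of order $\norm{\Delta}_{\mathscr{G},2,1}\sqrt{\tfrac{\log J+m}{n}}$, which furnishes the linear tolerance $\tau_2\norm{\Delta}_{\mathscr{G},2,1}\sqrt{\tfrac{\log J+m}{n}}$. The probability bookkeeping, with the sample-size requirement $n\gtrsim(\log J+m)\vee(1/\epsilon)^{1/\beta}$, follows from $J=\Omega(n^\beta)$ in Assumption~\ref{a4}; the companion gradient bound in Lemma~\ref{lem:3.1} uses the same machinery.

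The main obstacle I anticipate is the uniform control of this stochastic residual process. Because the loss is genuinely non-convex, the residual term cannot be discarded by positive semidefiniteness as in the canonical-link analyses of \cite{negahban_unified_2012,loh_regularized_2013}; it must be shown to be strictly lower order than the curvature. The coupling in which $\Delta$ appears simultaneously inside the weight (through $\theta=\theta^*+s\Delta$) and quadratically in $(x_i^T\Delta)^2$ must be decoupled, and the group-sparse covering must be arranged so that the union bound produces exactly the $\log J+m$ rate rather than a cruder $\log p$ or $p$ dependence---this is where the within-group independence of Assumption~\ref{a1} and the rate conditions of Assumption~\ref{a4} are indispensable.
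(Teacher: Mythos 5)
Your overall architecture matches the paper's: both split the gradient increment into a piece with strictly positive expectation/curvature and a mean-zero residual caused by the non-canonical link (your mean-zero piece, after integrating $f_i(1-f_i)=-f''$ along the segment, is algebraically identical to the paper's term $II=\tfrac{1}{n}\sum_i e_i\left(f'(\theta^Tx_i)-f'({\theta^*}^Tx_i)\right)x_i^T\Delta$ with $e_i=\mu_i^*-z_i$), and both control that residual by symmetrization, contraction, bounded differences and peeling to get the group rate $\sqrt{(\log J+m)/n}$.

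There is, however, a genuine gap in your treatment of the curvature term. You assert that Assumptions 2 and 3 keep $t_i=x_i^T\theta$ bounded along the segment, so that $\mu_i(1-\mu_i)f_i^2\geq\kappa>0$ pointwise. But Assumption 2 only bounds $|x_i^T\theta|$ for $\theta$ in a \emph{single-group-sparse} neighborhood of $\theta^*$; for a generic $\theta$ with $\|\theta-\theta^*\|_2\leq r$ the quantity $|x_i^T\Delta|$ is merely sub-Gaussian and unbounded, so $f_i=(1+e^{t_i})^{-1}$ can be arbitrarily close to $0$ or $1$ and no uniform $\kappa$ exists. The same unboundedness defeats your plan to absorb the deterministic residual $(\mu_i-\mu_i^*)f_i(1-f_i)(x_i^T\Delta)^2$ into the curvature: that term is cubic in $x_i^T\Delta$, and its ratio to the quadratic curvature is of order $\max_i|x_i^T\Delta|$, again unbounded. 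The paper resolves exactly this by inserting the truncation indicator $\mathbbm{1}\{|\Delta^Tx_i|\leq\tau\|\Delta\|_2\}$ and the Lipschitz truncated quadratic $\varphi_{\tau\|\Delta\|_2}$, obtaining a curvature constant $L_0(\tau)$ only on the truncation event, and then choosing $\tau=K_3\asymp\sigma_x\left(\log(\sigma_x^2/K_0)\right)^{1/2}$ so that the expected mass lost to truncation is at most $K_0/4$ of $E[(\Delta^Tx)^2]\geq K_0\|\Delta\|_2^2$. Some such localization of $x_i^T\Delta$ is indispensable and is the one missing idea in your proposal; with it added (and the peeling carried out over the ratio $\|\Delta\|_{\mathscr{G},2,1}/\|\Delta\|_2$ using the homogeneity of $\varphi$), the rest of your plan goes through essentially as in the paper.
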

The proof of Theorem \ref{thm:3.2} is deferred to the supplement S2.5. There are a couple of notable remarks about Theorem~\ref{thm:3.2} and Proposition~\ref{prop:3.1}.
\begin{itemize}
\item The application of the Proposition \ref{prop:3.1} requires for a RSC condition to hold over a feasible region $\Theta_0$. Setting $r= 2r_0$ in Theorem \ref{thm:3.2}, inequality \eqref{thm3.3:main_ineq} holds over $\{ \theta ; \norm{\theta-\theta^*}_2 \leq 2r_0\} $ w.h.p, therefore over $\Theta_0 \subseteq \{ \theta ; \norm{\theta-\theta^*}_2 \leq 2r_0\}$.
\item We discuss how underlying parameters $r_0, \sigma_x$, and constants $K_0$-$K_2$ in Assumptions \ref{a1}-\ref{a3} are related to the $\ell_2$-error bound. From Proposition \ref{prop:3.1}, we see that $\ell_2$-error is proportional to  $\tau_1/\alpha$ and $\tau_2/\alpha $. The proof of Theorem~\ref{thm:3.2} reveals that 
$\tau_1/\alpha \lesssim (\sigma_x K_3/ K_0)^2$ and $\tau_2/\alpha \lesssim  \sigma_x(1+K_1^{2r_0})/K_0 L_0$, where $L_0$ and $K_3$ are also constants defined as $L_0 := \displaystyle \inf_{|u| \leq K_2 +K_1^{2r_0} +2r_0 K_3}(e^u/(1+e^u)^2 )(1+e^{K_1^{2r_0} +2r_0 K_3} )^{-2}$ and $K_3 \lesssim \sigma_x \log (\sigma_x^2/K_0)^{1/2}$.
As $L_0$ is inversely related to $K_2$ and $r_0$, $\ell_2$-error is proportional to the  $r_0$, $\sigma_x$, $K_1^{2r_0}$ and $K_2$ in Assumptions \ref{a2} and \ref{a3}, but inversely related to the minimum eigenvalue bound $K_0$ in Assumption \ref{a1}.

\item The mean-squared error $\frac{s \log p}{n}$ in the case of $J=p$ is verified below in Fig.~\ref{fig:mseplot} and both the mean-squared error and $\ell_1$ errors are minimax optimal for high-dimensional linear regression~(\cite{raskutti_minimax_2011}).

\end{itemize}

To validate the mean-squared error upper bound of $\frac{s \log p}{n}$ in Section~\ref{SecStat}, a synthetic dataset was generated according to the logistic model \eqref{eq:py|x} with $p = 500$ covariates and $X \sim N(0, I_{500\times 500})$. Varying $s$ and $n$ were considered to study the rate of convergence of $\norm{\hat{\theta}-\theta^*}_2$. The ratio $n_\ell/n_u$ was fixed to be $1$. For each dataset, $\hat{\theta}$ was obtained by applying PUlasso algorithm with a lambda sequence $\lambda_n := c_s \sqrt{\frac{\log p}{n}}$ for a suitably chosen $c_s$ for each $s$. We repeated the experiment 100 times and average $\ell_2-$error was calculated.
\begin{figure}
    \centering
    \includegraphics[width=.4\linewidth]{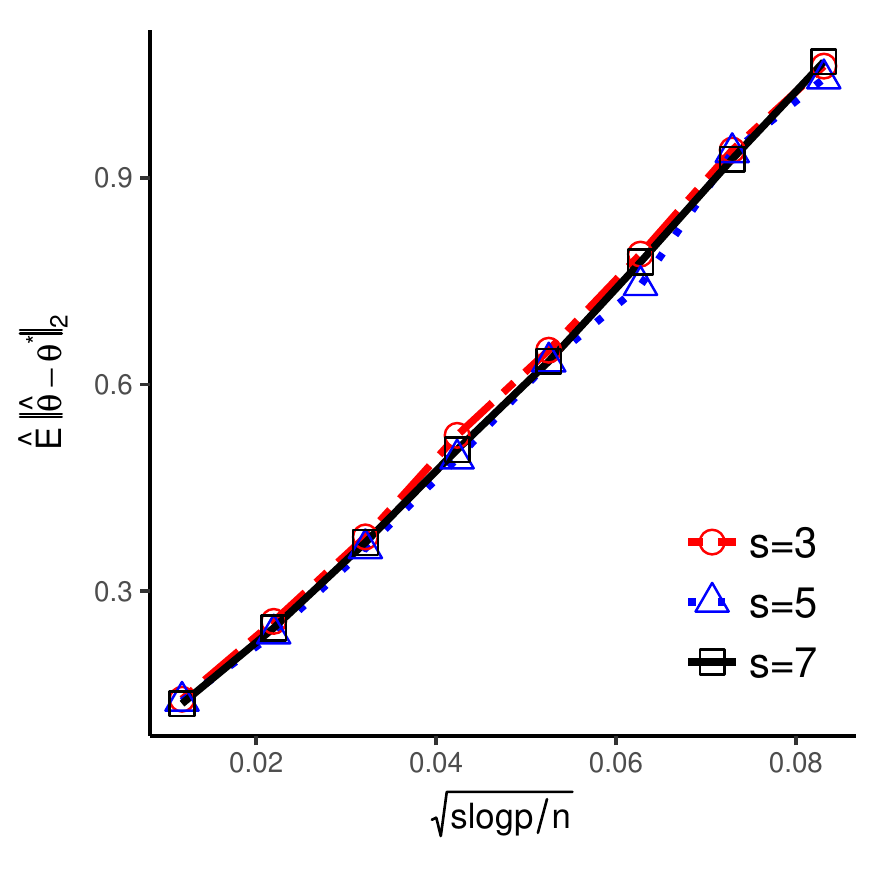}
    \caption{$\hat{E}[\norm{\hat{\theta}-\theta}_2]$  plotted against $\sqrt{s \log p/n}$ with fixed p=500 and varying $s$ and $n$}
    \label{fig:mseplot}
\end{figure}
% \noindent\begin{minipage}{\textwidth}
% 	\centering
% \includegraphics[width=.4\linewidth]{mseplot.pdf}
% \captionof{figure}{$\hat{E}[\norm{\hat{\theta}-\theta}_2]$  plotted against $\sqrt{s \log p/n}$ with fixed p=500 and varying $s$ and $n$ }
% \label{fig:mseplot}
% \end{minipage}
In Figure $\ref{fig:mseplot}$, we illustrate the rate of convergence of $\norm{\hat{\theta}-\theta^*}_2$. In particular, $\norm{\hat{\theta}-\theta^*}_2$ against $\sqrt{\frac{s\log p}{n}}$ is plotted with varying $s$ and $n$. The error appears to be linear in $\sqrt{\frac{s\log p}{n}}$, and thus we also empirically conclude that our algorithm achieves the optimal $\sqrt{\frac{s\log p}{n}}$ rate.	
\section{Simulation study: Classification performance}
\label{SecSim}
In this section, we provide a simulation study which validates the classification performance for PUlasso. In particular we provide a comparison in terms of classification performance to state-of-the-art methods developed in~\cite{du_marthinus_convex_2015,elkan_learning_2008,liu_building_2003}. The focus of this section is classification rather than variable selection since many of the state-of-the-art methods we compare to are developed mainly for classification and are not developed for variable selection. 

\subsection{Comparison methods}

Our experiments compare six algorithms: (i) logistic regression model assuming we know the true responses (\emph{oracle} estimator); (ii) our PUlasso algorithm; (iii) a bias-corrected logistic regression algorithm in~\cite{elkan_learning_2008}; (iv) a second algorithm from~\cite{elkan_learning_2008} that is effectively a one-step EM algorithm; (v) the biased SVM algorithm from~\cite{liu_building_2003} and (vi) the PU-classification algorithm based on an asymmetric loss from~\cite{du_marthinus_convex_2015}.

The biased SVM from~\cite{liu_building_2003} is based on the supported vector machine~(SVM) classifier with two tuning  parameters which parameterize mis-classification costs of each kind. The first algorithm from~\cite{elkan_learning_2008} estimates label probabilities $ \mathbb{P}(z=1|x) $ and corrects the bias in the classifier via the estimation of $\mathbb{P}(z=1|y=1)$ under the assumption of a disjoint support between $\mathbb{P}(x|y=1)$ and $\mathbb{P}(x|y=0)$. Their second method is a modification of the first method; a unit weight is assigned to each labeled sample, and each unlabeled example is treated as a combination of a positive and negative example with weight $\mathbb{P}(y = 1|x, z = 0)$ and $\mathbb{P}(y = 0|x, z = 0)$, respectively. \cite{du_marthinus_convex_2015} suggests using asymmetric loss functions with $\ell_2$-penalty. Asymmetric loss function is considered to cancel the bias induced by separating positive and unlabeled samples rather than positive and negative samples. Any convex surrogate of 0-1 loss function can be used for the algorithm. There is a publicly available matlab implementation of the algorithm when a surrogate is the squared loss on the author's webpage\footnote{available at http://www.ms.k.u-tokyo.ac.jp/software.html} and since we use their code and implementation, the squared loss is considered.

\subsection{Setup}

We consider a number of different simulation settings: (i) small and large $p$ to distinguish the low and high-dimensional setting; (ii) weakly and strongly separated populations; (iii) weakly and highly correlated features; and (iv) correctly specified (logistic) or mis-specified model. Given dimensions $(n,p)$, sparsity level $s$, predictor auto-correlation $\rho$, separation distance $d$, and model specification scheme (logistic, mis-specified), our setup is the following:
\begin{itemize}
	\item Choose the active covariate set $S \subseteq \{1,2,\dots,p\}$  by taking $s$ elements uniformly at random from $(1,2,\dots,p)$. We let true $\theta^* \in \mathbb{R}^p$ such that $\theta^*_j = \mathbbm{1}_S(j)$.
	
	\item Draw samples $x \in \mathbb{R}^p, i.i.d$ from $\mathbb{P}_X = 0.5 \mathbb{P}_1 + 0.5 \mathbb{P}_0$ where $\mathbb{P}_1 := \mathscr{N}(\mu_1, \Sigma_{\rho})$, $\mathbb{P}_0 := \mathscr{N}(\mu_2, \Sigma_{\rho})$. More concretely, firstly draw $u \sim \mbox{Ber}(0.5)$. If $u=1$, draw $x$ from $\mathbb{P}_1$ and draw $x$ from $\mathbb{P}_0$ otherwise. \begin{itemize}
	    \item Mean vectors $\mu_1,\mu_2\in \mathbb{R}^p$ are chosen so that they are $s$-sparse, i.e. supp($\mu_i$) = $S$,  $E[\norm{\mu_1 -\mu_2}_2^2] = d^2$ and variance of $\mu_i$ does not depend on $d$. Specifically, we sample $\mu_1,\mu_2$ such that for $j \in S$, we let $\mu_{1j} \sim \mathscr{N}(\sqrt{(2d^2-1)/8s},1/\sqrt{8s})$, $\mu_{2j} = -\mu_{1j}$, and for $j \notin S$, $\mu_{ij} =0$ for $i \in (1,2)$. 
	    \item A covariance matrix $\Sigma_{\rho} \in \mathbb{R}^{p \times p}$ is taken to be $\Sigma_{\rho,ij} = K_{\rho } \rho^{|i-j|}$ where $K_\rho$ is chosen so that $\mathbbm{1}_S^T\Sigma_{\rho} \mathbbm{1}_S =s$. This scaling of $\Sigma_{\rho}$ is made to ensure that the signal strength $Var(x^T\theta^*) = \mathbbm{1}_S^T\Sigma_{\rho} \mathbbm{1}_S$ stays the same across $\rho$.
	\end{itemize} 
%	\item  \in (0.2,0.4,0.6,0.8)$.
	\item Draw responses $y\in \{0,1\}$. If scheme = logistic, we draw y such that $y \sim \mbox{Ber}(\mathbb{P}_{\theta^*}(y=1|x))$ where $\mathbb{P}_{\theta^*}(y=1|x) = 1/(1+\exp(-{\theta^*}^Tx))$. In contrast, if scheme = mis-specified, we let $y=1$ if $x$ was drawn from $\mathbb{P}_1$, and zero otherwise; i.e. $y = \mathbbm{1}\{u = 1\}$.
 \end{itemize}
 
To compare performances both in low and high dimensional setting, we consider $ (p=10, s= 5) $ and $ (p=5000,s=5) $. We set the sample size $ n_\ell = n_u = 500 $ in both cases. Auto-correlation level $\rho$ takes values in $(0, 0.2,0.4,0.6,0.8)$. In the high dimensional setting, we excluded algorithm (v), since (v) requires a grid search over two dimensions, which makes the computational cost prohibitive. For algorithms (i)-(iv), tuning parameters $\lambda$ are chosen based on the 10-fold cross validation.

\subsection{Classification comparison}

We use two criteria, mis-classification rate and $F_1 $ score, to evaluate performances.  $ F_1 $ is the harmonic mean of the precision and recall, which is calculated as
$F_1 := 2\cdot \dfrac{\text{precision+recall}}{\text{precision$\cdot$recall}}.$  The $F_1$ score ranges from 0 to 1, where 1 corresponds to perfect precision and recall.
Experiments are repeated $50$ times and the average score and standard errors are reported. The result for the mis-classification rate under correct model specification is displayed in Figure~\ref{fig:classification_comp}. 

\begin{figure}[htbp]
    \centering
    \includegraphics[width=.8\linewidth, height= .4\textheight]{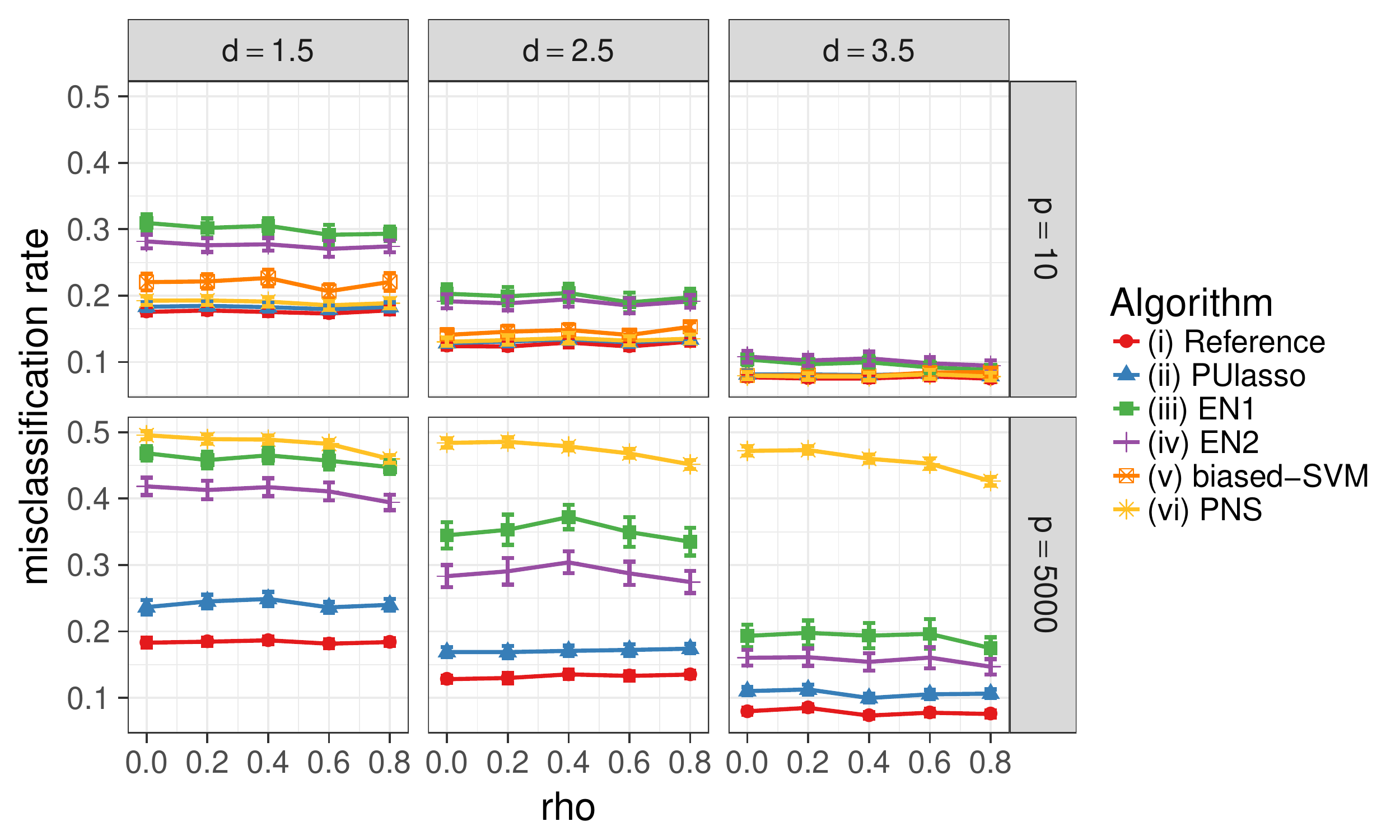}
    \caption{Mis-classification rates of algorithms (i)-(vi) under correct (logistic) model specification. Each error bar represents two standard errors of the mean. }\label{fig:classification_comp}
\end{figure}

Not surprisingly the oracle estimator has the best accuracy in all cases. PUlasso and algorithm (vi) performs almost as well as the oracle in the low-dimensional setting and better than remaining methods in most cases. It must be pointed out that both PUlasso and algorithm (vi) use additional knowledge $\pi$ of the true prevalence in the unlabeled samples. PUlasso performs best in the high-dimensional setting while the performance of algorithm (vi) becomes significantly worse because estimation errors can be greatly reduced by imposing many $0$'s on the estimates in PUlasso due to the $\ell_1$-penalty (compared to $\ell_2$-penalty in algorithm (vi)). The performance of (iii)-(iv) are greatly improved when positive and negative samples are more separated (large $d$), because algorithms (iii)-(iv) assume disjoint support between two distributions. The algorithms show similar performance when evaluated with the $F_1$ score metric and in the mis-specified setting. Due to space constraints, we defer the full set of remaining results in the supplementary material Section S3.

\section{Analysis of beta-glucosidase sequence data}
\label{SecExperiment}

Our original motivation for developing the PUlasso algorithm was to analyze a large-scale dataset with positive and unlabeled responses developed by the lab of Dr.Philip Romero~(\cite{Romero2015}). The prior EM algorithm approach of~\cite{ward_presence-only_2009} did not scale to the size of this dataset. In this section, we discuss the performance of our PUlasso algorithm on a dataset involving mutations of a natural beta-glucosidase~(BGL) enzyme. To provide context, BGL is a hydrolytic enzyme involved in the deconstruction of biomass into fermentable sugars for biofuel production. Functionality of the BGL enzyme is measured in terms of whether the enzyme deconstructs disaccharides into glucose or not. Dr. Romero used a microfluidic screen to generate a BGL dataset containing millions of sequences~(\cite{Romero2015})\footnote{The raw data is available in https://github.com/RomeroLab/seq-fcn-data.git}.

Main effects and two-way interaction models are fitted using our PUlasso algorithm with $\ell_1$ and $\ell_1/\ell_2$ penalties (we discuss how the groups are chosen shortly) over a grid of $\lambda$ values. We test stability of feature selection and classification performance using a modified ROC and AUC approach. Finally a scientific validation is performed based on a follow-up experiment conducted by the Romero lab. The variables selected by PUlasso were used to design a new BGL enzyme and the performance is compared to the original BGL enzyme.

\subsection{Data description}

The dataset consists of $n_\ell = 2647877$ labeled and functional sequences and $n_u = 1567203$ unlabeled sequences where each of the observation  $\sigma = (\sigma_1,\dots,\sigma_{500})$ is a sequence of amino acids of length $d=500$. Each of the position $\sigma_j\in (A,R,\dots,V,*)$ takes one of $M = 21$ discrete values, which correspond to the $20$ amino acids in the DNA code and an extra to include the possibility of a gap($*$). %In other words, $\sigma_i \in (A,R,\dots,V,*)$ where $*$ represents a gap. 

Another important aspect of the millions of sequences generated is that a ``base wild-type BGL sequence" was considered and known to be functional~($y = 1$), and the millions of sequences were generated by \emph{mutating} the base sequence. Single mutations~(changing one position from the base sequence) and double mutations~(changing two positions) from the base sequence were common but higher-order mutations were not prevalent using the deep mutational scanning approach in \cite{Romero2015}. Hence the sequences generated were not random samples across the entire enzyme sequence space, but rather very local sequences around the wild-type sequence. Hence the number of possible mutations in each position and consequently the total number of observed sequences is also reduced dramatically. With this dataset, we want to determine which mutations should be applied to the wild-type BGL sequence. 

Categorical variables $\sigma$ are converted into indicator variables: $x = (\mathbbm{1}\{\sigma_j=l\})_{j,l}$ where $ 1\leq j\leq 500$, $l\in (A,R,\dots,V,*) \setminus (\sigma^{WT}_{l})$ for the main-effects model, $x = (\mathbbm{1}\{\sigma_j=l\}, \mathbbm{1}\{\sigma_j=l,\sigma_k=m\})_{j,k,l,m}$ where $ 1\leq j,k \leq 500$, $j \neq k$, $l,m \in (A,R,\dots,V,*) \setminus (\sigma^{WT}_{l\, or\, m})$ for the pairwise interaction models, where  $\sigma^{WT}_l$ represents the amino acid of the wild-type sequence at the $l$th position. In other words, each variable corresponds to an indicator of mutation from the base sequence or interaction between mutations. Although there are in principle $p \approx d(M-1)$ variables for a main-effects model and $p \approx d^2(M-1)^2$ if we include main-effects and two-way interactions, there are many amino acids that never appear in any position or appear only a small number of times. For features corresponding to the main-effects~($\mathbbm{1}\{\sigma_j=l\}$ for some $j$ and $l$), those sparse features are aggregated \textit{within} each position until the number of mutations of the aggregated column reaches 100 or 1\% of the total number of mutations in each position; accordingly, each aggregated column is an indicator of any mutations to those sparse amino acids. For two-way interactions features~($\mathbbm{1}\{\sigma_j=l,\sigma_k=m\}$ for some $j,k,l$, and $m$), sparse features~($\leq 25$ out of $4215080$ samples) are simply removed from the feature space.
% only once, and these are removed from the feature space. 
Using this basic pre-processing  we obtained only $3075$ corresponding to single mutations and $930$ binary variables corresponding to double mutations. They correspond to $500$ unique positions and $820$ two-way interactions between positions respectively. As mentioned earlier, we consider both $\ell_1$ and group $\ell_1/\ell_2$ penalties. We use the $\ell_1$-penalty for the main-effects model and the $\ell_1/\ell_2$ for the two-way interaction models. For the two-way interaction model each group $g_j$ corresponds to a different position  ($500$ total) and pair of positions ($820$ total) where mutations occur in the pre-processed design matrix and the group size $|g_j|$ corresponds to the number of different observed mutations in each position or pair of mutations in pair of positions~(for this dataset $m = \max_j |g_j| = 8$). Higher-order interactions were not modeled as they did not frequently arise. Hence the main-effects and two-way interaction model we consider have $p  = 3076\,(1 + 3075)$ and $p = 4006 \,(1+3075 + 930)$ and $J = 1320\, (500 + 820)$ groups respectively. In summary, we consider the following two models and corresponding design matrices 
\begin{gather*}
     X_{main} := \text{[Intercept(1)+ main effects(3075)]} \in \{0,1\}^{4215080 \times 3076}\\
    X_{int} : =  \text{[Intercept(1)+ main effects(3075)+ two way interactions(930)]}\in \{0,1\}^{4215080 \times 4006}
\end{gather*}
and the response vector $ z = [1,\dots,1,0,\dots,0] ^T \in \{0,1\}^{4215080}$. 

\subsection{Classification validation and model stability}
Next we validate the classification performance for both the main-effect and two-way interaction models. We fit models using 90\% of the randomly selected samples both from the positive and unlabeled set and use Area Under the ROC Curve~(AUC) to evaluate the classification performance on the 10\% of the hold-out set. Since positive and negative samples are mixed in the unlabeled test dataset this is a non-trivial task with presence-only responses. A naive approach is to treat unlabeled samples as negative and estimate AUC, but if we do so, the AUC is inevitably downward-biased because of the inflated false positive~(FP) rate. We note that a true positive~(TP) rate can be estimated in an unbiased manner using positive samples. To adjust such bias, we follow the methodology suggested in~\cite{jain_recovering_2017} and adjust false positive rate and AUC value using the following equation:
\begin{align*}
	\mbox{FP}^{adj} &= \frac{\mbox{FP}^{naive}-\pi \mbox{TP}}{1-\pi},\qquad \mbox{AUC}^{adj} = \dfrac{\mbox{AUC}^{naive} - \pi/2}{1-\pi}
\end{align*}
where $ \pi $ is the prevalence of positive samples.\\

\begin{figure}[htbp]
\centering
\includegraphics[width=.45\linewidth]{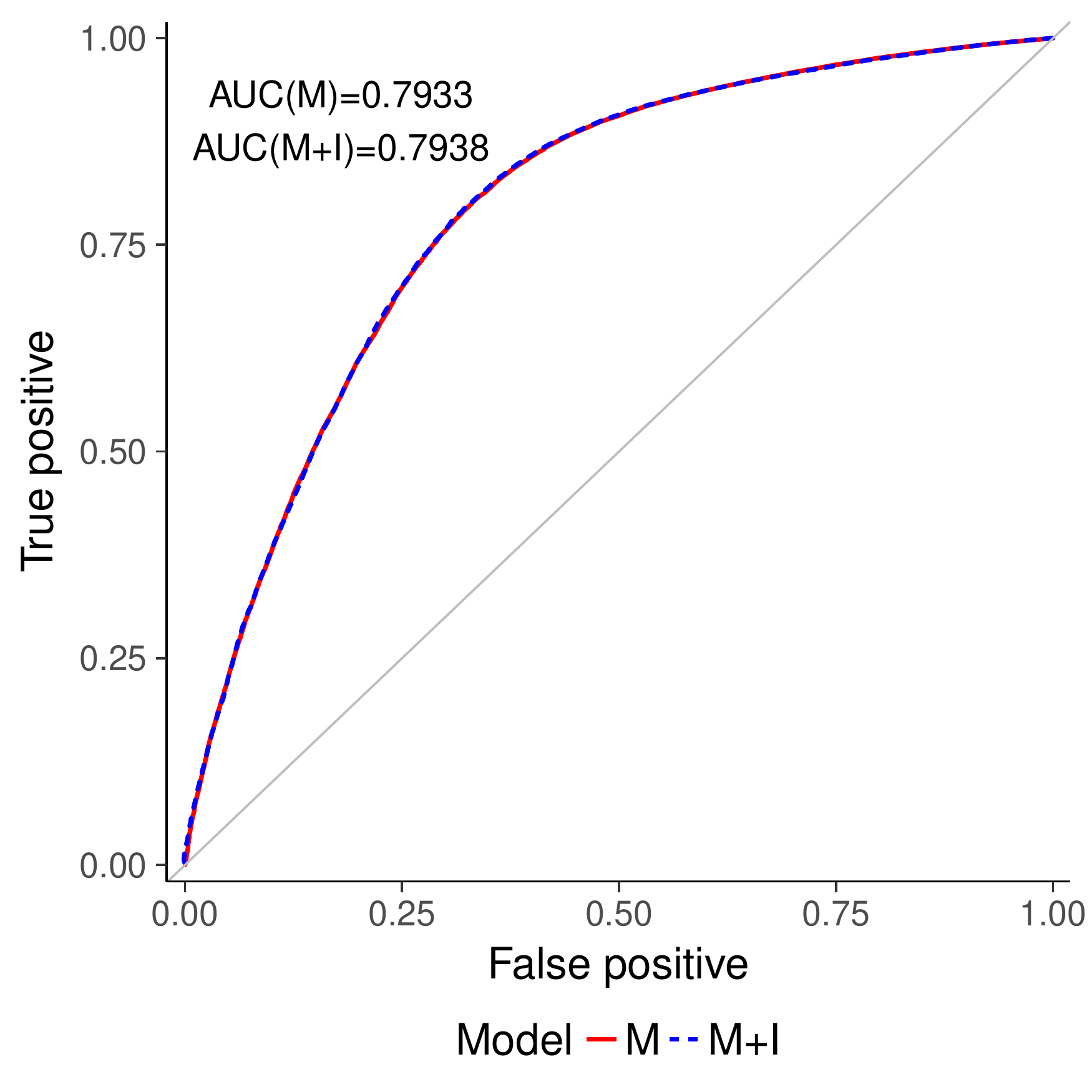}
\caption{ROC curves of main effects~(M) and two-way interaction model~(M+I) with $\lambda $ chosen based on 10-fold cross validation.}
\label{fig:AUC}
\end{figure}

As Fig.~\ref{fig:AUC} shows, we have a significant improvement in AUC over  random assignment~(AUC=$.5$) in both the main effect~(AUC=$.7933$) and two-way interaction~(AUC=$.7938$) models. The performances of the two models in terms of AUC values are very similar at their best $\lambda$ values chosen by 10-fold cross validation. This is not very surprising as only a small number of two-way 
interactions are observed in the experiments.

We also examined the stability of the selected features for both models as the training data changes. Following the methodology of ~\cite{kalousis_stability_2007}, we measure similarity between two subsets of features $s,s'$ using $S_S(s,s')$ defined as
$S_S(s, s') := 1- \dfrac{|s|+|s'|-2|s \cap s'|}{|s|+|s'|-|s \cap s'|}$.
$S_S$ takes values in $[0,1]$, where $0$ means that there is no overlap between the two sets, and $1$ that the two sets are identical. 
% For each of the cross-validation training folds, we obtain a set of selected features. 
$S_s$ is computed for each pair of two training folds~(i.e. we have $\frac{9\cdot10}{2}$ pairs) using selected features and computed values are finally averaged over all pairs. Feature selection turned out to be very stable across all tuning parameter $\lambda$ values: on average we had about $95$\% overlap of selection in main effect model~(M) and about $98$\% overlap in main effect+interaction model~(M+I). Stability score is higher in the latter model since we do a feature selection on groups, whose number is much less than individual variables~($1320$ groups versus $3076$ individual variables). 

\begin{table}[htbp]
\centering
\scalebox{0.85}{
\begin{tabular}{ccccc}
  \hline
 & 1st Qu. & Median & Mean & 3rd Qu. \\ 
  \hline
M & 93.3\% & 94.9\% & 94.9\% & 96.8\% \\ 
  M+I & 97.9\% & 98.8\% & 98.4\% & 99.3\% \\ 
   \hline
\end{tabular}}\caption{Summary of stability scores across all tuning parameter $\lambda$ values }
\end{table}

\subsection{Scientific validation: Designed BGL sequence}
Finally we provide a scientific validation of the mutations estimated by our PUlasso algorithm. In particular, we fit the model with the PUlasso algorithm and selected the best $\lambda=0.0001$ based on the 10-fold cross validation. We use the top 10 mutations based on the largest size of coefficients with positive signs from our PUlasso algorithm because we are interested in mutations that enhance the performance of the sequence. Dr. Romero's lab designed the BGL sequence with the $10$ positive mutations from  Table~\ref{tab:10poscoefs}. This sequence was synthesized, expressed, and assayed for its hydrolytic activity. Hence the designed sequence has $10$ mutations compared to the wild-type~(base) BGL sequence. 

Figure~\ref{Fig:Sci} shows firstly that the designed protein sequence folds which in itself is remarkable given that $10$ positions are mutated. Secondly Figure~\ref{Fig:Sci} shows that the designed sequence decomposes disaccharides into glucose more quickly than the wild-type sequence. These promising results suggest that our variable selection method is able to identify positions of the wild-type sequences with improved functionality. 

\begin{table}[htbp]
\begin{minipage}[b]{.4\linewidth}
\centering
\begin{tabular}{rr}
	\hline
	\multicolumn{2}{l}{Base/Position/Mutated}\\
	\hline
	T197P & E495G \\ 
	K300P & A38G \\ 
	G327A & S486P \\ 
	A150D & T478S \\ 
	D164E & D481N \\ 
	\hline
\end{tabular}
\caption{Ten positive mutations}
\label{tab:10poscoefs}
\end{minipage}\hfill
\begin{minipage}[b]{.6\linewidth}
\centering
\includegraphics[width=0.7\linewidth]{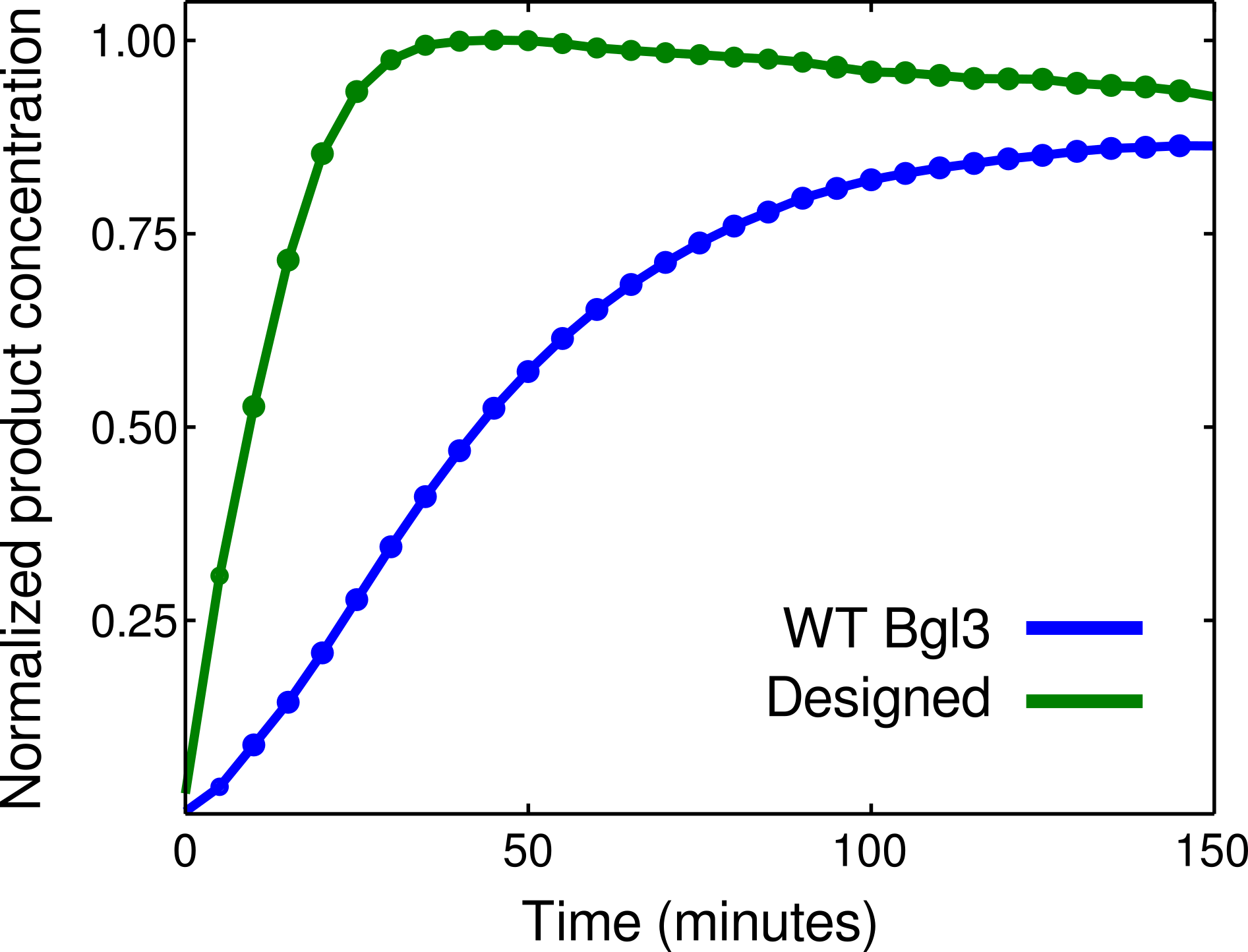}
\captionof{figure}{kinetics}
\label{Fig:Sci}
\end{minipage}
\caption*{10 positive mutations used in the lab(Base state/Position/Mutated state) and kinetics of designed BGL enzyme versus wild-type~(WT) BGL sequence. The designed BGL enzyme based on mutations from Table~\ref{tab:10poscoefs} displays faster kinetics than the WT BGL sequence.}
\end{table}
\section{Conclusion}
In this paper we developed the PUlasso algorithm for both variable selection and classification for high-dimensional classification with presence-only responses. Theoretically, we showed that our algorithm converges to a stationary point and every stationary point within a local neighborhood of $\theta^*$ achieves an optimal mean squared error~(up to constant). We also demonstrated that our algorithm performs well on both simulated and real data. In particular, our algorithm produces more accurate results than the existing techniques in simulations and performs well on a real biochemistry application.
\bibliographystyle{plainnat}
\bibliography{PUlasso}
% \printbibliography
% \printbibliography[heading=subbibliography]
% \end{refsection}
% copy from here to merge supp into main body
 \newpage
\setcounter{section}{0}
\setcounter{assumption}{0}
\setcounter{proposition}{0}
\setcounter{table}{0}
\setcounter{figure}{0}
\setcounter{page}{1}
\setcounter{equation}{0}
\renewcommand{\theequation}{S\arabic{equation}}
\renewcommand{\thefigure}{S\arabic{figure}}
\renewcommand{\thesection}{S\arabic{section}}
\renewcommand{\theassumption}{S\arabic{assumption}}
%%%
\begin{center}
{\Large\bf SUPPLEMENTARY MATERIAL}
\end{center}
% \begin{refsection}
\section{Proofs for results in Section 2}\label{supp_sec:Section 2}
\subsection{Proof of Proposition 2.1}
\label{supp_sec:pf_prop2.1}
We prove (i) in Proposition 2.1 for both Algorithm 1 and 2.
First we define $Q,\widetilde{Q},H$ as follows:
\begin{align*}
Q(\theta ;\theta^m) &:= n^{-1}E_{\theta^m}[\log L_f (\theta) | z_1^n, x_1^n] \\
\widetilde{Q}(\theta ;\theta^m) &:= -Q(\theta;\theta^m)+P_\lambda(\theta)\\
H(\theta; \theta^m) &:= n^{-1} E_{\theta^m}[\log \mathbb{P}_{\theta}( y_1^n|z_1^n,x_1^n) | z_1^n,x_1^n].
\end{align*} 
Note that for any $ \theta^m $, $\mathscr{F}_n(\theta) = \widetilde{Q}(\theta ;\theta^m)+H(\theta;\theta^m)$ holds and $ H(\theta^m;\theta^m)  \geq H(\theta; \theta^m)$ by Jensen's inequality. Also since $ \theta^{m+1} $ is a minimizer of  $ \widetilde{Q}(\theta ;\theta^m) $, we have
\begin{equation}\label{eq:non-incr}
\mathscr{F}_n(\theta^{m+1})=\widetilde{Q}(\theta^{m+1} ;\theta^m)+H(\theta^{m+1};\theta^m)\leq \widetilde{Q}(\theta^{m} ;\theta^m)+H(\theta^{m};\theta^m) = \mathscr{F}_n(\theta^m).
\end{equation}
To show that the inequality is strict, it suffices to show that if $ \theta^m \not\in \mathscr{S} $, $\theta^m$ is not a stationary point of $\widetilde{Q}$. Since $ \theta^m \not\in \mathscr{S} $, there exists $\theta'$ such that 
\begin{equation}\label{eq:pf_prop1}
	\triangledown\mathscr{F}_n(\theta^m)^T(\theta'-\theta^m)<0,\forall \triangledown\mathscr{F}_n(\theta^m) \in \partial \mathscr{F}_n(\theta^m)
\end{equation}
% We have $\triangledown\mathscr{F}_n(\theta^m) = \triangledown \widetilde{Q}(\theta^m;\theta^m)+ \triangledown H(\theta^m ; \theta^m).$
Since $ \theta^m $ is a maximizer of $H(\cdot; \theta^m)$, $ \triangledown H(\theta^m ; \theta^m)=0$. Then $\partial \mathscr{F}_n(\theta^m) = \partial \widetilde{Q}(\theta^m;\theta^m)$. Thus by~\eqref{eq:pf_prop1}, $ \theta^m $ is not a stationary point of $ \widetilde{Q}(\cdot; \theta^m) $. 
%Since $ \theta^{m+1} $ minimizes  $ \widetilde{Q}(\cdot ;\theta^m) $, $ \widetilde{Q}(\theta^{m+1} ;\theta^m)<\widetilde{Q}(\theta^{m} ;\theta^m) $ which follows from~\eqref{non-incr}.

For Algorithm 2 (PUlasso algorithm), since $\overline{Q}$ is a surrogate function of $Q$ which satisfies following two properties
\begin{equation}\label{supp_eq:surrogateProperties}
\overline{Q}(\theta^m;\theta^m) = Q(\theta^m;\theta^m), \quad \overline{Q}(\theta;\theta^m) \leq Q(\theta;\theta^m), \forall \theta
\end{equation}
and $ \theta^{m+1} $ is a minimizer of $ -\overline{Q}(\theta ;\theta^m)+P_\lambda(\theta) $, we have
\begin{align*}
    \mathscr{F}_n(\theta^m) &=-Q(\theta^{m} ;\theta^m)+P_\lambda(\theta^{m})+H(\theta^{m};\theta^m)\\
    &= -\overline{Q}(\theta^{m} ;\theta^m)+P_\lambda(\theta^{m})+H(\theta^{m};\theta^m) \\
    &\geq -\overline{Q}(\theta^{m+1} ;\theta^m)+P_\lambda(\theta^{m+1})+H(\theta^{m};\theta^m) \\
    &\geq-Q(\theta^{m+1} ;\theta^m)+P_\lambda(\theta^{m+1})+H(\theta^{m+1};\theta^m)=\mathscr{F}_n(\theta^{m+1})
\end{align*}
The strict inequality follows from the fact that $ \triangledown Q(\theta^m;\theta^m) = \triangledown \overline{Q} (\theta^m;\theta^m) $.

Now we address (ii) and (iii) in Proposition 2.1. Using the same argument as in \citeSupp{wu_convergence_1983}, we appeal to the global convergence theorem stated below as Theorem~\ref{thm:globalConvThm} in~\citeSupp{zangwill_nonlinear_1969} with $ \Gamma = \mathscr{S}, \alpha = \mathscr{F}_n$, and letting $A$ be a mapping from $\theta^m$ to $\theta^{m+1}$ defined by Algorithm 1 or 2. As stated in \citeSupp{wu_convergence_1983}, condition (iii) in Theorem~\ref{thm:globalConvThm} follows from the continuity of $ -Q(\theta,\theta')+P_\lambda(\theta) $ or $-\bar{Q}(\theta;\theta')+P_\lambda(\theta)$ in both $ \theta,\theta' $. Therefore, if we show that 
$\widetilde{\Theta_0}$ is compact, both (ii) and (iii) follow from the fact that $(\theta^m)_{m=0}^{\infty} $ lie in a compact set. Since $\widetilde{\Theta_0} \subseteq \mathbb{R}^p$ it suffices to show that $\widetilde{\Theta_0}$ is closed and bounded in $\mathbb{R}^p$. $\widetilde{\Theta_0}$ is bounded since $\mathscr{F}_n(\theta) \rightarrow \infty$ whenever $\norm{\theta}_2 \rightarrow \infty$ since $ \norm{\theta}_{\mathscr{G},2,1} \geq \min_j w_j \norm{\theta}_2\rightarrow \infty$. For closedness of the set, consider $(\theta_k)_{k\geq 1}$ such that $\theta_k \in \widetilde{\Theta_0}$ and $\theta_k \rightarrow \theta'$. We have $\mathscr{F}_n(\theta_k) \leq \mathscr{F}_n(\theta_{null})$ for all $k$. Then by the continuity of $\mathscr{F}_n$, $\mathscr{F}_n(\theta') \leq \mathscr{F}_n(\theta_{null})$ thus $\theta' \in \widetilde{\Theta_0}$. 

\begin{theorem}[Global Convergence Theorem,~\citeSupp{zangwill_nonlinear_1969}]
\label{thm:globalConvThm}  Let the sequence $ \{x_k\}_{k=0}^\infty $ be generated by $ x_{k+1} \in A(x_k) $, where $ A $ is a point-to-set map on $ X $. Let a solution set $ \Gamma \in X$ be given, and suppose that:
\begin{enumerate}
\item[(i)] The sequence $ \{x_k\}_{k=0}^\infty \subset S$  for $  S \subset X $ a compact set.
\item[(ii)] There is a continuous function $ \alpha $ on $ X $ such that (a) if $ x \not\in \Gamma $, then $ \alpha(y)<\alpha(x) $ for all $ y \in A(x) $. (b) if $ x \in \Gamma $, then $ \alpha(y)\leq \alpha(x) $ for all $ y \in A(x) $. 
\item[(iii)] The mapping A is closed at all points of $ X \setminus  \Gamma $. 
\end{enumerate}
Then all the limit points of any convergent subsequence of $ (x_k)_{k=0}^\infty $ are in the solution set $ \Gamma $ and $ \alpha(x_k) $ converges monotonically to $ \alpha(x) $ for some $ x \in \Gamma $.
\end{theorem}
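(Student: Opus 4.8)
The plan is to work within the majorization--minimization decomposition and then invoke Zangwill's Global Convergence Theorem (stated as the final theorem of the excerpt). First I would introduce the three functions $Q(\theta;\theta^m) := n^{-1}E_{\theta^m}[\log L_f(\theta)\mid z_1^n,x_1^n]$, $H(\theta;\theta^m) := n^{-1}E_{\theta^m}[\log \mathbb{P}_\theta(y_1^n\mid z_1^n,x_1^n)\mid z_1^n,x_1^n]$, and $\widetilde{Q}(\theta;\theta^m) := -Q(\theta;\theta^m)+P_\lambda(\theta)$, and record the exact decomposition $\mathscr{F}_n(\theta) = \widetilde{Q}(\theta;\theta^m)+H(\theta;\theta^m)$, valid for every fixed $\theta^m$. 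The two analytic facts that drive part~(i) are: (a) Jensen's inequality, which gives $H(\theta^m;\theta^m)\ge H(\theta;\theta^m)$ for all $\theta$ (equivalently, nonnegativity of a conditional KL divergence); and (b) the defining property of the M-step, namely that $\theta^{m+1}$ minimizes $\widetilde{Q}(\cdot;\theta^m)$ for Algorithm~\ref{alg:em}.

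For the monotonicity I would chain these: $\mathscr{F}_n(\theta^{m+1}) = \widetilde{Q}(\theta^{m+1};\theta^m)+H(\theta^{m+1};\theta^m) \le \widetilde{Q}(\theta^m;\theta^m)+H(\theta^m;\theta^m) = \mathscr{F}_n(\theta^m)$. For Algorithm~\ref{alg:mm} the same scheme applies with the quadratic surrogate $\overline{Q}$ inserted: using the tangency $\overline{Q}(\theta^m;\theta^m)=Q(\theta^m;\theta^m)$, the minimizing property of $\theta^{m+1}$ for $-\overline{Q}(\cdot;\theta^m)+P_\lambda(\cdot)$, and the global lower bound $\overline{Q}(\theta;\theta^m)\le Q(\theta;\theta^m)$, one obtains the identical conclusion. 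For the strict decrease when $\theta^m\notin\mathscr{S}$, I would use that $\theta^m$ maximizes $H(\cdot;\theta^m)$, so $\triangledown H(\theta^m;\theta^m)=0$ and hence $\partial\mathscr{F}_n(\theta^m)=\partial\widetilde{Q}(\theta^m;\theta^m)$ (for the surrogate, $\triangledown Q(\theta^m;\theta^m)=\triangledown\overline{Q}(\theta^m;\theta^m)$ plays the same role); the hypothesis $\theta^m\notin\mathscr{S}$ then exhibits a feasible descent direction for the surrogate objective, so its minimizer must lower it strictly.

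For parts~(ii)--(iii) I would verify the three hypotheses of Zangwill's Global Convergence Theorem with solution set $\Gamma=\mathscr{S}$, monotone functional $\alpha=\mathscr{F}_n$, and the (point-to-set) algorithmic map $A:\theta^m\mapsto\theta^{m+1}$. Condition~(ii) of that theorem is exactly our part~(i). Condition~(i) requires a compact invariant set: by our part~(i) every iterate stays in $\widetilde{\Theta_0}$, and I would show $\widetilde{\Theta_0}$ is compact --- closed as a sublevel set of the continuous $\mathscr{F}_n$, and bounded because $\mathscr{L}_n$ is bounded below while $P_\lambda(\theta)=\lambda\norm{\theta}_{\mathscr{G},2,1}\ge\lambda(\min_j w_j)\norm{\theta}_2\to\infty$ forces $\mathscr{F}_n$ to be coercive. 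Condition~(iii), closedness of $A$ off $\Gamma$, I would obtain from the joint continuity of the surrogate $-Q(\theta;\theta')+P_\lambda(\theta)$ (resp. $-\overline{Q}(\theta;\theta')+P_\lambda(\theta)$) in both arguments, exactly as in the classical EM convergence argument. The theorem then yields that every limit point lies in $\mathscr{S}$ and that $\mathscr{F}_n(\theta^m)$ decreases monotonically to $\mathscr{F}_n(\widetilde{\theta})$ for some $\widetilde{\theta}\in\mathscr{S}$, which is~(ii); part~(iii) follows since the iterates lie in the compact $\widetilde{\Theta_0}$, so a convergent subsequence --- hence a limit point --- exists, and by~(ii) it must be stationary.

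I expect the main obstacle to be condition~(iii), the closedness of the algorithmic map, because the M-step argmin need not be a singleton, so $A$ is genuinely point-to-set and one cannot argue via ordinary continuity of a single-valued map; the resolution is to treat $A$ as a set-valued correspondence and deduce closedness from continuity of the surrogate together with the fact that the argmins of a jointly continuous, coercive function form a closed correspondence. A secondary subtlety is the strict-decrease step, where the nonsmooth group penalty forces one to work with the subdifferential identity $\partial\mathscr{F}_n(\theta^m)=\partial\widetilde{Q}(\theta^m;\theta^m)$ rather than with gradients, using the convention for $\partial(f+g)$ fixed in the notation section.
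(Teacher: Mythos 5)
Your proposal does not prove the statement in question. The statement is Zangwill's Global Convergence Theorem itself (Theorem~\ref{thm:globalConvThm}), which the paper imports as a citation from \citeSupp{zangwill_nonlinear_1969} and never proves; what you have written is instead a proof of Proposition~2.1 of the paper (descent and stationarity of the regularized EM and PUlasso iterates), and your opening move is to ``invoke Zangwill's Global Convergence Theorem.'' As a proof of that theorem this is circular: you assume exactly the conclusion you were asked to establish. All of the content of your write-up --- the decomposition $\mathscr{F}_n(\theta)=\widetilde{Q}(\theta;\theta^m)+H(\theta;\theta^m)$, Jensen's inequality, the surrogate tangency and minimization properties, compactness of $\widetilde{\Theta_0}$, closedness of the argmin correspondence --- is material for \emph{verifying the hypotheses} of the theorem in the paper's application (and indeed it reproduces the paper's proof of Proposition~2.1 quite faithfully, including the subdifferential identity $\partial\mathscr{F}_n(\theta^m)=\partial\widetilde{Q}(\theta^m;\theta^m)$ and the coercivity argument for $\widetilde{\Theta_0}$), but none of it bears on why hypotheses (i)--(iii) imply the stated conclusion.

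A genuine proof of the theorem is a short compactness argument with no EM-specific content. By hypothesis (ii), the sequence $\alpha(x_k)$ is non-increasing; since $\{x_k\}\subset S$ with $S$ compact and $\alpha$ continuous, $\alpha$ is bounded below on $S$, so $\alpha(x_k)\downarrow \alpha^*$ for some finite $\alpha^*$. Let $x_{k_j}\rightarrow \bar{x}$ be any convergent subsequence; continuity gives $\alpha(\bar{x})=\alpha^*$. Suppose $\bar{x}\notin\Gamma$. By compactness, pass to a further subsequence along which $x_{k_j+1}\rightarrow \bar{y}$. Hypothesis (iii) --- closedness of $A$ at $\bar{x}$, which for a point-to-set map means precisely that $x_{k_j}\rightarrow\bar{x}$, $y_{k_j}\in A(x_{k_j})$, $y_{k_j}\rightarrow\bar{y}$ imply $\bar{y}\in A(\bar{x})$ --- yields $\bar{y}\in A(\bar{x})$, and then (ii)(a) forces $\alpha(\bar{y})<\alpha(\bar{x})=\alpha^*$. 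But $\alpha(x_{k_j+1})\rightarrow\alpha^*$ since the full sequence $\alpha(x_k)$ converges to $\alpha^*$, so continuity gives $\alpha(\bar{y})=\alpha^*$, a contradiction. Hence every limit point lies in $\Gamma$, and $\alpha(x_k)$ converges monotonically to $\alpha(\bar{x})$ for any such limit point $\bar{x}\in\Gamma$, at least one of which exists by compactness of $S$. Note that the ``main obstacle'' you anticipated (closedness of the algorithmic map when the argmin is not a singleton) is a hypothesis here, not something to be proved; it only becomes a proof obligation in the application to Proposition~2.1, where the paper discharges it via continuity of $-Q(\theta;\theta')+P_\lambda(\theta)$ in both arguments.
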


%%%
\section{Proofs for results in Section 3}\label{supp_sec:Section 3}
%%%
\subsection{Derivation of the log-likelihood in the form of GLMs}\label{supp_sec:pf_glm}

\begin{align*}
\log L(\theta;x,z,s=1) &= \log \left(\prod_i \mathbb{P}_\theta(z_i|x_i,s_i=1)\right)\\
& = \sum_i z_i \log \mathbb{P}_\theta(z_i=1|x_i,s_i=1)+(1-z_i) \log \mathbb{P}_\theta(z_i=0|x_i,s_i=1)\\
& = \sum_i z_i \log \dfrac{\mathbb{P}_\theta(z_i=1|x_i,s_i=1)}{\mathbb{P}_\theta(z_i=0|x_i,s_i=1)}+\log \mathbb{P}_\theta(z_i=0|x_i,s_i=1).
\end{align*}

From Lemma 2.1, we have
$\mathbb{P}_\theta(z=1|x,s=1) = \dfrac{\frac{n_l}{\pi n_u} e^{\theta^Tx}}{1+(1+\frac{n_l}{\pi n_u})e^{\theta^Tx}}$.
Then,
\begin{align*}
\log \dfrac{\mathbb{P}_\theta(z=1|x,s=1)}{\mathbb{P}_\theta(z=0|x,s=1)} &=
\log \dfrac{\frac{n_l}{\pi n_u} e^{\theta^Tx}}{1+e^{\theta^Tx}} = \log \frac{n_l}{\pi n_u} +\theta^Tx -\log(1+e^{\theta^Tx}).
\end{align*}	 
and, 
\begin{align*}
\log \mathbb{P}_\theta(z=0|x,s=1)&= - \log \left(\dfrac{1+(1+\frac{n_l}{\pi n_u} )e^{\theta^Tx}}{1+e^{\theta^Tx}}\right)= - \log \left(1+ \dfrac{\frac{n_l}{\pi n_u} e^{\theta^Tx}}{1+e^{\theta^Tx}}\right) \\
&= - \log \left(1+ e^{\log \frac{n_l}{\pi n_u}+\theta^Tx- \log(1+e^{\theta^Tx})}\right).
\end{align*}
Therefore we obtain,
\begin{align*}
  \log \left(\prod_i \mathbb{P}_\theta(z_i|x_i,s_i=1)\right) 
  &= \sum_i z_i \eta_i - \log (1+e^{\eta_i})
    \end{align*}
where $\eta_i = \log \frac{n_l}{\pi n_u}+\theta^Tx- \log(1+e^{\theta^Tx}).$ 

%%%
\subsection{Useful inequalities and technical lemmas}\label{supp_sec:lemmas}
In this section, we provide some results that will be useful for our proofs. First we state the symmetrization inequality, which shows relationships between empirical and Rademacher processes.

\begin{theorem}\label{thm:symmetrization}(Symmetrization theorem[\citeSupp{van_der_vaart_weak_1996}])
Let $ U_1,\dots, U_n $ be independent random variables with values in $ \mathscr{U} $ and  $ (\epsilon_i) $ be an i.i.d. sequence of Rademacher variables, which take values $ \pm 1 $ each with probability 1/2. Let $ \Gamma $ be a class of real-valued functions on $ \mathscr{U} $. then
\[ E\left( \sup_{\gamma \in \Gamma} \left\lvert \sum_{i=1}^{n} \{\gamma(U_i) - E(\gamma(U_i))\}\right\rvert \right)  \leq 2E\left(\sup_{\gamma \in \Gamma}\left\lvert\sum_{i=1}^{n}\epsilon_i \gamma(U_i)\right\rvert\right).\]
\end{theorem}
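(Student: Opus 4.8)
The plan is to prove the inequality by the classical \emph{ghost-sample} (second-sample) argument: introduce an independent copy of the data, rewrite the centering as an expectation over this copy, and then exploit the symmetry induced by the i.i.d.\ structure to insert the Rademacher signs. Each analytic step is a one-line Jensen or triangle-inequality estimate; the content lies in the symmetrization bijection.

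First I would introduce a ghost sample $U_1', \dots, U_n'$, independent of $U_1, \dots, U_n$, with $U_i'$ distributed as $U_i$. Writing $E'$ for the conditional expectation over the ghost sample, the key identity is $E(\gamma(U_i)) = E'(\gamma(U_i'))$, so that
\[
\sum_{i=1}^n \{\gamma(U_i) - E(\gamma(U_i))\} = E'\left(\sum_{i=1}^n \{\gamma(U_i) - \gamma(U_i')\}\right)
\]
for each fixed $\gamma$. Taking the supremum over $\gamma \in \Gamma$ and then the outer expectation, I would apply Jensen's inequality twice — pulling the conditional expectation $E'$ outside the absolute value and then outside the supremum — to obtain
\[
E\left(\sup_{\gamma \in \Gamma}\left\lvert \sum_{i=1}^n \{\gamma(U_i) - E(\gamma(U_i))\} \right\rvert\right) \le E\left(\sup_{\gamma \in \Gamma}\left\lvert \sum_{i=1}^n \{\gamma(U_i) - \gamma(U_i')\} \right\rvert\right),
\]
where the right-hand expectation is now over both $U_1^n$ and the ghost sample.

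Next comes the symmetrization step. Because $U_i$ and $U_i'$ are i.i.d., swapping their roles leaves the joint law of the pair invariant; hence for each fixed sign pattern $(\epsilon_1, \dots, \epsilon_n) \in \{-1,+1\}^n$ the vector $(\epsilon_i(\gamma(U_i) - \gamma(U_i')))_{i=1}^n$ has the same distribution as $(\gamma(U_i) - \gamma(U_i'))_{i=1}^n$, \emph{simultaneously for all} $\gamma$, since the swap is a single measure-preserving transformation of the underlying variables. Consequently the displayed right-hand side is unchanged when each summand is multiplied by $\epsilon_i$; as this holds for every realization of an independent Rademacher sequence $(\epsilon_i)$, I may average over $(\epsilon_i)$ and rewrite it as $E(\sup_{\gamma}\lvert \sum_i \epsilon_i\{\gamma(U_i) - \gamma(U_i')\}\rvert)$. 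A final triangle inequality splits this into a $U$-term and a $U'$-term; since $\sum_i \epsilon_i \gamma(U_i')$ and $\sum_i \epsilon_i \gamma(U_i)$ share the same law, both contribute equally, producing the factor $2$ and the claimed bound.

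The main obstacle is not any single inequality but making the symmetrization step fully rigorous: one must verify that multiplying by $\epsilon_i$ corresponds to a genuine bijective, measure-preserving relabeling of $(U_i, U_i')$ that acts \emph{uniformly} across the whole function class $\Gamma$, and one must handle the measurability of the suprema (typically by passing to outer expectations, or by assuming $\Gamma$ is countable or pointwise separable). These technical points are exactly what the cited result in~\citeSupp{van_der_vaart_weak_1996} supplies, so I would invoke them there rather than re-deriving the measurability machinery.
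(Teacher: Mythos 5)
Your proof is correct and is precisely the classical ghost-sample argument: the paper itself gives no proof of this statement (it is imported as a cited result from van der Vaart and Wellner), and your chain --- Jensen over the ghost expectation, sign-flip invariance via the measure-preserving coordinate-wise swap of $U_i$ and $U_i'$, then the triangle inequality producing the factor $2$ --- is exactly the standard proof in that reference, including the appropriate deferral of measurability of the suprema to outer expectations or a separability assumption on $\Gamma$. One small point to tighten: the theorem assumes the $U_i$ are independent but not necessarily identically distributed, so instead of saying ``$U_i$ and $U_i'$ are i.i.d.''\ you should say each $U_i'$ is an independent copy of $U_i$; your swap argument then goes through verbatim, since exchanging $U_i$ with $U_i'$ in any subset of coordinates is measure-preserving using only independence across $i$ together with $U_i' \overset{d}{=} U_i$ for each $i$.
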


The next theorem is Ledoux-Talagrand contraction theorem. The stated version is Theorem 2.2 in \citeSupp{Koltchinskii2011-he}, which allows $T$ be any subset in $\mathbb{R}^n$, thus slightly more general than the original theorem in \citeSupp{LedTal91} where $T$ needs to be bounded.
\begin{theorem}\label{thm:contraction}(Contraction theorem[\citeSupp{LedTal91}]) Let $T \subset \mathbb{R}^n$ and let $ \varphi_i: \mathbb{R} \rightarrow \mathbb{R}$, $i=1,\dots,n$ be contractions which satisfy $ |\varphi_i(s) - \varphi_i(t)| \leq |s-t| ,s,v \in \mathbb{R}$ and $ \varphi_i(0) =0 $. Let $(\epsilon_i)$ be independent Rademacher random variables. Then 
\[ E\left(\sup_{t\in T} \left| \sum_{i=1}^{n} \epsilon_i \varphi_i(t_i) \right|\right) \leq 2 E\left(\sup_{t\in T}  \left| \sum_{i=1}^{n}\epsilon_i t_i \right|\right). \] 
\end{theorem}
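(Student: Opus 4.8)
The plan is to prove the inequality by the classical coordinate-peeling argument: I would replace the contractions $\varphi_i$ by the identity \emph{one coordinate at a time}, showing that each such replacement can only increase the Rademacher average. The engine of the whole argument is a single-coordinate comparison lemma, and everything else—an induction over the coordinate index and a short desymmetrization to account for the outer absolute value—is bookkeeping layered on top of it.

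The key lemma I would isolate is the following: for any contraction $\varphi:\mathbb{R}\to\mathbb{R}$ and \emph{any} real-valued function $h$ on $T$,
\[
E_{\epsilon}\sup_{t\in T}\bigl[h(t)+\epsilon\,\varphi(t_1)\bigr]\;\le\; E_{\epsilon}\sup_{t\in T}\bigl[h(t)+\epsilon\,t_1\bigr],
\]
where $\epsilon$ is a single Rademacher variable. To prove it I would expand the two-point expectation over $\epsilon=\pm1$ and recombine the two resulting suprema into one supremum over two independent copies $t,t'\in T$, obtaining $\tfrac12\sup_{t,t'}[h(t)+h(t')+\varphi(t_1)-\varphi(t_1')]$ on the left and the analogue with $\varphi$ removed on the right. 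For a fixed pair $(t,t')$ the contraction property gives $\varphi(t_1)-\varphi(t_1')\le|t_1-t_1'|$, and a case analysis on the sign of $t_1-t_1'$—relabeling $(t,t')\mapsto(t',t)$ when $t_1<t_1'$, which leaves the right-hand supremum invariant—bounds $h(t)+h(t')+\varphi(t_1)-\varphi(t_1')$ by $\sup_{s,s'}[h(s)+h(s')+s_1-s_1']$; taking the supremum over $(t,t')$ on the left then yields the lemma. This comparison, where the contraction hypothesis is actually consumed, is the main obstacle; once it is in hand the rest is essentially automatic.

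From the lemma I would obtain the absolute-value-free version $E\sup_t\sum_i\epsilon_i\varphi_i(t_i)\le E\sup_t\sum_i\epsilon_i t_i$ by induction on the number of replaced coordinates. Writing $\Phi_k(t):=\sum_{i\le k}\epsilon_i t_i+\sum_{i>k}\epsilon_i\varphi_i(t_i)$, I would condition on $(\epsilon_i)_{i\ne k}$ so that $h(t):=\sum_{i\le k-1}\epsilon_i t_i+\sum_{i>k}\epsilon_i\varphi_i(t_i)$ is a fixed function of $t$, apply the lemma in coordinate $k$ with the contraction $\varphi_k$, and take the outer expectation. This gives $E\sup_t\Phi_{k-1}\le E\sup_t\Phi_k$, and chaining from $k=1$ to $n$ proves the claim.

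Finally I would restore the absolute value at the cost of the factor $2$. Because $\varphi_i(0)=0$, adjoining the zero vector to $T$ leaves both $\sup_t|\sum_i\epsilon_i\varphi_i(t_i)|$ and $\sup_t|\sum_i\epsilon_i t_i|$ unchanged, so I may assume $0\in T$; then $U:=\sup_t\sum_i\epsilon_i\varphi_i(t_i)$ and $V:=\sup_t\sum_i\epsilon_i(-\varphi_i(t_i))$ are both nonnegative, whence $\sup_t|\sum_i\epsilon_i\varphi_i(t_i)|=\max(U,V)\le U+V$. Taking expectations and applying the absolute-value-free version once to the contractions $\varphi_i$ and once to the contractions $-\varphi_i$ (each still vanishing at $0$) bounds each of $EU$ and $EV$ by $E\sup_t\sum_i\epsilon_i t_i\le E\sup_t|\sum_i\epsilon_i t_i|$, producing the stated factor $2$.
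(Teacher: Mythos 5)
Your proof is correct. Note that the paper does not prove this statement at all: it imports it verbatim from the literature, citing Ledoux--Talagrand and remarking that the stated form is Theorem 2.2 of Koltchinskii's book, so there is no internal proof to compare against. What you have written is precisely the classical argument from those sources: the two-point expansion over $\epsilon=\pm 1$ recombined into a supremum over pairs $(t,t')$, with the sign case analysis on $t_1-t_1'$ (and the relabeling $(t,t')\mapsto(t',t)$) being the one place the contraction hypothesis is consumed; the coordinate-by-coordinate replacement by conditioning on $(\epsilon_i)_{i\neq k}$; and the factor-$2$ reduction via $\sup_t\left\lvert\sum_i\epsilon_i\varphi_i(t_i)\right\rvert=\max(U,V)\le U+V$ after adjoining $0$ to $T$, which is legitimate exactly because $\varphi_i(0)=0$. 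Two small remarks. First, your last step in fact establishes the slightly stronger bound $2\,E\sup_{t\in T\cup\{0\}}\sum_i\epsilon_i t_i$, i.e.\ without the absolute value on the right, which dominates the stated conclusion. Second, regarding the generality the paper explicitly needs (arbitrary, possibly unbounded $T\subset\mathbb{R}^n$, as in Koltchinskii's version rather than the bounded-$T$ original): your argument goes through verbatim, since each supremum is a function of the sign vector $\epsilon\in\{-1,1\}^n$ alone, so the expectation is a finite average over $2^n$ values and the inequalities remain valid with suprema allowed to equal $+\infty$; no measurability or integrability caveats are needed.
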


Finally, we state the bounded differences inequality, also sometimes called as Hoeffding-Azuma inequality.
\begin{theorem}\label{thm:BDI}(Bounded difference inequality[\citeSupp{McDiarmid1989-gy}])
Let $X_1 ,\dots, X_n$ be arbitrary independent random variables on set $A$ and $\varphi:A^n \rightarrow \mathbb{R}$ satisfy
the bounded difference assumption: there exists constants $c_i, i=1,\dots,n$ such that for all $i=1,\dots,n$ and all $x_1,x_2,\dots,x_i,x_i',\dots,x_n$,
\begin{equation*}
    \lvert \varphi(x_1,\dots,x_i,\dots,x_n)-\varphi(x_1,\dots,x_i',\dots,x_n)\rvert \leq c_i
\end{equation*}
Then $\forall t > 0$, 
\begin{equation*}
    \mathbb{P}\left(\varphi(X_1,\dots,X_n)-E[\varphi(X_1,\dots,X_n)] \geq t \right) \leq \exp(-2t^2/\sum_{i=1}^n c_i^2)
\end{equation*}
\end{theorem}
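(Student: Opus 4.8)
The plan is to prove this via the classical Doob-martingale (Azuma-type) argument combined with Hoeffding's lemma. First I would set $Z := \varphi(X_1,\dots,X_n)$, introduce the filtration $\mathcal{F}_i := \sigma(X_1,\dots,X_i)$ with $\mathcal{F}_0$ trivial, and define the Doob martingale $Z_i := E[Z \mid \mathcal{F}_i]$, so that $Z_0 = E[Z]$ and $Z_n = Z$. Writing $D_i := Z_i - Z_{i-1}$ for the martingale differences, the telescoping identity $Z - E[Z] = \sum_{i=1}^n D_i$ together with $E[D_i \mid \mathcal{F}_{i-1}] = 0$ reduces the problem to controlling the moment generating function of a sum of martingale differences.

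The crux is to show that, conditional on $\mathcal{F}_{i-1}$, each $D_i$ lies in an interval of width at most $c_i$. Exploiting the independence of the $X_j$, I would write $E[Z \mid X_1,\dots,X_{i-1},X_i=x] = E[\varphi(X_1,\dots,X_{i-1},x,X_{i+1},\dots,X_n)\mid \mathcal{F}_{i-1}]$, since $(X_{i+1},\dots,X_n)$ is independent of $(X_1,\dots,X_i)$. Denoting this quantity $g_i(x)$, we have $Z_i = g_i(X_i)$, $Z_{i-1}=E[g_i(X_i)\mid\mathcal{F}_{i-1}]$, and hence $L_i \le D_i \le U_i$ with $L_i := \inf_x g_i(x) - Z_{i-1}$ and $U_i := \sup_x g_i(x) - Z_{i-1}$. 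For any fixed $x,x'$ the bounded-difference hypothesis gives $|\varphi(\dots,x,\dots)-\varphi(\dots,x',\dots)|\le c_i$ pointwise, so $g_i(x)-g_i(x')\le c_i$ after taking conditional expectation; taking the supremum over $x,x'$ yields $U_i - L_i \le c_i$. This is the step I expect to be the main obstacle, since it is precisely where independence and the bounded-difference property must be combined carefully.

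With the conditional range controlled, I would invoke Hoeffding's lemma: for a random variable $V$ with $E[V]=0$ taking values in an interval of length $\ell$, one has $E[e^{sV}]\le e^{s^2\ell^2/8}$. Applied conditionally on $\mathcal{F}_{i-1}$ this gives $E[e^{sD_i}\mid\mathcal{F}_{i-1}]\le e^{s^2 c_i^2/8}$. Peeling off the differences one at a time via the tower property,
$$E\bigl[e^{s(Z-E[Z])}\bigr] = E\Bigl[e^{s\sum_{i=1}^{n-1}D_i}\,E[e^{sD_n}\mid\mathcal{F}_{n-1}]\Bigr]\le e^{s^2 c_n^2/8}\,E\bigl[e^{s\sum_{i=1}^{n-1}D_i}\bigr],$$
and iterating yields $E[e^{s(Z-E[Z])}]\le \exp\bigl(s^2\sum_{i=1}^n c_i^2/8\bigr)$. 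Finally, the Chernoff bound $\mathbb{P}(Z-E[Z]\ge t)\le e^{-st}E[e^{s(Z-E[Z])}]$, followed by optimizing over $s>0$ with the choice $s = 4t/\sum_{i=1}^n c_i^2$, produces the exponent $-2t^2/\sum_{i=1}^n c_i^2$, which is exactly the claimed bound.
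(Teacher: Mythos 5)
Your proof is correct: it is the classical Doob-martingale argument with conditional Hoeffding's lemma, which is precisely McDiarmid's original proof of this inequality. The paper itself offers no proof --- it states the result as a known theorem and cites McDiarmid (1989) --- so your argument matches the canonical proof in the cited source, including the key step where independence of $X_i$ from $\mathcal{F}_{i-1}$ lets you bound the conditional range $U_i - L_i \leq c_i$ and the optimization $s = 4t/\sum_{i=1}^n c_i^2$ yielding the exponent $-2t^2/\sum_{i=1}^n c_i^2$.
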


Now we state and prove some useful results about sub-Gaussian and sub-exponential random variables.
\begin{lemma}\label{lem:blk_holder}
Let $v,u \in \mathbb{R}^p$ and $(g_1,\dots,g_J)$ be a partition of $(1,\dots,p)$. For $\mathscr{G} = ((g_1,\dots,g_J), (w_j)_1^J)$ and  $\bar{\mathscr{G}} = ((g_1,\dots,g_J), (w_j^{-1})_1^J)$ such that all $g_j$ are non-empty and $w_j>0$, $|v^Tu| \leq \norm{v}_{\mathscr{G},2,1}\norm{u}_{\bar{\mathscr{G}},2,\infty}$.
\end{lemma}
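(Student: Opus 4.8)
The plan is to prove this as a two-layer Hölder inequality: Cauchy--Schwarz applied coordinatewise within each group, followed by the standard $\ell_1$--$\ell_\infty$ duality across the groups. The weights $w_j$ will be split symmetrically between the two factors so that one side accumulates into $\norm{v}_{\mathscr{G},2,1}$ and the other into $\norm{u}_{\bar{\mathscr{G}},2,\infty}$.

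First I would use that $(g_1,\dots,g_J)$ is a partition of $(1,\dots,p)$ to decompose the inner product block-by-block, writing $v^Tu = \sum_{j=1}^J v_{g_j}^T u_{g_j}$. Applying the triangle inequality and then Cauchy--Schwarz inside each block gives
\begin{equation*}
|v^Tu| \;\leq\; \sum_{j=1}^J \left| v_{g_j}^T u_{g_j}\right| \;\leq\; \sum_{j=1}^J \norm{v_{g_j}}_2 \,\norm{u_{g_j}}_2 .
\end{equation*}

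Next I would insert the weights multiplicatively, rewriting each summand as $\norm{v_{g_j}}_2\norm{u_{g_j}}_2 = \bigl(w_j\norm{v_{g_j}}_2\bigr)\bigl(w_j^{-1}\norm{u_{g_j}}_2\bigr)$, which is legitimate since each $w_j>0$. Bounding the second factor uniformly by its maximum, $w_j^{-1}\norm{u_{g_j}}_2 \le \max_k w_k^{-1}\norm{u_{g_k}}_2 = \norm{u}_{\bar{\mathscr{G}},2,\infty}$, and pulling this constant out of the sum yields
\begin{equation*}
\sum_{j=1}^J \bigl(w_j\norm{v_{g_j}}_2\bigr)\bigl(w_j^{-1}\norm{u_{g_j}}_2\bigr) \;\leq\; \norm{u}_{\bar{\mathscr{G}},2,\infty}\sum_{j=1}^J w_j\norm{v_{g_j}}_2 \;=\; \norm{v}_{\mathscr{G},2,1}\,\norm{u}_{\bar{\mathscr{G}},2,\infty},
\end{equation*}
which is the claimed bound.

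There is no genuine obstacle here: the result is a routine generalized Hölder inequality, and the only point requiring any care is the bookkeeping of the weights, namely ensuring the $w_j$ are distributed so that the grouped $\ell_2$ norms pair up correctly with $w_j$ on the $\norm{\cdot}_{\mathscr{G},2,1}$ side and $w_j^{-1}$ on the dual $\norm{\cdot}_{\bar{\mathscr{G}},2,\infty}$ side. The non-emptiness of the $g_j$ and positivity of the $w_j$ are exactly what is needed to make the splitting and the maximum well defined.
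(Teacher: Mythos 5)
Your proposal is correct and follows essentially the same route as the paper's proof: block decomposition of the inner product, Cauchy--Schwarz within each group, insertion of the $w_j \cdot w_j^{-1}$ factors, and bounding the dual factor by its maximum. The only cosmetic difference is that the paper absorbs the weights into the vectors before applying Cauchy--Schwarz, whereas you apply Cauchy--Schwarz first and insert the weights afterward.
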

\begin{proof}
We note $\norm{v}_{\mathscr{G},2,1} = \sum_{j=1}^J w_j\norm{v_{g_j}}_2$ and $\norm{u}_{\bar{\mathscr{G}},2,\infty}:= \max_{1\leq j \leq J} \norm{w_j^{-1}u_{g_j}}_2.$
By Cauchy-Schwarz inequality, we have
	$$|v^Tu| \leq \sum_{j=1}^J |w_j v_{g_j}^T w_j^{-1}u_{g_j}|\leq \sum_{j=1}^J \norm{w_j v_{g_j}}_2 \norm{w_j^{-1}u_{g_j}}_2.$$
Taking the maximum of the second quantity, 
	$$ |v^Tu|\leq \max_{1\leq j \leq J} \norm{w_j^{-1}u_{g_j}}_2 \sum_{j=1}^J w_j\norm{ v_{g_j}}_2 = \norm{v}_{\mathscr{G},2,1}\norm{u}_{\bar{\mathscr{G}},2,\infty}. $$
\end{proof}

% Expectation of r.v. with subGaussian tail
\begin{lemma}\label{lem:subG_expk}
Let $x \in \mathbb{R}^p$ such that $x^Tv \sim \mbox{subG}(\norm{v}_2^2 \sigma_x^2)$ for any fixed $v \in \mathbb{R}^p$ and $E[x]=0$. For any $i \in (1,\dots,p)$, $k\geq 1$, 
$$E[|x_i|^{k}] \leq k(2\sigma_x^2)^{k/2}\Gamma(k/2).$$
\end{lemma}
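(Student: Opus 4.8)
The plan is to reduce to a one-dimensional sub-Gaussian moment bound and then integrate the tail. First I would specialize the hypothesis to the coordinate direction: taking $v = e_i$, the $i$-th standard basis vector, gives $x_i = x^T e_i \sim \mbox{subG}(\sigma_x^2)$ since $\norm{e_i}_2^2 = 1$, and $E[x_i]=0$ by assumption. So it suffices to prove the claim for a single mean-zero random variable $Y := x_i$ satisfying $E[e^{tY}] \leq e^{t^2\sigma_x^2/2}$ for all $t \in \mathbb{R}$.

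Next I would convert the moment-generating-function bound into a tail bound via a Chernoff argument. For any $t>0$ and $s>0$, Markov's inequality gives $\mathbb{P}(Y > t) \leq e^{-st}E[e^{sY}] \leq e^{-st + s^2\sigma_x^2/2}$, and optimizing over $s$ (taking $s = t/\sigma_x^2$) yields $\mathbb{P}(Y>t) \leq e^{-t^2/(2\sigma_x^2)}$. Applying the same argument to $-Y$, which is also $\mbox{subG}(\sigma_x^2)$, and combining the two gives the two-sided bound $\mathbb{P}(|Y|>t) \leq 2e^{-t^2/(2\sigma_x^2)}$.

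Then I would use the layer-cake representation of the moment, $E[|Y|^k] = \int_0^\infty k t^{k-1}\mathbb{P}(|Y|>t)\,dt$, substitute the tail bound, and evaluate the resulting integral. The change of variables $u = t^2/(2\sigma_x^2)$ turns $\int_0^\infty t^{k-1}e^{-t^2/(2\sigma_x^2)}\,dt$ into $\tfrac{1}{2}(2\sigma_x^2)^{k/2}\int_0^\infty u^{k/2-1}e^{-u}\,du = \tfrac{1}{2}(2\sigma_x^2)^{k/2}\Gamma(k/2)$, so that $E[|Y|^k] \leq 2k \cdot \tfrac{1}{2}(2\sigma_x^2)^{k/2}\Gamma(k/2) = k(2\sigma_x^2)^{k/2}\Gamma(k/2)$, which is exactly the claimed bound.

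There is no genuinely hard step here; the only points requiring care are bookkeeping the factor of $2$ from the two-sided tail against the factor of $\tfrac{1}{2}$ produced by the substitution, which cancel to leave the clean constant, and confirming that the Gamma integral converges for every $k \geq 1$, which it does since $k/2 > 0$.
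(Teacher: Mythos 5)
Your proof is correct and follows essentially the same route as the paper's: specialize to $v=e_i$, apply the standard two-sided sub-Gaussian tail bound, and integrate the tail via a layer-cake identity and the substitution reducing to $\Gamma(k/2)$. The only cosmetic difference is that you write the moment as $\int_0^\infty k t^{k-1}\mathbb{P}(|Y|>t)\,dt$ while the paper uses $\int_0^\infty \mathbb{P}(|x_i|\geq s^{1/k})\,ds$; these are equivalent under $s=t^k$ and yield the identical constant.
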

\begin{proof}
Taking $v = e_i$ where $e_i$ is an $i$th coordinate vector, we have $E(\exp(tv^Tx)) = E[\exp(tx_i)]\leq \exp( t^2\sigma_x^2/2) $ for $t \in \mathbb{R}$. Then following a standard argument for sub-Gaussian random variables,
\begin{align*}
    E[|x_i|^{k}] & = \int_{s=0}^\infty \mathbb{P}(|x_i| \geq s^{1/k}) ds\\
    &\leq 2 \int_{s=0}^\infty \exp(-s^{2/k}/2\sigma_x^2) ds\\
    &= k(2\sigma_x^2)^{k/2} \int_{s=0}^\infty e^{-u} u^{k/2-1}du = k(2\sigma_x^2)^{k/2}\Gamma(k/2)
\end{align*}
where the third inequality comes from the change of variable $u = s^{2/k}/2\sigma_x^2$.
\end{proof}

The next lemma concerns distribution of $x\circ x = [x_1^2,\dots,x_s^2]$ for independent sub-Gaussian $(x_i)_{i=1}^s$.
\begin{lemma}\label{lem:subG_subExp}
Let $x \in \mathbb{R}^s$ such that $x^Tv \sim \mbox{subG}(\norm{v}_2^2 \sigma_x^2)$ for any fixed $v \in \mathbb{R}^s$ and $E[x]=0$. Also, assume $(x_i)_{i=1}^s$ are independent. Then we have $v^T (x \circ x)  \sim \mbox{subExp}(\nu,b)$ with $\nu =16\sigma_x^2\norm{v}_2$, $b=16\sigma_x^2 \norm{v}_\infty$ for any fixed $v \in \mathbbm{R}^s$.
\end{lemma}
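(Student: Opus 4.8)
The plan is to exploit the independence of the coordinates to reduce to a one-dimensional bound and then tensorize. Writing $Y_i := x_i^2 - E[x_i^2]$, we have $v^T(x\circ x) - E[v^T(x\circ x)] = \sum_{i=1}^s v_i Y_i$, and since the $x_i$ (hence the $Y_i$) are independent, the moment generating function factorizes as $E[\exp(t\sum_i v_i Y_i)] = \prod_{i=1}^s E[\exp(t v_i Y_i)]$. It therefore suffices to establish a coordinate-wise bound of the form $E[\exp(s Y_i)] \leq \exp(s^2 \nu_0^2/2)$, valid for $|s| \leq 1/b_0$, with $\nu_0 = b_0 = 16\sigma_x^2$; substituting $s = t v_i$ and multiplying then gives $\nu^2 = \nu_0^2 \sum_i v_i^2 = (16\sigma_x^2)^2\|v\|_2^2$, and the per-factor constraint $|t v_i| \leq 1/b_0$ holds for all $i$ exactly when $|t| \leq 1/(16\sigma_x^2\|v\|_\infty)$, matching the claimed $b = 16\sigma_x^2\|v\|_\infty$.

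For the single-coordinate bound I would first control the absolute moments of $Y_i$. Taking $v = e_i$ in the hypothesis shows each $x_i$ is $\mbox{subG}(\sigma_x^2)$, so Lemma~\ref{lem:subG_expk} (applied with $2k$ in place of $k$) gives $E[|x_i|^{2k}] \leq 2k(2\sigma_x^2)^k\Gamma(k)$. Combining the elementary inequality $|a-b|^k \leq 2^{k-1}(|a|^k + |b|^k)$ with Jensen's inequality $(E[x_i^2])^k \leq E[x_i^{2k}]$ yields $E[|Y_i|^k] \leq 2^k E[x_i^{2k}] \leq 2k(4\sigma_x^2)^k\Gamma(k)$.

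Then I would expand the centered MGF as a power series. Since $E[Y_i]=0$, and since $\frac{s^k}{k!}E[Y_i^k] \leq \frac{|s|^k}{k!}E[|Y_i|^k]$, we have $E[\exp(sY_i)] = 1 + \sum_{k\geq 2}\frac{s^k}{k!}E[Y_i^k] \leq 1 + \sum_{k\geq 2}\frac{|s|^k}{k!}E[|Y_i|^k]$. The key simplification is that the combinatorial factor collapses: using $\Gamma(k)=(k-1)!$ the $k$-th summand is at most $\frac{|s|^k}{k!}\,2k(4\sigma_x^2)^k(k-1)! = 2(4\sigma_x^2|s|)^k$, so the tail is the geometric series $2\sum_{k\geq 2}(4\sigma_x^2|s|)^k$. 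Restricting to $4\sigma_x^2|s| \leq 1/2$ bounds this by $4(4\sigma_x^2|s|)^2 = 64\sigma_x^4 s^2$, so that $E[\exp(sY_i)] \leq 1 + 64\sigma_x^4 s^2 \leq \exp(64\sigma_x^4 s^2) \leq \exp\!\big(s^2 (16\sigma_x^2)^2/2\big)$, which is exactly the desired coordinate-wise $\mbox{subExp}$ bound with $\nu_0 = 16\sigma_x^2$.

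The only genuinely delicate point is the constant bookkeeping. I must check that the $s$-range produced by the geometric-series step, namely $|s| \leq 1/(8\sigma_x^2)$, contains the advertised range $|s| \leq 1/b_0 = 1/(16\sigma_x^2)$, and that the exponent $64\sigma_x^4$ is dominated by $\nu_0^2/2 = 128\sigma_x^4$; both hold with room to spare, so the stated constants are safely valid (indeed somewhat generous). Tensorizing via the factorization above then closes the argument, with no further obstacle, since the product of the coordinate-wise exponential bounds is again of sub-exponential form with the claimed $(\nu, b)$.
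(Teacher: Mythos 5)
Your proposal is correct and follows essentially the same route as the paper's proof: factorize the moment generating function by independence, bound the centered moments of $x_i^2$ via Lemma~\ref{lem:subG_expk} together with the convexity inequality $|a-b|^k \leq 2^{k-1}(|a|^k+|b|^k)$ and Jensen, collapse the Taylor series of the per-coordinate MGF into a geometric series on the range $4\sigma_x^2|s|\leq 1/2$, and tensorize to obtain $\nu = 16\sigma_x^2\norm{v}_2$ and $b = 16\sigma_x^2\norm{v}_\infty$. The only difference is presentational (you isolate the one-dimensional bound before tensorizing, whereas the paper carries the product throughout), and your constant bookkeeping checks out.
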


\begin{proof}
Let $z := x \circ x - E[x \circ x]$. For any given $v \in \mathbb{R}^s$ and $t>0$,
\begin{align*}
E[\exp(tv^Tz)] &= E[\exp(tv_1z_1+\dots tv_sz_s)]\\
&= \prod_{i=1}^s E[\exp(tv_iz_i)]
\end{align*}
% = \prod_{i=1}^s E[\exp(tv_i(x_i^2-E[x_i^2]))]
where we use independence. Then by Taylor series expansion,
\begin{align*}
E[\exp(tv^Tz)] &=\prod_{i=1}^sE\left(1+tv_iz_i +\frac{t^2 (v_iz_i)^2}{2}+\dots \right)\\
&=\prod_{i=1}^s \left(1+\sum_{k=2}^\infty \dfrac{t^k E\left(v_i(x_i^2 - E[x_i^2])\right)^k}{k!}\right)
\end{align*}

By Jensen's inequality, we have,
$$E(v_i x_i^2 - E[v_ix_i^2])^k \leq |v_i|^k 2^{k-1}(E[x_i^{2k}] + E[x_i^2]^k),$$
and by applying Jensen's inequality again, we get
\begin{align}\label{lems2.3:s42}
E[\exp(tv^Tz)] \leq \prod_{i=1}^s \left(1+\sum_{k=2}^\infty \dfrac{t^k |v_i|^k 2^k E[x_i^{2k}] }{k!}\right).
\end{align}
We let $t_i = t|v_i|$. By Lemma \ref{lem:subG_expk}, we have,
% Now we calculate $E[(|v|^Tx_S \circ x_S)^k] $ using the tail property of $x$. 
% \begin{align*}
% E[(|v|^Tx_S \circ x_S)^k] &= E(|v_1| x_{S1}^2 +\dots+|v_p| x_{Ss}^2)^k\\
% &= E\left(\frac{|v_1| x_{S1}^2 +\dots+|v_p| x_{Ss}^2}{s}\right)^k s^k
% \end{align*}
% Again by Jensen's inequality,
% \begin{align*}
%     E\left(\frac{|v_1| x_{S1}^2 +\dots+|v_p| x_{Ss}^2}{s}\right)^k s^k \leq E[\frac{1}{s} \sum_{i=1}^s |v_i|^k x_{Si}^{2k}]s^k = \sum_{i=1}^s |v_i|^k s^{k-1} E[x^{2k}_{Si}]
% \end{align*}
% and
\begin{equation}\label{lems2.3:s43}
    E[x^{2k}_{i}] \leq  (2k)(2\sigma_x^2)^k\Gamma(k) = 2(k!)(2\sigma_x^2)^k
\end{equation}
% by Lemma \ref{lem:subG_expk} and
% $\norm{v}_k^k \leq \norm{v}_2^k s^{1-k/2} = s^{1-k/2} $.

Substituting \eqref{lems2.3:s43} into \eqref{lems2.3:s42},
\begin{align*}
E[\exp(tv^Tz)] &\leq  \prod_{i=1}^s \left( 1+\sum_{k=2}^\infty t_i^k 8^k  (\sigma_x^2)^k \right)\\
&=\prod_{i=1}^s \left(1+ (8t_i \sigma_x^2)^2\sum_{k=0}^\infty (8t_i \sigma_x^2)^k\right)\\
&\leq \prod_{i=1}^s\left(1+128t_i^2 \sigma_x^4\right)
\end{align*}
if $t|v_i| \leq 1/(16\sigma_x^2)$, for all $i$. By the fact that $1+128t_i^2 \sigma_x^4 \leq \exp(128t_i^2 \sigma_x^4)$
\begin{align*}
E[\exp(tv^Tz)] \leq \prod_{i=1}^s \exp( 128t_i^2\sigma_x^4) = \exp( \sum_{i=1}^s 128t^2 v_i^2\sigma_x^4) = \exp(  128t^2 \norm{v}_2^2 \sigma_x^4)
\end{align*}
for $t \leq 1/(16\sigma_x^2 \max_i |v_i|)$. Therefore $v^T x\circ x  \sim \mbox{subExp}(\nu,b)$ with $\nu = 16\sigma_x^2 \norm{v}_2$, $b= 16\sigma_x^2 \norm{v}_\infty)$.
% We have $E[X_{Si}^{2k}] \leq 2(k!)(2\sigma_x^2)^k$. By substitution, 
% \begin{align*}
% E[(|v|^Tx_S \circ x_S)^k] &= E(|v_1| x_{S1}^2 +\dots+|v_p| x_{Ss}^2)^k\\
% &=\int_0^\infty \mathbb{P}(|v_1| x_{S1}^2 +\dots+|v_s| x_{Ss}^2 \geq t^{1/k}) dt\\
% &
% \end{align*}
\end{proof}

Also, we have a lemma about maximum of sum of variables with sub-exponential tails.

\begin{lemma}\label{lem:exp_subExp}
	Consider $(u_j)_{j=1}^J$ where $u_j \in \mathbb{R}^{m_j}$ such that $\mathbbm{1}^T u_j \sim \mbox{subExp}(\nu_j,b)$ with $E[u_j]=0$ for $1 \leq j \leq J$. We let $m:= \max_j m_j$. Also, assume $\exists \nu_*>0$ such that $\nu_j \leq \nu_* \sqrt{m}$ for all $j$ and $\exists c>0$ such that $b \leq c\nu_*$.
 Then we have,
	\begin{equation*}
		E[\max_{1\leq j \leq J} \mathbbm{1}^T u_j]  \leq c\nu_* (\log J+ m/(2c^2)).
	\end{equation*}
	In particular, when $c=1$, 
	$\displaystyle E[\max_{1\leq j \leq J} \mathbbm{1}^T u_j]  \leq \nu_* (\log J+ m/2).$
\end{lemma}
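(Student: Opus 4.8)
The plan is to apply the standard Chernoff / exponential-moment bound for the maximum of sub-exponential variables, combined with Jensen's inequality. Write $Y_j := \mathbbm{1}^T u_j$, so that by hypothesis $Y_j \sim \mbox{subExp}(\nu_j,b)$ with $E[Y_j]=0$; by the definition of a sub-exponential variable this means $E[\exp(tY_j)] \leq \exp(t^2\nu_j^2/2)$ for every $|t| \leq 1/b$.

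First I would fix an arbitrary $t \in (0,1/b]$ and apply Jensen's inequality to the convex function $x \mapsto e^{tx}$ (using $\exp(t\max_j Y_j) = \max_j \exp(tY_j)$ for $t>0$), then bound the maximum of the exponentials by their sum:
\[
\exp\!\big(t\,E[\max_{1\le j\le J} Y_j]\big) \leq E\big[\max_{1\le j\le J}\exp(tY_j)\big] \leq \sum_{j=1}^J E[\exp(tY_j)] \leq \sum_{j=1}^J \exp(t^2\nu_j^2/2).
\]
Using the hypothesis $\nu_j \leq \nu_*\sqrt{m}$, and hence $\nu_j^2 \leq \nu_*^2 m$, the right-hand side is at most $J\exp(t^2\nu_*^2 m/2)$. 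Taking logarithms and dividing by $t$ gives the family of bounds
\[
E[\max_{1\le j\le J} Y_j] \leq \frac{\log J}{t} + \frac{t\,\nu_*^2 m}{2},\qquad t \in (0,1/b].
\]

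The remaining step is to select a convenient feasible value of $t$. Here the hypothesis $b \leq c\nu_*$ is exactly what is needed: it ensures $1/(c\nu_*) \leq 1/b$, so the choice $t = 1/(c\nu_*)$ is admissible. Substituting this value yields
\[
E[\max_{1\le j\le J} Y_j] \leq c\nu_*\log J + \frac{\nu_* m}{2c} = c\nu_*\Big(\log J + \frac{m}{2c^2}\Big),
\]
which is the claimed inequality, and setting $c=1$ recovers the stated special case.

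The only point requiring care --- and the step I would flag as the main (though mild) obstacle --- is checking that the sub-exponential moment bound actually applies at the chosen $t$, i.e. that $t = 1/(c\nu_*)$ lies in the permissible range $[0,1/b]$. This is precisely where the assumption $b \leq c\nu_*$ enters, and without it one could not evaluate the Chernoff bound at a single fixed $t$ balancing the two terms. I note that one does \emph{not} optimize $t$ over all of $(0,1/b]$: the clean additive form $\log J + m/(2c^2)$ comes from evaluating the bound at the fixed feasible point $t=1/(c\nu_*)$ rather than at the unconstrained minimizer $t^\star = \sqrt{2\log J/(\nu_*^2 m)}$, so no case analysis on the location of $t^\star$ relative to $1/b$ is needed.
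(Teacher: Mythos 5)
Your proposal is correct and follows essentially the same route as the paper's proof: Jensen's inequality applied to $e^{t(\cdot)}$, a union bound on the moment generating functions, the bound $\nu_j^2\le\nu_*^2 m$, and evaluation at the fixed feasible point $t=1/(c\nu_*)$, whose admissibility is guaranteed by $b\le c\nu_*$. Nothing is missing.
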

\begin{proof}
% 	By the property of sub-exponential random variables, we have
For $ |t| \leq 1/b$ we have,
\begin{equation}\label{lems2.4:s44}
    E[\exp(t \mathbbm{1}^T u_j)] \leq \exp(t^2\nu_j^2/2)\leq \exp(mt^2\nu_*^2/2)
\end{equation}
Then, 
% for $t = 1/b$,
% 	\begin{equation}\label{lem:subexp}
% 		E[e^{tu}] \leq \exp (t^2 \nu^2/2), \mbox{ for }|t|\leq 1/\nu.
% 	\end{equation}
	\begin{align*}
		E[\max_{1\leq j \leq J} \mathbbm{1}^T u_j] &= \frac{1}{t}E\left(\log e^{\max_{1\leq j \leq J} t(\mathbbm{1}^T u_j)}\right)\\
		&\leq \frac{1}{t}\log E\left( e^{\max_{1\leq j \leq J} t(\mathbbm{1}^T u_j)}\right)\\
		&= \frac{1}{t}\log E\left( \max_{1\leq j \leq J} e^{ t(\mathbbm{1}^T u_j)}\right).
	\end{align*}
where the second inequality comes from Jensen's. Using a union bound,
\begin{align}
	\frac{1}{t}\log E\left( \max_{1\leq j \leq J} e^{ t(\mathbbm{1}^T u_j)}\right) &\leq \frac{1}{t}\log  \left(\sum_{j=1}^J E\left( e^{ t(\mathbbm{1}^T u_j)}\right)\right) \nonumber\\
% 	&=\frac{1}{t}\log  \left(\sum_{j=1}^J \prod_{i=1}^m E\left( e^{ t u_i}\right)\right) \\
	&\leq \frac{1}{t}\log  \left(J e^{mt^2\nu_*^2/2}\right) \label{lems2.7:s7}.
\end{align}
where the last inequality uses \eqref{lems2.4:s44}. Since $1/(c\nu_*) \leq 1/b$ by assumption, the inequality \eqref{lems2.7:s7} holds for $t = 1/(c\nu_*)$. Plugging $t = 1/(c\nu_*)$ into \eqref{lems2.7:s7}, we obtain,
\begin{align*}
	E[\max_{1\leq j \leq J}\mathbbm{1}^T u_j] \leq c\nu_* (\log J+ m/(2c^2))
\end{align*}
as claimed.
\end{proof}

Finally, in Lemma \ref{lem:exp_blknorm} and \ref{lem:tailbound_blknorm}, we provide expectation and probability tail bounds of a dual $\ell_1/\ell_2$ norm of a sub-Gaussian vector.
\begin{lemma}\label{lem:exp_blknorm}
Let $\mathscr{G} = ((g_1,\dots,g_J), (w_j)_1^J)$. Consider a random vector $v\in \mathbb{R}^p$ such that for each $j$ and any fixed $u \in \mathbb{R}^{|g_j|}$, $u^Tv_{g_j} \sim \mbox{subG}(\sigma^2\norm{u}_2^2) $ with $E[v_{g_j}]=0$ and $u^T(v_{g_j} \circ v_{g_j}) \sim \mbox{subExp}(\nu \norm{u}_2, \nu \norm{u}_\infty)$. Then, 
% 	$v_i \sim \mbox{subG}(\sigma^2)$, i.i.d. Then 
\begin{equation*}
	E[\norm{v}_{\bar{\mathscr{G}},2,\infty}] \leq c \sqrt{\log J+ m}
\end{equation*}
for $c = (\min_{1\leq j \leq J} w_j)^{-1}\sqrt{\max(\nu, 8\sigma^2)}$, where we define $\bar{\mathscr{G}} = ((g_1,\dots,g_J), (w_j^{-1})_1^J)$ and $ m := \max_j |g_j|$, the largest group size.
\end{lemma}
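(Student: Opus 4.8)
The plan is to reduce the dual norm to a maximum of squared group norms, recognize each such norm as a sub-exponential sum, and then invoke the earlier lemmas. First I would strip off the weights: since $\norm{v}_{\bar{\mathscr{G}},2,\infty} = \max_{1\le j \le J} w_j^{-1}\norm{v_{g_j}}_2 \le (\min_j w_j)^{-1}\max_{1\le j\le J}\norm{v_{g_j}}_2$, it suffices to bound $E[\max_j \norm{v_{g_j}}_2]$ and absorb the factor $(\min_j w_j)^{-1}$ into the final constant $c$. To deal with the square root, I would apply Jensen's inequality to the concave map $t\mapsto\sqrt{t}$, giving $E[\max_j\norm{v_{g_j}}_2] = E[\sqrt{\max_j\norm{v_{g_j}}_2^2}]\le\sqrt{E[\max_j\norm{v_{g_j}}_2^2]}$, so that the task becomes bounding $E[\max_j \norm{v_{g_j}}_2^2]$.

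Next I would write $\norm{v_{g_j}}_2^2 = \mathbbm{1}^T(v_{g_j}\circ v_{g_j})$ and split it into a centered part and its mean. Setting $u_j := v_{g_j}\circ v_{g_j} - E[v_{g_j}\circ v_{g_j}]$, the sub-exponential hypothesis applied with $u = \mathbbm{1}$ (so $\norm{\mathbbm{1}}_2 = \sqrt{|g_j|}$ and $\norm{\mathbbm{1}}_\infty = 1$) gives $\mathbbm{1}^T u_j \sim \mbox{subExp}(\nu\sqrt{|g_j|},\nu)$ with $E[u_j]=0$. These are precisely the hypotheses of Lemma~\ref{lem:exp_subExp} with $\nu_* = \nu$, $b = \nu$, and hence $c = 1$ (using $|g_j|\le m$ so that $\nu\sqrt{|g_j|}\le\nu\sqrt{m}$), which yields $E[\max_j \mathbbm{1}^T u_j]\le\nu(\log J + m/2)$. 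For the deterministic mean part, taking $u = e_k$ shows each coordinate $v_k$ is $\mbox{subG}(\sigma^2)$, so Lemma~\ref{lem:subG_expk} with $k=2$ gives $E[v_k^2]\le 4\sigma^2$, whence $E[\mathbbm{1}^T(v_{g_j}\circ v_{g_j})] = \sum_{k\in g_j}E[v_k^2]\le 4\sigma^2 m$ uniformly in $j$.

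I would then combine the two pieces via $\max_j(a_j + b_j)\le\max_j a_j + \max_j b_j$ with the $b_j$ deterministic, obtaining $E[\max_j\norm{v_{g_j}}_2^2]\le\nu(\log J + m/2) + 4\sigma^2 m$. Taking the square root and reinstating $(\min_j w_j)^{-1}$ then gives the stated bound once the constant is verified: it remains to check $\nu\log J + (\nu/2 + 4\sigma^2)m\le\max(\nu,8\sigma^2)(\log J + m)$, which reduces to the elementary inequality $\nu/2 + 4\sigma^2\le\max(\nu,8\sigma^2)$, valid in both regimes $\nu\ge 8\sigma^2$ and $\nu<8\sigma^2$.

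The step I expect to require the most care is the Jensen reduction from $\max_j\norm{v_{g_j}}_2$ to $\max_j\norm{v_{g_j}}_2^2$ together with the bookkeeping of constants: the whole point is to land on the clean constant $\sqrt{\max(\nu,8\sigma^2)}$, and this works only because the $\nu m/2$ term from the sub-exponential concentration and the $4\sigma^2 m$ term from the variance bound combine to at most $\max(\nu, 8\sigma^2)m$. The tail inputs themselves are furnished by Lemmas~\ref{lem:exp_subExp} and~\ref{lem:subG_expk}, so no new concentration machinery is needed.
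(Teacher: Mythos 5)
Your proposal is correct and follows essentially the same route as the paper's proof: Jensen's inequality to pass to $\max_j\norm{v_{g_j}}_2^2$, the decomposition into a centered sub-exponential part handled by Lemma~\ref{lem:exp_subExp} and a mean part bounded by $4\sigma^2 m$ via Lemma~\ref{lem:subG_expk}, and the final constant check $\nu/2+4\sigma^2\le\max(\nu,8\sigma^2)$. The only cosmetic difference is that you factor out $(\min_j w_j)^{-1}$ before applying Jensen while the paper carries the weights through the expectation, which changes nothing.
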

\begin{proof}

First we let $m_j = |g_j|$. By Holder's inequality, we have,
	\begin{align*}
		E[\max_{1\leq j \leq J} \norm{ w_j^{-1} v_{g_j}}_2] \leq E[\max_{1\leq j \leq J} \norm{ w_j^{-1} v_{g_j}}_2^2]^{1/2}  = E[\max_{1\leq j \leq J}  w_j^{-2} (v_{g_j,1}^2+\dots v_{g_j,m_j}^2)]^{1/2}
	\end{align*}
% 	Consider $\tilde{v}_i \overset{d}{=} v_i^2$ which is independent from all $(v_i)_1^p$. Then, 
Then,
\begin{align*}
		E[\max_{1\leq j \leq J}  w_j^{-2} (v_{g_j,1}^2+\dots v_{g_j,m_j}^2)] &\leq (\max_{1\leq j \leq J}  w_j^{-2})E[\max_{1\leq j \leq J} (v_{g_j,1}^2+\dots v_{g_j,m_j}^2 )]  \\
		& =(\max_{1\leq j \leq J} w_j^{-2})E[\max_{1\leq j \leq J} \sum_{i=1}^{m_j} (u_{g_j,i}+E[v_{g_j,i}^2])  )]\\
		&\leq (\max_{1\leq j \leq J} w_j^{-2})\left(E[\max_{1\leq j \leq J}\mathbbm{1}^Tu_{g_j}] +4m\sigma^2\right)
	\end{align*}
% 	\begin{align*}
% 		E[\max_{1\leq j \leq J}  w_j^{-2} (v_{g_j,1}^2+\dots v_{g_j,|g_j|}^2)] &\leq (\max_{1\leq j \leq J}  w_j^{-2})E[\max_{1\leq j \leq J} (v_{g_j,1}^2+\dots v_{g_j,|g_j|}^2 +\tilde{v}_{|g_j|+1}^2,\dots,\tilde{v}_{m}^2  )]  \\
% 		& =(\max_{1\leq j \leq J} w_j^{-2})E[\max_{1\leq j \leq J} \sum_{i=1}^m (u_i+E[v_1^2])  )]\\
% 		&=(\max_{1\leq j \leq J} w_j^{-2})E[\max_{1\leq j \leq J} \sum_{i=1}^m u_i]+m(\max_{1\leq j \leq J} w_j^{-2})E[v_1^2] 
% 	\end{align*}
	where $u_{g_j} \overset{d}{:=} v_{g_j} \circ v_{g_j} - E[v_{g_j}\circ v_{g_j}]$ and the last inequality uses Lemma \ref{lem:subG_expk} and $m_j \leq m$, for all $j$. By assumption, we have,  
% 	By Lemma \ref{lem:subG_subExp},
	$\mathbbm{1}^Tu_{g_j} \sim \mbox{subExp}(\nu \sqrt{m_j},\nu)$ and $E[u_{g_j}]=0$.
% 	where $\nu_j = 16\sigma^2 \sqrt{m_j}$ and $b= 16\sigma^2$. 
	Then, by Lemma \ref{lem:exp_subExp},
\begin{align*}
LHS &\leq (\max_{1\leq j \leq J} w_j^{-2})[\nu(\log J+ m/2)+4m\sigma^2]\\
& \leq (\max_{1\leq j \leq J} w_j^{-2}) 
\max(\nu, 8\sigma^2)(\log J+ m).
\end{align*}
Since $\max_{1\leq j \leq J} w_j^{-2} = 1/(\min_{1\leq j \leq J} w_j)^{2}$, defining $c=(\min_{1\leq j \leq J} w_j)^{-1}\sqrt{\max(\nu, 8\sigma^2)}$, we obtain 
\begin{equation*}
	\norm{v}_{\bar{\mathscr{G}},2,\infty} \leq  c  \sqrt{\log J+ m}
\end{equation*}
as desired.
%	Consider $\tilde{w}_i \overset{d}{=} w_i$ which is independent 
		 
\end{proof}

\begin{lemma}\label{lem:tailbound_blknorm}
Let $\mathscr{G} = ((g_1,\dots,g_J), (w_j)_1^J)$. Consider a random vector $v\in \mathbb{R}^p$ such that for each $j$ and for any fixed $u \in \mathbb{R}^{|g_j|}$, $u^Tv_{g_j} \sim \mbox{subG}(\sigma^2\norm{u}_2^2) $ with $E[v_{g_j}]=0$ and $u^T(v_{g_j} \circ v_{g_j}) \sim \mbox{subExp}(\nu \norm{u}_2, \nu \norm{u}_\infty)$. Then, 
$$	\mathbb{P} \left( \norm{v}_{\bar{\mathscr{G}},2,\infty} \geq \delta \right)\leq J\exp \left(-\frac{1}{2}\min(C_\delta^2/\nu^2,C_\delta/\nu)\right) $$
where we define $C_\delta := (\min_j w_j^2)\delta^2/m - 4\sigma^2$, $\bar{\mathscr{G}} = ((g_1,\dots,g_J), (w_j^{-1})_1^J)$, and $ m := \max_j |g_j|$, the largest group size.
\end{lemma}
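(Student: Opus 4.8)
The plan is to mirror the expectation bound of Lemma~\ref{lem:exp_blknorm}, but replace the expectation estimate of Lemma~\ref{lem:exp_subExp} with a one-sided sub-exponential (Bernstein-type) tail bound, and then pay a union-bound factor of $J$ instead of the logarithmic factor that appears in the expectation. Concretely, since $\norm{v}_{\bar{\mathscr{G}},2,\infty} = \max_{1\le j\le J} w_j^{-1}\norm{v_{g_j}}_2$, the first step is the union bound
\begin{equation*}
\mathbb{P}\!\left(\norm{v}_{\bar{\mathscr{G}},2,\infty}\ge \delta\right)
= \mathbb{P}\!\left(\max_{j} w_j^{-1}\norm{v_{g_j}}_2 \ge \delta\right)
\le \sum_{j=1}^{J}\mathbb{P}\!\left(\norm{v_{g_j}}_2^2 \ge w_j^2\delta^2\right),
\end{equation*}
so that everything reduces to a single-group tail estimate for $\norm{v_{g_j}}_2^2$.

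For a fixed group, I would write $\norm{v_{g_j}}_2^2 = \mathbbm{1}^T(v_{g_j}\circ v_{g_j})$ and center it: set $Z_j := \mathbbm{1}^T(v_{g_j}\circ v_{g_j}) - \mu_j$ with $\mu_j := E[\mathbbm{1}^T(v_{g_j}\circ v_{g_j})]$. The mean is controlled by Lemma~\ref{lem:subG_expk} applied coordinatewise with $k=2$, giving $E[v_{g_j,i}^2]\le 4\sigma^2$ and hence $\mu_j \le 4\sigma^2 m_j \le 4\sigma^2 m$, where $m_j := |g_j|$. For the fluctuation, I would invoke the hypothesis on $u^T(v_{g_j}\circ v_{g_j})$ with $u=\mathbbm{1}$: since $\norm{\mathbbm{1}}_2=\sqrt{m_j}$ and $\norm{\mathbbm{1}}_\infty=1$, this says precisely that $Z_j \sim \mathrm{subExp}(\nu\sqrt{m_j},\,\nu)$ with $E[Z_j]=0$.

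The next step is to translate the threshold $w_j^2\delta^2$ into the stated quantity $C_\delta$. Writing $\mathbb{P}(\norm{v_{g_j}}_2^2\ge w_j^2\delta^2) = \mathbb{P}(Z_j \ge w_j^2\delta^2-\mu_j)$ and using $\mu_j\le 4\sigma^2 m$ together with $w_j^2\ge \min_j w_j^2$, the threshold is bounded below by
\begin{equation*}
w_j^2\delta^2-\mu_j \;\ge\; (\textstyle\min_j w_j^2)\,\delta^2 - 4\sigma^2 m \;=\; m\,C_\delta,
\end{equation*}
which is exactly $m$ times $C_\delta = (\min_j w_j^2)\delta^2/m - 4\sigma^2$. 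Applying the standard sub-exponential tail bound $\mathbb{P}(Z_j\ge t)\le \exp\!\big(-\tfrac12\min(t^2/(\nu^2 m_j),\,t/\nu)\big)$, which follows directly from the moment-generating-function definition of $\mathrm{subExp}$, at $t=mC_\delta$ and then lower-bounding the exponent using $m_j\le m$ and $m\ge 1$ gives $\tfrac{(mC_\delta)^2}{\nu^2 m_j}\ge \tfrac{C_\delta^2}{\nu^2}$ and $\tfrac{mC_\delta}{\nu}\ge \tfrac{C_\delta}{\nu}$, so each group contributes at most $\exp\!\big(-\tfrac12\min(C_\delta^2/\nu^2, C_\delta/\nu)\big)$; summing over $j$ yields the factor $J$ and the claim.

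The main thing to handle carefully is the threshold bookkeeping in the third step: the per-group probabilities are decreasing in the threshold, so I must verify that $w_j^2\delta^2-\mu_j \ge mC_\delta \ge 0$ before applying the Bernstein tail, which is where the positivity of $C_\delta$ enters. When $C_\delta\le 0$ the stated right-hand side is vacuous (it exceeds or equals $J$), so the bound holds trivially and the substantive content is confined to the regime $\delta^2 > 4\sigma^2 m/\min_j w_j^2$; I would note this explicitly so the direction of every inequality ($m_j\le m$, $m\ge1$, $w_j^2\ge\min_j w_j^2$) is unambiguous.
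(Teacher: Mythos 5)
Your proposal is correct and follows essentially the same route as the paper's proof: a union bound over groups, writing $\norm{v_{g_j}}_2^2$ as $\mathbbm{1}^T(v_{g_j}\circ v_{g_j})$ and centering it using the second-moment bound $E[v_{g_j,i}^2]\le 4\sigma^2$ from Lemma~\ref{lem:subG_expk}, invoking the sub-exponential hypothesis with $u=\mathbbm{1}$ to get $Z_j\sim \mbox{subExp}(\nu\sqrt{m_j},\nu)$, and finishing with a Bernstein/Chernoff tail at threshold $mC_\delta$. The only differences are cosmetic — you quote the standard sub-exponential tail inequality where the paper optimizes the Chernoff bound over $s$ explicitly, and you are slightly more careful than the paper in flagging the vacuous regime $C_\delta\le 0$ and the direction of the threshold inequalities.
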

\begin{proof}
By the union bound, we have
	\begin{align*}
		\mathbb{P} \left( \max_{1\leq j \leq J} \norm{ w_j^{-1} v_{g_j}}_2 \geq \delta \right) 
		& \leq \sum_{j=1}^J \mathbb{P} \left( \norm{ w_j^{-1} v_{g_j}}_2^2 \geq \delta^2\right).
		\end{align*}
% Defining $\tilde{v}, u$ as in Lemma
% \ref{lem:exp_blknorm}, 		
Defining $u_{g_j} \overset{d}{:=} v_{g_j} \circ v_{g_j} - E[v_{g_j}\circ v_{g_j}]$ and $m_j := |g_j|$. 		\begin{align*}
		\mathbb{P} \left( \max_{1\leq j \leq J} \norm{ w_j^{-1} v_{g_j}}_2 \geq \delta \right) 
		&\leq  \sum_{j=1}^J \mathbb{P} \left( \sum_{k=1}^{m_j} v_{g_j,k}^2 \geq  w_j^2\delta^2\right)\\
		&\leq \sum_{j=1}^J \mathbb{P} \left( \mathbbm{1}^Tu_{g_j} \geq (\min_j w_j^2)\delta^2 - 4m_j\sigma^2\right)
% 		\\
% 		&\leq \sum_{j=1}^J \mathbb{P} \left( \mathbbm{1}^Tu_{g_j} \geq m_j\left( (\min_j w_j^2)\delta^2 - 4\sigma^2\right)\right).
		\end{align*}
where the last inequality uses Lemma \ref{lem:subG_expk}. By assumption, we have $\mathbbm{1}^Tu_{g_j} \sim \mbox{subExp}(\nu \sqrt{m_j},\nu)$ and $E[u_{g_j}]=0$. We use Bernstein type inequality to bound the probability. More concretely for any $s>0$ such that $|s|\leq 1/\nu$,  we have, 
% letting $t_j = \sqrt{m_j}\left((\min_j w_j^2)\delta^2 - 4\sigma^2\right)$	,	
		\begin{align*}
		    \mathbb{P} \left( \mathbbm{1}^Tu_{g_j} \geq  (\min_j w_j^2)\delta^2 - 4m_j\sigma^2\right)&\leq \mathbb{P} \left( s\mathbbm{1}^Tu_{g_j} \geq sm C_\delta \right) \\
		    &\leq \exp(-smC_\delta) E\left[\exp\left(s\mathbbm{1}^Tu_{g_j}\right)\right]\\
		    &\leq \exp(-smC_\delta+s^2m\nu^2/2).
		\end{align*}
In the first and third inequality, the bound $m_j \leq m$ was also used. Optimizing over $s>0$, we take $s = \min\{C_\delta/\nu^2,1/\nu\}$. Hence, we have,
		\begin{align*}
		\mathbb{P} \left( \max_{1\leq j \leq J} \norm{ w_j^{-1} v_{g_j}}_2 \geq \delta  \right)
		&\leq J\exp \left(-\frac{m}{2}\min(C_\delta^2/\nu^2,C_\delta/\nu)\right)
	\end{align*}

%	$w_i \sim \mbox{subExp}(16\sigma^2)$,
	
\end{proof}
%%%
\subsection{Proof for Proposition 3.1}
\label{supp_sec:pf_prop3.1}
The proof of this result follows similar lines to the proof of Theorem 1 in \citeSupp{loh_regularized_2013}, which established the result with a different tolerance function and an additive penalty. Since $ \theta^* $ is feasible, by the first order optimality condition, we have the following inequality
\begin{equation*}
(\triangledown \mathscr{L}_n(\hat{\theta}) + \triangledown P_\lambda(\ltheta))^T(\theta^* - \ltheta) \geq 0.
\end{equation*}
Letting $ \hat{\Delta} := \ltheta - \theta^* $, since $ \ltheta \in \Theta_0$ by the setup of the problem, we can apply RSC condition to obtain
\begin{equation}\label{rscP}
\alpha\norm{\hat{\Delta}}_2^2 -\tau (\norm{\hat{\Delta}}_{\mathscr{G},2,1})
\leq (-\triangledown P_\lambda(\ltheta)-\triangledown \mathscr{L}_n(\theta^*))^T\hat{\Delta}.
\end{equation}
On the other hand, convexity of $ P_\lambda(\theta) $ implies
\begin{equation}\label{covP}
P_\lambda(\theta^*) - P_\lambda(\ltheta) \geq -\triangledown P_\lambda(\ltheta)^T\hat{\Delta}.
\end{equation}
Combining~\eqref{rscP} with~\eqref{covP}, we obtain 
\begin{align*}
\alpha\norm{\hat{\Delta}}_2^2 -\tau (\norm{\hat{\Delta}}_{\mathscr{G},2,1})
&\leq (-\triangledown P_\lambda(\ltheta)-\triangledown \mathscr{L}_n(\theta^*))^T\hat{\Delta}\\
&\leq P_\lambda(\theta^*) - P_\lambda(\ltheta)+\norm{\triangledown \mathscr{L}_n(\theta^*)}_{\bar{\mathscr{G}},2,\infty}\norm{\hat{\Delta}}_{\mathscr{G},2,1}.
%	&\leq (\norm{\triangledown P_\lambda(\ltheta)}_\infty +\norm{\triangledown \mathscr{R}_n(\theta^*)}_\infty)\norm{\Delta}_1^2
\end{align*}
by Lemma \ref{lem:blk_holder}.
Since $ \tau (\norm{\hat{\Delta}}_{\mathscr{G},2,1}) = \tau_1 \dfrac{\log J+m}{n}\norm{\hat{\Delta}}_{\mathscr{G},2,1}^2 + \tau_2 \sqrt{\dfrac{\log J+m}{n} }\norm{\hat{\Delta}}_{\mathscr{G},2,1}$,
\begin{equation*}
\alpha\norm{\hat{\Delta}}_2^2 \leq P_\lambda(\theta^*) - P_\lambda(\ltheta)+\norm{\hat{\Delta}}_{\mathscr{G},2,1}\left( \tau_1 \dfrac{\log J+m}{n}\norm{\hat{\Delta}}_{\mathscr{G},2,1} +\tau_2\sqrt{\dfrac{\log J+m}{n}}+\norm{\triangledown \mathscr{L}_n(\theta^*)}_{\bar{\mathscr{G}},2,\infty}\right),
\end{equation*}
By the choice of $ \lambda $,
\begin{equation*}
\tau_1 \dfrac{\log J+m}{n}\norm{\hat{\Delta}}_{\mathscr{G},2,1} +\tau_2\sqrt{\dfrac{\log J+m}{n}}+\norm{\triangledown \mathscr{L}_n(\theta^*)}_{\bar{\mathscr{G}},2,\infty}  \leq \dfrac{\lambda}{2}.
\end{equation*}

Then by using the triangle inequality
\begin{align*}
\alpha\norm{\hat{\Delta}}_2^2 
&\leq P_\lambda(\theta^*) - P_\lambda(\ltheta)+ \dfrac{\lambda}{2}\norm{\hat{\Delta}}_{\mathscr{G},2,1}\\
& = \lambda \sum_{j \in S} w_j \norm{\theta^*_{g_j}}_2 - \lambda \sum_{j \in S} w_j \norm{\ltheta_{g_j}}_2- \lambda \sum_{j \in S^c} w_j \norm{\ltheta_{g_j}}_2+\dfrac{\lambda}{2}\sum_{j=1}^J w_j\norm{\hat{\Delta}_{g_j}}_2\\
& \leq \lambda \sum_{j \in S} w_j \norm{\hat{\Delta}_{g_j}}_2 -\lambda \sum_{j \in S^c} w_j \norm{\ltheta_{g_j}}_2+\dfrac{\lambda}{2}\sum_{j=1}^J w_j\norm{\hat{\Delta}_{g_j}}_2
\end{align*}
where $S := \{j \in (1,\dots,J); \theta^*_{g_j} \neq 0\} $ where the last inequality comes from the triangle inequality. Since for $j \in S^c$, $\ltheta_{g_j} =\ltheta_{g_j} - \theta^*_{g_j}$,
\begin{align*}
%& = \lambda \norm{\theta^*_S}_1 - \lambda \norm{\ltheta_S}_1- \lambda \norm{\ltheta_{S^{c}}}_1+\dfrac{\lambda}{2}\norm{\hat{\Delta}}_1\\
\alpha\norm{\hat{\Delta}}_2^2 
&\leq \lambda \sum_{j \in S} w_j \norm{\hat{\Delta}_{g_j}}_2 -\lambda \sum_{j \in S^c} w_j \norm{\hat{\Delta}_{g_j}}_2+\dfrac{\lambda}{2}\sum_{j=1}^J w_j\norm{\hat{\Delta}_{g_j}}_2\\
&= \frac{3\lambda}{2} \sum_{j \in S} w_j \norm{\hat{\Delta}_{g_j}}_2 -\frac{\lambda}{2} \sum_{j \in S^c} w_j \norm{\hat{\Delta}_{g_j}}_2.
%& \leq \lambda \norm{\hat{\Delta}_S}_1- \lambda \norm{\ltheta_{S^{c}}}_1+\dfrac{\lambda}{2}\norm{\hat{\Delta}}_1\\
%&=\dfrac{\lambda}{2}(3\norm{\hat{\Delta}_S}_1-\norm{\hat{\Delta}_{S^{c}}}_1)\\
%&\leq \dfrac{\lambda}{2}(3\norm{\hat{\Delta}_A}_1-\norm{\hat{\Delta}_{A^c}}_1).
\end{align*}
In particular, we have
\begin{equation}
	\sum_{j \in S^c} w_j \norm{\hat{\Delta}_{g_j}}_2 \leq 3 \sum_{j \in S} w_j \norm{\hat{\Delta}_{g_j}}_2
\end{equation}
and 
\begin{equation}
	\alpha\norm{\hat{\Delta}}_2^2  \leq \frac{3\lambda}{2} \sum_{j \in S} w_j \norm{\hat{\Delta}_{g_j}}_2.
\end{equation}
Then,
$$ \alpha\norm{\hat{\Delta}}_2^2 \leq (\max_{j \in S} w_j)\frac{3\lambda}{2}  (\sum_{j \in S} \norm{\hat{\Delta}_{g_j}}_2^2)^{1/2}(\sum_{j \in S} 1)^{1/2} \leq (\max_{j \in S} w_j)\frac{3\lambda}{2} \sqrt{|S|} \norm{\hat{\Delta}}_2.$$

The $ \ell_1/\ell_2 $  upper bound follows from the $ \ell_2 $-bound and \[ \norm{\hat{\Delta}}_{\mathscr{G},2,1} =\sum_{j \in S} w_j \norm{\hat{\Delta}_{g_j}}_2 + \sum_{j \in S^c} w_j \norm{\hat{\Delta}_{g_j}}_2 \leq 4(\max_{j \in S} w_j)\sum_{j \in S}  \norm{\hat{\Delta}_{g_j}}_2  \leq 4(\max_{j\in S} w_j)\sqrt{|S|} \norm{\hat{\Delta}}_2 \].

%%%
\subsection{Proof of Lemma 3.1}
\label{supp_sec:pf_lem3.1}
Recalling $\mathscr{L}_n(\theta) = 
\frac{1}{n}\sum_{i=1}^n \left(-z_i f(\theta^T x_i) - A(f(\theta^T x_i))\right)$, we have $$\triangledown \mathscr{L}_n(\theta^*)  = \frac{1}{n}\sum_{i=1}^n \left( -z_i  +\mu(f({\theta^*}^T x_i) )\right) \dfrac{1}{1+e^{{\theta^*}^T x_i}}x_i,$$ where we define $A(\eta) = \log(1+e^\eta)$, $\mu(\eta) = A'(\eta) = e^{\eta}/(1+e^{\eta})$ and $f(\theta^Tx) = \log(n_\ell/\pi n_u) +\theta^Tx -\log (1+e^{\theta^Tx})$. For $ 1\leq i \leq n $ and $1 \leq j \leq p$, define $
V_{ij} :=  \left( -z_i  +\mu(f({\theta^*}^T x_i) )\right)\dfrac{1}{1+e^{{\theta^*}^Tx_i}}x_{ij}$. We note $\triangledown \mathscr{L}_n(\theta^*)_j = \empavg V_{ij}$.

Considering the event, with $C:=36\sigma_x^2$, 
$$\mathscr{E} = \left\lbrace\max_{1\leq j \leq p}\empavg x_{ij}^2 \leq C  \right\rbrace.$$
we have,
\begin{equation*}
\mathbb{P}\left( \norm{\triangledown \mathscr{L}_n(\theta^*) }_{\bar{\mathscr{G}},2,\infty} \geq \delta \right) \leq \mathbb{P}(\mathscr{E}^c)+\mathbb{P}\left( \norm{\triangledown \mathscr{L}_n(\theta^*) }_{\bar{\mathscr{G}},2,\infty} \geq \delta |\mathscr{E} \right)\mathbb{P}(\mathscr{E}).
\end{equation*}
First we show that $\mathbb{P}(\mathscr{E}^c)$ is small. Since each $x_{ij}$ is a sub-Gaussian variable with sub-Gaussian parameter $\sigma_x$, defining $z_{ij} = x_{ij}^2 - E[x_{ij}^2]$,
\begin{equation*}
	\mathbb{P}(\mathscr{E}^{c}) \leq p\mathbb{P}\left(\empavg x_{ij}^2 \geq 36\sigma_x^2 \right)\leq p\mathbb{P}\left(\empavg z_{ij} \geq 32\sigma_x^2 \right)
\end{equation*}
where we use the fact that $E[x_{ij}^2] \leq 4\sigma_x^2$. We note that $(z_{ij})_{i=1}^n$ are i.i.d. samples from mean-zero distribution with sub-Exponential tail with parameter $\nu=b=16\sigma_x^2$ by applying Lemma \ref{lem:subG_subExp} with $s=1$. 
By Bernstein-type tail bound of the sub-exponential random variable,
\begin{equation}\label{eq:pf_lem3.1_1}
	\mathbb{P}(\mathscr{E}^{c}) \leq p\mathbb{P}\left(\empavg z_{ij} \geq 32\sigma_x^2 \right)\leq \exp(-\frac{n}{2}(2-\frac{2\log p}{n}))\leq \exp (-n/2),
\end{equation}
% for a constant $c' = 1/(2\cdot 16^2)$, by the scale of $\log p/n =o(1)$.
by the sample size condition $n \gtrsim \log J+ m$, assuming sufficiently large $n$. Now we show that $ \empavg V_{ij} $ is a sub-Gaussian variable on $\mathscr{E}$. 
In particular, we show that $E[\exp(t\empavg V_{ij})|\mathscr{E}] \leq \exp(t^2v^2/2)$ for some $v>0$.

%To use the Chernoff bound, we calculate the moment generating function of $ \empavg V_{ij} $.
Defining $ t_i:=\dfrac{t}{n(1+e^{{\theta^*}^Tx_i})}$, by definition of $V_{ij}$, we have
\begin{align}
E\left[\exp (\frac{t}{n}V_{ij} )|x_i\right] &= E\left[\exp\left( -t_i z_i x_{ij}\right)\cdot \exp\left( t_i\mu(f(x_i^T{\theta^*}))x_{ij}\right)|x_i \right]\nonumber\\
& = E\left[ \exp\left( -t_i z_i x_{ij}\right)|x_i\right]\cdot \exp\left( t_i\mu(f(x_i^T{\theta^*}))x_{ij}\right)\label{eq:pf_lem3.1_2}.
\end{align}
By the property of exponential family, we obtain
\begin{align}
E\left[ \exp\left( -t_i z_i x_{ij}\right)|x_i\right] 
&= \int \exp\left( -t_i z x_{ij}\right)\cdot \exp(zf(x_i^T{\theta^*})-A(f(x_i^T{\theta^*}))dz\nonumber	\\
%&= \int \exp\left\lbrace z(-t_ix_{ij}+f_{\theta^*}(x_i))-A(-t_ix_{ij}+f_{\theta^*}(x_i))+A(-t_ix_{ij}+f_{\theta^*}(x_i))-A(f_{\theta^*}(x_i))\right\rbrace dz	\\
&=\exp\left\lbrace A(f(x_i^T{\theta^*})-t_i x_{ij})-A(f(x_i^T{\theta^*})) \right\rbrace.\label{eq:pf_lem3.1_3}
\end{align}
Therefore combining \eqref{eq:pf_lem3.1_2} and \eqref{eq:pf_lem3.1_3}, we obtain
\begin{align*}
E\left[\exp (\frac{t}{n}V_{ij} )|x_i\right]
& = \exp\left\lbrace A(f(x_i^T{\theta^*})-t_i x_{ij})-A(f(x_i^T{\theta^*})) +t_i\mu(f(x_i^T{\theta^*}))x_{ij} \right\rbrace \\
% & = \exp\left\lbrace \frac{1}{2}A''(v_i t_i x_{ij})(t_ix_{ij})^2\right\rbrace \\
& \leq \exp \left\lbrace \frac{1}{8n^2}(t x_{ij})^2 \right\rbrace
\end{align*}
where the second inequality comes from the second order Taylor expansion, $\mu(\cdot) = A'(\cdot)$, $\sup_u A''(u) \leq 1/4$, and $ t_i \leq t/n$.
Therefore
\begin{equation*}
\prod_{i=1}^n E\left[\exp (\dfrac{t}{n}V_{ij} )|x_i\right]  \leq \exp\left( \frac{t^2}{8n^2}\sum_{i=1}^{n}x_{ij}^2 \right),
\end{equation*}
and conditioned on $ \mathscr{E} $, we have the bound
$$  \exp\left( \frac{t^2}{8n^2}\sum_{i=1}^{n}x_{ij}^2 \right) \leq \exp\left( \frac{t^2C}{8n} \right).$$
Therefore, $\empavg V_{ij} \sim \mbox{subG}(C/4n)$, i.e. $\triangledown \mathscr{L}_n(\theta^*)_j \sim \mbox{subG}(C/4n)$ for all $j$.

Now we discuss the distribution of $u^T \triangledown \mathscr{L}_n(\theta^*)_{g_j}$ and $u^T \triangledown \mathscr{L}_n(\theta^*)_{g_j} \circ \triangledown \mathscr{L}_n(\theta^*)_{g_j}$ on $\mathscr{E}$, for any $u \in \mathbb{R}^{|g_j|}$, to apply Lemma \ref{lem:tailbound_blknorm}. By Assumption 1, $(\triangledown \mathscr{L}_n(\theta^*)_j)_{j \in g_j}$ are independent. With independence, it is easy to see for any $j$ and any fixed $u \in \mathbb{R}^{|g_j|}$, $u^T \triangledown \mathscr{L}_n(\theta^*)_{g_j}\sim \mbox{subG}(\norm{u}_2^2 (C/4n))$ and $E[\triangledown \mathscr{L}_n(\theta^*)]=0$. Then Lemma \ref{lem:subG_subExp} gives 
$$u^T (\triangledown \mathscr{L}_n(\theta^*)_{g_j}\circ \triangledown \mathscr{L}_n(\theta^*)_{g_j}) \sim \mbox{subExp}(\norm{u}_2 (4C/n),\norm{u}_\infty (4C/n))$$ for any $j$ and fixed $u \in \mathbb{R}^{|g_j|}$. Therefore the condition of Lemma \ref{lem:tailbound_blknorm} is satisfied with $\sigma^2 = C/4n$ and $\nu = 16\sigma^2 =  4C/n$.

We let $\delta^2 = 16C(\log J+m)/(\min_j w_j^2 n)$ and note that
$$C_\delta = \frac{(\min_jw_j^2)\delta^2}{m}-\frac{C}{n} = \frac{16C(\log J+m)}{mn} -\frac{C}{n} = \frac{4C}{n}\left(\frac{16\log J}{4m}+\frac{15}{4}\right) $$
By Lemma \ref{lem:tailbound_blknorm},
\begin{align*}
	\mathbb{P} \left( \norm{\triangledown \mathscr{L}_n(\theta^*) }_{\bar{\mathscr{G}},2,\infty}\geq \delta  | \mathscr{E}\right)\leq \exp \left( -\frac{m}{2} \min \left( \frac{C_\delta^2}{( 4C/n)^2}, \frac{C_\delta}{4C/n}\right)+\log J\right),
\end{align*}
and because $\log J/m \geq 0$, $C_\delta \geq 4C/n$, and $\min \left( \frac{C_\delta^2}{( 4C/n)^2}, \frac{C_\delta}{4C/n}\right) = \frac{C_\delta}{4C/n}$ if $C_\delta \geq 4C/n$, we have, 
\begin{align}\label{eq:pf_lem3.1_4}
	\mathbb{P} \left( \norm{\triangledown \mathscr{L}_n(\theta^*) }_{\bar{\mathscr{G}},2,\infty}\geq \delta  | \mathscr{E}\right)&\leq \exp \left( -\frac{m}{2}\left(\frac{4\log J}{m}+\frac{15}{4}\right) +\log J\right)\nonumber \\
	&\leq  \exp \left(-\log J -m\right).
\end{align}

% \begin{align*}
% 	LHS\leq \exp \left( -m \min \left( \frac{(6(\log J+m)/m - 4)^2}{2}, \frac{6(\log J+m)/m - 4)}{2}\right)+\log J\right) 
% \end{align*}
% and the second quantity dominates, since
% $6(\log J+m)/m-4  = 6\log J /m + 2 \geq 1$. We get,
% %$(K^2(\min_j w_j^2)\log J/m - 4)^2 > K^2(\min_j w_j^2)\log J/m - 4$ and 
% \begin{align*}
% 	\mathbb{P} \left( \norm{\triangledown \mathscr{L}_n(\theta^*) }_{\bar{\mathscr{G}},2,\infty} \geq \delta | \mathscr{E} \right)&\leq \exp \left( - \frac{6(\log J+m)}{2}+2m+\log J\right) \\
% 	&\leq \exp (-2\log J-m)
% \end{align*}
%for $c_1 = \exp(2m), c_2 =K^2(\min_j w_j^2)/2-1 =3m/\log 2- 1>1$. 
Putting \eqref{eq:pf_lem3.1_1} and \eqref{eq:pf_lem3.1_4} together, and noting $\delta = (24\sigma_x/\min_jw_j) \sqrt{\frac{\log J+m}{n}}$, we obtain
\begin{align*}
\mathbb{P}\left( \norm{\triangledown \mathscr{L}_n(\theta^*) }_{\bar{\mathscr{G}},2,\infty} \geq (24\sigma_x/\min_j w_j)\sqrt{\dfrac{\log J+m}{n}} \right)
&\leq  \exp(-0.5n) +\exp(-\log J-m)\leq \epsilon
% \\
% &\leq  2\exp(-\log J -m)\leq \epsilon
\end{align*}
where the last inequality follows from the sample size condition $n \gtrsim (\log J+ m) \vee (1/\epsilon)^{1/\beta}$.

%%% proof of theorem 3.2
\subsection{Proof of Theorem 3.2 }
\label{supp_sec:pf_thm3.2}
%%% proof outline

\subsubsection{Proof Outline}
Defining $f(\theta^Tx) = \log(n_l / \pi n_u) + \theta^Tx - \log (1+e^{\theta^Tx})$, we recall that 
\begin{equation*}
\mathscr{L}_n(\theta)  = \dfrac{1}{n}\sum_{i=1}^{n}\left( -z_i f(\theta^Tx_i) + \log (1+e^{f(\theta^Tx_i)})\right)  .
\end{equation*}
Taking a derivative with respect to $ \theta $ of $ \mathscr{L}_n(\theta) $, we obtain
\begin{equation*}
\triangledown \mathscr{L}_n(\theta)  = \dfrac{1}{n}\sum_{i=1}^{n}\left( -z_i  +\mu(f(\theta^Tx_i) )\right) f'(\theta^Tx_i)x_i
\end{equation*}
and 
\begin{align}\label{eq:dRn}
&\left(\triangledown \mathscr{L}_n(\theta) - \triangledown \mathscr{L}_n(\theta^*)\right)^T\Delta\nonumber\\
&=  \left( \dfrac{1}{n}\sum_{i=1}^{n}\left( \mu(f(\theta^Tx_i) ) - z_i\right)  f'(\theta^Tx_i) -\left( \mu(f({\theta^*}^Tx_i)) -z_i\right) f'({\theta^*}^Tx_i)  \right)x_i^T\Delta
\end{align}
%	Given $ \theta $ such that $ \norm{\theta- \theta^*}_2 =\delta $ where $ 1\leq \delta \leq r $, define
where $\Delta$ is defined as $\Delta := \theta - \theta^*$, and $A(\cdot), \mu(\cdot)$ defined as $A(\eta):=\log(1+e^{\eta})$, $\mu(\eta) := A'(\eta)= e^{\eta}/(1+e^{\eta})$. Also we let $e_i:=\mu(f({\theta^*}^Tx_i))-z_i$.

To prove that \eqref{eq:dRn} is positive with high probability, we decompose \eqref{eq:dRn} into two terms, whose first term $I$ has a positive expectation and the second term $II$ has an expectation zero. 
To do so, we add and subtract $ \frac{1}{n}\sum_{i=1}^n e_if'(\theta^Tx_i)x_i$ to \eqref{eq:dRn} to obtain 
\begin{equation*}
\eqref{eq:dRn} = \dfrac{1}{n}\sum_{i=1}^{n} \left(\mu(f(\theta^Tx_i)) -\mu(f({\theta^*}^Tx_i)\right)) f'(\theta^Tx_i)x_i^T\Delta+ e_i(f'(\theta^Tx_i)- f'({\theta^*}^Tx_i))x_i^T\Delta.
\end{equation*}
% where we define $e_i:=\mu(f({\theta^*}^Tx_i))-z_i$.
Applying a Taylor expansion around $f({\theta^*}^Tx_i)$, we obtain

\begin{align}
&\left(\triangledown \mathscr{L}_n(\theta) - \triangledown \mathscr{L}_n(\theta^*)\right)^T\Delta \nonumber\\
& = \underbrace{ \dfrac{1}{n}\sum_{i=1}^{n} A''(f({\theta^*}^Tx_i) + v_i(f({\theta}^Tx_i)-f({\theta^*}^Tx_i)))(f({\theta}^Tx_i)-f({\theta^*}^Tx_i))f'(\theta^Tx_i)x_i^T\Delta}_\textrm{\large I}\label{eq:lossDifference1}\\
&\quad\quad +\underbrace{\dfrac{1}{n}\sum_{i=1}^{n} e_i(f'(\theta^Tx_i)- f'({\theta^*}^Tx_i))x_i^T\Delta}_\textrm{\large II}\text{ for }v_i \in [0,1]\label{eq:lossDifference2}
\end{align}
where $A''(\eta) = e^\eta/(1+e^\eta)^2$. 
We will show that the expectation of $I$ is positive. We immediately see $E[e_i(f'(\theta^Tx_i)- f'({\theta^*}^Tx_i))x_i^T\Delta]=0$ because $E[e_i|x_i] = 0$.

We aim to show each inequality
\begin{align}
   I &\geq \kappa_0 \norm{\Delta}_2^2 -\kappa_1 \norm{\Delta}_{\mathscr{G},2,1} \norm{\Delta}_2\sqrt{\frac{\log J +m}{n}}\label{eq:ineq_I}\\
   |II| &\leq \kappa_2 \norm{\Delta}_{\mathscr{G},2,1}\sqrt{\frac{\log J+m}{n}}\label{eq:ineq_II}
\end{align}
holds for all $\Delta \in \{\Delta; \|\Delta\|_2 \leq r\}$ with probability at least $1-\epsilon/2$ for some $\kappa_0, \kappa_1,\kappa_2>0$.

Then 
\begin{align*}
   I+II \geq \kappa_0 \norm{\Delta}_2^2 -\kappa_1 \norm{\Delta}_{\mathscr{G},2,1} \norm{\Delta}_2\sqrt{\frac{\log J +m}{n}}-\kappa_2 \norm{\Delta}_{\mathscr{G},2,1}\sqrt{\frac{\log J+m}{n}}
\end{align*}
holds for all $\Delta \in \{\Delta; \|\Delta\|_2 \leq r\}$ with probability at least $1-\epsilon$. Finally, by the inequality $a^2+b^2 \geq 2ab$, we obtain, 
\begin{align*}
   I+II \geq (\kappa_0/2) \norm{\Delta}_2^2 -(2\kappa_1^2/\kappa_0)  \left(\frac{\log J +m}{n}\right)\norm{\Delta}_{\mathscr{G},2,1}^2-\kappa_2 \sqrt{\frac{\log J+m}{n}}\norm{\Delta}_{\mathscr{G},2,1}
\end{align*}
for all $\Delta \in \{\Delta; \|\Delta\|_2 \leq r\}$ with probability at least $1-\epsilon$.

%%%
\subsubsection{Obtaining a lower bound of term \texorpdfstring{$I$}{I}}

We use a similar argument in \citeSupp{negahban_unified_2012}  to obtain a lower bound of the first term. The main difference is that we get the dependence on $ \theta $ for a curvature term, which is not the case for a canonical link $ f(\theta^Tx) = \theta^T x $. Since
% \begin{equation}\label{eq:fdiff}
% f(\theta^Tx) - f({\theta^*}^Tx) = x^T\theta-\log(1+e^{x^T\theta}) -(x^T\theta^* -\log(1+e^{x^T\theta^*}) ) = \dfrac{x^T(\theta - \theta^*)}{1+e^{x^T\theta^*+v x^T(\theta - \theta^*)}} 
% \end{equation} for some $ v \in [0,1] $ and 
$f'(u) = \dfrac{1}{1+e^{u}}$, the first term $I$ becomes
\begin{align*}
I &= \dfrac{1}{n}\sum_{i=1}^{n} A''(f({\theta^*}^Tx_i) + v_i(f({\theta}^Tx_i)-f({\theta^*}^Tx_i)))\dfrac{(x^T\Delta)^2}{(1+e^{x_i^T\theta^*+v_i' x_i^T\Delta})(1+e^{ x_i^T \theta})}.
\end{align*}
for some $ v_i' \in [0,1] $ by Taylor expansion. We note
\begin{align*}
I \geq \dfrac{1}{n}\sum_{i=1}^{n} \dfrac{A''(f({\theta^*}^Tx_i) + v_i(f({\theta}^Tx_i)-f({\theta^*}^Tx_i)))}{(1+e^{x_i^T\theta^*+v_i' x_i^T\Delta})(1+e^{ x_i^T \theta})}(x_i^T\Delta)^2\mathbbm{1}\{ |\Delta^Tx_i| \leq \tau\norm{\Delta}_2\}
\end{align*}
for any $\tau \geq 0$, as $A''(u) = \frac{e^{u}}{(1+e^u)^2} \geq 0 , \forall u$. A suitable $\tau$ will be chosen shortly. Since on the event 
\begin{equation}\label{eq:event}
|\Delta^Tx_i| \leq \tau\norm{\Delta}_2,
\end{equation}
we have
$\theta^Tx_i \leq |{\theta^*}^Tx_i| + |\Delta^Tx_i| \leq K_1^r+\tau r$ and 
\begin{align*}
|f({\theta^*}^Tx_i) + v_i(f({\theta^*}^Tx_i)-f({\theta}^Tx_i))|& \leq |f({\theta^*}^Tx_i) |+ |f({\theta^*}^Tx_i)-f({\theta}^Tx_i)| \\
& \leq \left\lvert \log\dfrac{n_l}{\pi n_u}\right\rvert  + |{\theta^*}^Tx_i| + |\Delta^Tx_i|,
% & \leq  K_2 + |{\theta^*}^Tx_i| + |\Delta^Tx_i|
\end{align*}
by Assumption 3 and the fact that $x^T\theta-\log(1+e^{x^T\theta}) $ is 1-Lipschitz in $x^T\theta$, $I$ can be further lower-bounded by 
\begin{equation*}
I \geq \dfrac{L_0(\tau)}{n}\sum_{i=1}^{n}(x_i^T\Delta)^2\mathbbm{1}\{ |\Delta^T x_i| \leq \tau\norm{\Delta}_2\}, 
\end{equation*}
where $L_0(\tau)$ is defined as $\displaystyle L_0(\tau) := \inf_{|u| \leq K_2+K_1^r+\tau r}\dfrac{A''(u)}{(1+e^{K_1^r+\tau r })^2}$. Finally, we truncate each term $(x_i^T\Delta)^2\mathbbm{1}\{ |\Delta^T x_i| \leq \tau\norm{\Delta}_2\}$ so that each term is Lipschitz in $(x_i^T\Delta)$. For a truncation level $ \tau>0 $, we define the following function:
\begin{equation*}
\varphi_{\tau}(u) = 
\begin{cases}
u^2& \text{if }|u|\leq \frac{\tau}{2}\\
(\tau - u)^2 & \text{if } \frac{\tau}{2} \leq |u| \leq \tau\\
0& \text{otherwise}
\end{cases}
\end{equation*}
and note that $I \geq  \frac{1}{n}\sum_{i=1}^{n} L_0(\tau)  \varphi_{\tau\norm{\Delta}_2 }(\Delta^Tx_i) $,
% For each $i$, the following inequality holds,
% \begin{equation*}
% (\Delta^Tx_i)^2\mathbbm{1}\{|\theta^*x_i |\leq T\norm{\theta^*}_2, |\Delta^Tx_i| \leq \tau\norm{\Delta}_2\} \geq \varphi_{\tau\norm{\Delta}_2 }(\Delta^Tx_i \mathbbm{1}\{ |{\theta^*}^Tx_i| \leq T\norm{\theta^*}_2 \})
% \end{equation*} 
since if the event~\eqref{eq:event} holds, $(\Delta^Tx_i)^2  \geq \varphi_{\tau\norm{\Delta}_2 }(\Delta^Tx_i )$, and both left and right-hand sides are $0$ if the event does not hold.

Defining $I_\ell$ as 
\begin{align}\label{eq:lossdiff1_lowerbound}
I_\ell := \frac{L_0(\tau)}{n}\sum_{i=1}^{n} \varphi_{\tau\norm{\Delta}_2 }(\Delta^Tx_i),
\end{align}
we note that it is sufficient to show the inequality
\begin{equation}\label{eq:ineq_1}
    I_\ell \geq \kappa_0 \norm{\Delta}_2^2 -\kappa_1 \norm{\Delta}_{\mathscr{G},2,1} \norm{\Delta}_2\sqrt{\frac{\log J +m}{n}}
\end{equation}
holds with high probability for all $\Delta \in \{\Delta; \|\Delta\|_2\leq r\} $ to prove \eqref{eq:ineq_I}. To do so, first we will show the inequality \eqref{eq:ineq_1} is true for $\Delta \in \mathbb{S}(\delta,t)$, where we define
\begin{equation}\label{def:S(delta,t)}
  \mathbb{S}(\delta,t):= \{ \Delta \in \mathbb{R}^p ; \norm{\Delta}_2 = \delta, \norm{\Delta}_{\mathscr{G},2,1}/\norm{\Delta}_2 \leq t\}.
\end{equation}
If $\Delta =0$, the inequality \eqref{eq:ineq_1} is trivially true. Otherwise, we show that 
\begin{align}\label{eq:not_uniform_ineq}
   \frac{L_0(\tau) }{n\delta^2}\sum_{i=1}^{n}  \varphi_{\tau\norm{\Delta}_2 }(\Delta^Tx_i)\geq \kappa_0 -\kappa_1 t \sqrt{\frac{\log J+m}{n}},
\end{align}
is true for all $\Delta \in \mathbb{S}(\delta,t)$ with high probability. Then we will use a homogeneity property of $\varphi$ and peeling argument to obtain a uniform result over ($\delta,t$).

%%%
\subsubsection{Bounding Expectation of Term \texorpdfstring{$I$}{I}}

We note that $I_\ell$ is lower bounded by,
\begin{align*}
    I_\ell = E[I_\ell] + (I_\ell -E[I_\ell])
     \geq E[I_\ell] -\sup_{\Delta \in \mathbb{S}(\delta,t)} |I_\ell -E[I_\ell]|.
\end{align*}

In this sub-section, we obtain the lower bound of $E[I_\ell]$, which is strictly positive with a suitably chosen $ \tau$. In the next sub-section, we will control the deviation term $\sup_{\Delta \in \mathbb{S}(\delta,t)} |I_\ell -E[I_\ell]|$.
First we have $ E[I_\ell] = L_0(\tau) E\left[\varphi_{\tau\norm{\Delta}_2} (\Delta^Tx ) \right]$ where $x\overset{d}{=} x_i$, and 
%	First, we calculate an expectation of $ \varphi_{\tau\norm{\Delta}_2} (\Delta^Tx \mathbbm{1}\{ |{\theta^*}^Tx| \leq T\norm{\theta^*}_2 \})  $.
\begin{equation*}
E\left[\varphi_{\tau\norm{\Delta}_2} (\Delta^Tx) \right] = E[(\Delta^Tx)^2] - E[(\Delta^Tx)^2 -\varphi_{\tau\norm{\Delta}_2 }(\Delta^Tx) ].
\end{equation*}
We lower and upper bound each two terms on the right-hand side by
\begin{equation*}
E[(\Delta^Tx)^2] \geq  K_0\norm{\Delta}_2^2 
\end{equation*} and 
\begin{align*}
&E[(\Delta^Tx)^2 -\varphi_{\tau\norm{\Delta}_2} (\Delta^Tx) ]  
%&= E[(\Delta^Tx)^2\mathbbm{1}\{ |\Delta^Tx| \leq \tau , |{\theta^*}^Tx| \leq T \}]\\
\leq  E\left[(\Delta^Tx)^2\mathbbm{1}\left\lbrace  |\Delta^Tx| \geq \frac{\tau\norm{\Delta}_2}{2}\right\rbrace \right] 
% +E\left[(\Delta^Tx)^2\mathbbm{1}\{ |{\theta^*}^Tx| \geq T\norm{\theta^*}_2  \}\right] .
\end{align*}
Applying the Cauchy-Schwarz inequality, we obtain
\begin{align*}
E\left[(\Delta^Tx)^2\mathbbm{1} \left\lbrace |\Delta^Tx| \geq \frac{\tau\norm{\Delta}_2}{2}\right\rbrace\right] 
&\leq \sqrt{E(\Delta^Tx)^4}\sqrt{\mathbb{P}\left(|\Delta^Tx|\geq \frac{\tau\norm{\Delta}_2}{2}\right)}\\
&\leq 4\sqrt{2}\sigma_x^2 \exp\left(-\frac{\tau^2}{16 \sigma_x^2}\right)\|\Delta\|_2^2
% E\left[(\Delta^Tx)^2\mathbbm{1}\{ |{\theta^*}^Tx| \geq T\norm{\theta^*}_2\}\right] \leq \sqrt{E(\Delta^Tx)^4}\sqrt{ \mathbb{P}( |{\theta^*}^Tx| \geq T\norm{\theta^*}_2)}.
\end{align*}
by using expectation and tail-bound of sub-Gaussians, since $ \Delta^Tx \sim \mbox{subG} (\norm{\Delta}_2^2 \sigma_x^2)$. As
$ 4\sqrt{2} \sigma_x^2 \left(\exp\left(-\frac{\tau^2}{16 \sigma_x^2}\right)\right)\leq \dfrac{K_0}{4}$ for $ \tau^2 \geq 16\sigma_x^2 \log \dfrac{16\sqrt{2} \sigma_x^2}{K_0} $, we take $\tau = K_3 := 4 \sigma_x \left(\log \frac{16 \sqrt{2} \sigma_x^2}{K_0}\right)^{1/2}$ to have
\begin{align}\label{eq:expectation_of_lbI}
E[I_\ell]&= L_0(K_3) E\left[\varphi_{K_3\norm{\Delta}_2} (\Delta^Tx) \right]\nonumber \\
& \geq L_0(K_3) \|\Delta\|_2^2\left(K_0 -4\sqrt{2}\sigma_x^2 \exp\left(-\frac{\tau^2}{16 \sigma_x^2}\right) \right) \nonumber \\
& \geq \norm{\Delta}_2^2\dfrac{3L_0(K_3) K_0}{4}.
\end{align}
For simplicity, we write $L_0 := L_0(K_3)$ for future references.
% and
% \begin{equation*}
% \sqrt{E(\Delta^Tx)^4 }\leq 4\norm{\Delta}_2^2\sigma_x^2, 
% \end{equation*}   
% Since $ \Delta^Tx \sim \mbox{subG} (\norm{\Delta}_2^2 \sigma_x^2) , {\theta^*}^Tx \sim \mbox{subG} (\norm{\theta^*}_2^2 \sigma_x^2) $, 
% \begin{equation*}
% \sqrt{E(\Delta^Tx)^4 }\sqrt{P(|\Delta^Tx|\geq \frac{\tau\norm{\Delta}_2}{2})} \leq   4\sqrt{2}\norm{\Delta}_2^2\sigma_x^2 \exp\left(-\frac{\tau^2}{16 \sigma_x^2}\right)
% \end{equation*}
% \begin{equation*}
% \sqrt{E(\Delta^Tx)^4 }\sqrt{\mathbb{P}(|{\theta^*}^Tx|\geq T\norm{\theta^*}_2)} \leq   4\sqrt{2}\norm{\Delta}_2^2\sigma_x^2 \exp\left(-\frac{T^2}{4 \sigma_x^2}\right)
% \end{equation*}

% Defining $K_3:= 4\sigma_x \left(\log \dfrac{32\sqrt{2} \sigma_x^2}{K_0}\right)^{1/2}$, we simply choose $T=\tau = K_3$. Then we have,
% \begin{align}\label{eq:expectation_of_lbI}
% E[I_\ell]&=L_0E\left[\varphi_{\tau\norm{\Delta}_2} (\Delta^Tx \mathbbm{1}\{ |{\theta^*}^Tx| \leq T\norm{\theta^*}_2 \}) \right]\nonumber \\
% & \geq L_0\norm{\Delta}_2^2\left(K_0 -4\sqrt{2}\sigma_x^2\left(\exp\left(-\frac{\tau^2}{16 \sigma_x^2}\right)+ \exp\left(-\frac{T^2}{4\sigma_x^2}\right)\right) \right) \nonumber \\
% & \geq \norm{\Delta}_2^2\dfrac{3L_0 K_0}{4}.
% \end{align}

\subsubsection{Controlling the difference of Term \texorpdfstring{$I$}{I} from its expectation}
We now bound the term $\displaystyle \sup_{\Delta \in \mathbb{S}(\delta,t) } |I_\ell-E[I_\ell]|$ using the concentration property of an empirical process. 
% Considering the set $ S(t) := \{ \theta'-\theta^* \in \mathbb{R}^p; \norm{\theta' - \theta^*}_{\mathscr{G},2,1} =t \norm{\Delta}_2 \mbox{ and } \norm{\theta' - \theta^*}_2 = \norm{\Delta}_2\} $, 
We have $\displaystyle \sup_{\Delta \in \mathbb{S}(\delta,t) } |I_\ell-E[I_\ell]| = \delta^2 L_0 U_1(t)$, where we define $U_1(t)$ as
%$ S(t) := \{\Delta:  \frac{\norm{\Delta}_1 }{\norm{\Delta}_2}= t , \norm{\Delta}_2 = \delta\} $, we define
%	i.e, an empirical mean cannot deviate too much from the population mean. 
%	
%	Supremum of any empirical process should be concentrated near its mean. Now, we get the lower bound for the difference between empirical mean and population mean.
\begin{align*}
U_1(t) &: = \sup_{\Delta\in \mathbb{S}(\delta,t) } \left\lvert \frac{1}{n\norm{\Delta}_2^2}\sum_{i=1}^{n}\varphi_{K_3\norm{\Delta}_2} (\Delta^Tx_i) -E\left[\varphi_{K_3\norm{\Delta}_2 }(\Delta^Tx) \right]\right\rvert,
\end{align*} 
since $\norm{\Delta}_2 = \delta$ for all $\Delta \in \mathbb{S}(\delta,t)$. Since we have $ \norm{\varphi_{K_3\norm{\Delta}_2} }_\infty \leq \dfrac{K_3^2\norm{\Delta}_2^2}{4} $ by definition of $\varphi_\tau(\cdot)$, we apply bounded difference inequality with $c_i = K_3^2/2n$~(Theorem \ref{thm:BDI}) to obtain
% Using the Azuma-Hoeffding inequality and the fact  we have
\begin{equation*}
\mathbb{P}(U_1(t) \geq EU_1(t) + u_1) \leq  2\exp\left(-\dfrac{8 n u_1^2}{ K_3^4}\right).
\end{equation*}
Setting $u_1 = K_0/4  $, 
\begin{equation}\label{eq:U1empbound}
\mathbb{P}(U_1(t) \geq \mathbb{E}[U_1(t)] + \dfrac{ K_0 }{4}  ) \leq  2\exp(-c_1 n)
\end{equation}
where $ c_1 = K_0^2/2K_3^4 $ is a constant depending on $ K_0$ and $K_3$.
Now we calculate $ EU_1(t) $. 
By symmetrization and contraction inequalities (Theorems \ref{thm:symmetrization},~\ref{thm:contraction}), we have
\begin{align}\label{eq:U1expectationbound}
E[U_1(t)] &\leq 2 E\left[\sup_{\Delta\in \mathbb{S}(\delta,t) }\left\lvert\ \frac{1}{n\norm{\Delta}^2_2}\sum_{i=1}^{n}\epsilon_i \varphi_{K_3\norm{\Delta}_2} (\Delta^Tx_i )\right\rvert\right]\nonumber\\
&\leq \frac{8 K_3 \delta}{\delta^2}E\left[\sup_{\Delta\in \mathbb{S}(\delta,t) }\left\lvert\ \frac{1}{n}\sum_{i=1}^{n}\epsilon_i \Delta^Tx_i \right\rvert\right]\nonumber\\
&\leq 8 K_3\delta^{-1} \left(\sup_{\Delta\in \mathbb{S}(\delta,t) } \norm{\Delta}_{\mathscr{G},2,1} \right)E\left[
\norm{ \frac{1}{n}\sum_{i=1}^{n}\epsilon_i x_i}_{\bar{\mathscr{G}},2,\infty}\right]\nonumber\\
&\leq 8 K_3K_4 t  \sqrt{\dfrac{\log J+m}{n}}
\end{align}
where $(\epsilon_i)_{i=1}^n$ are i.i.d Rademacher variables and $K_4:= 20 \sigma_x (\min_j w_j)^{-1}$. Note that $ \varphi_{K_3\norm{\Delta}_2} $ is a Lipschitz function with the Lipschitz constant = $ 2 K_3 \norm{\Delta}_2 = 2 K_3 \delta$ for $\Delta \in \mathbb{S}(\delta,t)$ which allows us to apply the Ledoux-Talagrand contraction theorem. The second last inequality is from Lemma \ref{lem:blk_holder} and the last inequality follows from $E\left[
\norm{ \frac{1}{n}\sum_{i=1}^{n}\epsilon_i x_i}_{\bar{\mathscr{G}},2,\infty}\right]\leq K_4 \sqrt{\frac{\log J+m}{n}}$, which will be proven shortly in Lemma \ref{lem:2.9}.

Therefore, combining \eqref{eq:expectation_of_lbI}, \eqref{eq:U1empbound} and \eqref{eq:U1expectationbound}, we have
\begin{equation}\label{eq:non_unif_ineq_1}
    \inf_{\Delta \in \mathbb{S}(\delta,t)}\frac{L_0}{n\|\Delta\|_2^2} \sum_{i=1}^{n} \varphi_{K_3\norm{\Delta}_2 }(\Delta^Tx_i) \geq \kappa_0  -\kappa_1' t\sqrt{\frac{\log J +m}{n}} 
\end{equation}
with probability at least $1-\exp(-c_1 n)$ where $\kappa_0 = K_0 L_0/2$ and $\kappa_1' = 8 L_0 K_3 K_4$. It remains to prove Lemma \ref{lem:2.9}.

\begin{lemma}\label{lem:2.9}
\begin{equation}\label{lem2.9:eq1}
	E\left[\norm{ \frac{1}{n}\sum_{i=1}^{n}\epsilon_i x_i }_{\bar{\mathscr{G}},2,\infty}\right] \leq  c \sqrt{\frac{\log J + m}{n}} 
\end{equation}
 for $n\geq \log p$, where $c:=20\sigma_x (\min_j w_j)^{-1}$ is a constant depending on $\sigma_x, (w_j)_1^J$.
\begin{proof}
% First we define $ u_{ij} :=  x_{ij} \mathbbm{1}\{ |{\theta^*}^Tx_i| \leq c_0\norm{\theta^*}_2 \}$ and $u_i := [u_{i1},\dots,u_{ip}]^T$. 
% \begin{equation*}
% E\left[\norm{ \frac{1}{n}\sum_{i=1}^{n}\epsilon_i x_i }_{\bar{\mathscr{G}},2,\infty}  \right] = E\left[E\left[\norm{ \frac{1}{n}\sum_{i=1}^{n}\epsilon_i u_{i}}_{\bar{\mathscr{G}},2,\infty} |x_1^n \right] \right].
% \end{equation*}
Conditioned on $  x_1^{n} , \empavg\epsilon_i x_{ij}$ is a sub-Gaussian with a parameter $ \frac{1}{n^2}\sum_i x_{ij}^2 $, since $ \epsilon_i \sim \mbox{subG}(1) $. Then $\empavg\epsilon_i x_{ij} \sim \mbox{subG} (C(x)/n)$, where we define $C(x) = \max_{1\leq j \leq p} \frac{1}{n}\sum_i x_{ij}^2$ conditioned on $x_1^n$.
%
% Then $\empavg\epsilon_i u_{ij} \sim \mbox{subG} (5\sigma_x^2/n)$, for all $j$ on $\mathscr{T}$. 
%
Defining $u:=[u_1,\dots,u_p]^T \in \mathbb{R}^p$ as $u_j = \empavg\epsilon_i x_{ij}$, we have independence of $(u_j)_{j\in g_j}$ by Assumption 1. Following similar arguments as in the proof of Lemma 3.1, we obtain for any $j$ and $v \in \mathbb{R}^{|g_j|}$, $v^T u_{g_j} \sim \mbox{subG}((C(x)/n)\norm{v}_2^2)$ and $v^T (u_{g_j}\circ u_{g_j}) \sim \mbox{subExp}(\nu\norm{v}_2,\nu\norm{v}_\infty)$ with $\nu = 16C(x)/n$.
Then Lemma \ref{lem:exp_blknorm} gives, 
%Using a maximal inequality of sub-Gaussian random variables,
\begin{align*}
E\left[\norm{ \frac{1}{n}\sum_{i=1}^{n}\epsilon_i u_{i}}_{\bar{\mathscr{G}},2,\infty}|x_1^n \right] \leq
 4(\min_j w_j)^{-1}\sqrt{C(x)}\sqrt{\frac{\log J + m}{n}} .
%\leq \sqrt{  2\left( \max_{1\leq j \leq p}\frac{1}{n}\sum_{i=1}^n u_{ij}^2 \right) \log p} \leq c \sqrt{\frac{\log p}{n}}. 
\end{align*}
Therefore,
\begin{equation*}
E\left[\norm{ \frac{1}{n}\sum_{i=1}^{n}\epsilon_i x_i }_{\bar{\mathscr{G}},2,\infty}  \right] \leq 4(\min_j w_j)^{-1}\sqrt{\frac{\log J + m}{n}}E[\sqrt{C(x)}]
\end{equation*}
Now we upper-bound $E[\sqrt{C(x)}]$. By Holder's inequality, 
\begin{align*}
    E\left[\sqrt{\max_{1\leq j \leq p}\frac{1}{n}\sum_{i=1}^n x_{ij}^2}\right]\leq E\left[\max_{1\leq j \leq p}\frac{1}{n}\sum_{i=1}^n x_{ij}^2\right]^{1/2}
    % \leq E\left[\max_j\frac{1}{n}\sum_{i=1}^n x_{ij}^2\right]^{1/2}
\end{align*}
Now we define $z_{ij} := x_{ij}^2 - E[x^2_{ij}]$ for each $1 \leq i\leq n$ and $1\leq j \leq p$ and $z_j = [z_{1j},\dots,z_{nj}]^T$. Using Lemma \ref{lem:subG_expk}, we have,
% We note $z_{ij} \sim \mbox{subExp}(16\sigma_x^2)$, by Lemma \ref{lem:subG_subExp}. $(z_{ij})_{i=1}^n$ are independent for each $j$. 
\begin{align*}
    E\left[\max_{1\leq j \leq p}\frac{1}{n}\sum_{i=1}^n x_{ij}^2\right]\leq E\left[\max_{1\leq j \leq p}\frac{1}{n}\sum_{i=1}^n z_{ij} \right]+ 4\sigma_x^2.
\end{align*}
Since $\mathbbm{1}^Tz_j \sim \mbox{subExp}(16\sigma_x^2 \sqrt{n}, 16\sigma_x^2)$ by Lemma \ref{lem:subG_subExp}, we apply Lemma \ref{lem:exp_subExp} with $\nu_* = 16\sigma_x^2\sqrt{n}$, $c = 1/\sqrt{n}$ (taking $m_j=1,\forall j$) to obtain
\begin{align*}
    n^{-1}E[\max_{1\leq j\leq p} \mathbbm{1}^T z_j] \leq n^{-1}16\sigma_x^2(\log p+n/2) = 16\sigma_x^2 \frac{\log p}{n} + 8\sigma_x^2,
\end{align*}
Hence, 
$$E[\sqrt{C(x)}] \leq 4\sigma_x \sqrt{\log p/n +1/2} \leq 5\sigma_x$$
by the condition of $\log p/n \leq 1$, and thus,
\begin{equation*}
E\left[\norm{ \frac{1}{n}\sum_{i=1}^{n}\epsilon_i x_i }_{\bar{\mathscr{G}},2,\infty}  \right] \leq 20 \sigma_x(\min_j w_j)^{-1}\sqrt{\frac{\log J + m}{n}}
\end{equation*}
%where $c$ is a constant depending on $\sigma_x, (w_j)_1^{J}$.
%For the second inequality, \eqref{lem1.4:eq2}, repeat the argument with $u_{ij} = x_{ij}$.
\end{proof}
\end{lemma}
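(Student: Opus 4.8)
The plan is to prove the bound by conditioning on the design matrix $x_1^n$, reducing the Rademacher average to a conditionally sub-Gaussian random vector whose proxy variance is controlled by the empirical second moments, and then taking the outer expectation using the maximal inequalities for sub-exponential variables established earlier in this section.

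First I would condition on $x_1^n$ and set $u_j := \frac{1}{n}\sum_{i=1}^n \epsilon_i x_{ij}$. Since the $\epsilon_i$ are independent Rademacher variables, each with sub-Gaussian parameter $1$, the conditional law of $u_j$ is sub-Gaussian with proxy variance $\frac{1}{n^2}\sum_i x_{ij}^2 \le C(x)/n$, where I define the data-dependent quantity $C(x) := \max_{1\le j \le p} \frac{1}{n}\sum_i x_{ij}^2$. Using Assumption 1 to obtain independence of the coordinates within each group and arguing exactly as in the proof of Lemma 3.1, I would verify that for every group $g_j$ and every fixed $v \in \mathbb{R}^{|g_j|}$ one has $v^T u_{g_j} \sim \mathrm{subG}((C(x)/n)\norm{v}_2^2)$ and $v^T(u_{g_j}\circ u_{g_j}) \sim \mathrm{subExp}(\nu\norm{v}_2, \nu\norm{v}_\infty)$ with $\nu = 16C(x)/n$. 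These are precisely the hypotheses of Lemma \ref{lem:exp_blknorm} with $\sigma^2 = C(x)/n$, so applying it conditionally on $x_1^n$ gives $E[\norm{u}_{\bar{\mathscr{G}},2,\infty}\mid x_1^n] \le 4(\min_j w_j)^{-1}\sqrt{C(x)}\sqrt{(\log J + m)/n}$, since $\max(\nu, 8\sigma^2) = 16\sigma^2$ in this case.

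Taking the outer expectation over $x_1^n$ then factors out the deterministic $\sqrt{(\log J + m)/n}$ and reduces the problem to bounding $E[\sqrt{C(x)}]$. By Jensen's inequality it suffices to bound $E[C(x)] = E[\max_j \frac{1}{n}\sum_i x_{ij}^2]$. I would center, writing $z_{ij} := x_{ij}^2 - E[x_{ij}^2]$, and use $E[x_{ij}^2]\le 4\sigma_x^2$ (from Lemma \ref{lem:subG_expk}) to obtain $E[C(x)] \le E[\max_j \frac{1}{n}\sum_i z_{ij}] + 4\sigma_x^2$. Lemma \ref{lem:subG_subExp} gives $\mathbbm{1}^T z_j \sim \mathrm{subExp}(16\sigma_x^2\sqrt{n}, 16\sigma_x^2)$, and Lemma \ref{lem:exp_subExp} (with $\nu_* = 16\sigma_x^2\sqrt{n}$, $c = 1/\sqrt{n}$, each block of size $1$) bounds the maximum of these $p$ centered sums, yielding $n^{-1}E[\max_j \mathbbm{1}^T z_j] \le 16\sigma_x^2 \frac{\log p}{n} + 8\sigma_x^2$. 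Invoking the sample-size hypothesis $n \ge \log p$ collapses the $\log p / n$ term to a constant, giving $E[\sqrt{C(x)}]\le 5\sigma_x$ and hence the claimed constant $c = 20\sigma_x(\min_j w_j)^{-1}$.

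The main obstacle is that the sub-Gaussian proxy variance of the conditional Rademacher average is itself the random quantity $C(x)$ rather than a fixed constant, so one cannot directly invoke a standard conditional bound and be done; the crux is therefore the control of $E[\sqrt{C(x)}]$, which requires the sub-exponential tail of the squared sub-Gaussian coordinates (Lemma \ref{lem:subG_subExp}) together with the sub-exponential maximal inequality (Lemma \ref{lem:exp_subExp}), and it is precisely here that the condition $n \ge \log p$ is needed to keep $E[C(x)]$ of constant order. A secondary point requiring care is the within-group independence supplied by Assumption 1, which is what allows the passage from coordinatewise sub-Gaussianity to the sub-exponential control of $v^T(u_{g_j}\circ u_{g_j})$ demanded by Lemma \ref{lem:exp_blknorm}.
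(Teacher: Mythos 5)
Your proposal is correct and follows essentially the same route as the paper's own proof: conditioning on $x_1^n$, applying Lemma \ref{lem:exp_blknorm} with the random proxy variance $C(x)/n$, and then bounding $E[\sqrt{C(x)}]$ via Jensen, the centering step, Lemma \ref{lem:subG_subExp}, and the maximal inequality of Lemma \ref{lem:exp_subExp} with the identical parameters $\nu_* = 16\sigma_x^2\sqrt{n}$, $c = 1/\sqrt{n}$. Your added remark that $\max(\nu, 8\sigma^2) = 16\sigma^2$ in the conditional application is a correct clarification of where the factor $4$ comes from, and your identification of $E[\sqrt{C(x)}]$ as the crux matches the structure of the paper's argument exactly.
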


\subsubsection{Extending the inequality \texorpdfstring{\eqref{eq:non_unif_ineq_1}}{(S29)} for all \texorpdfstring{$\Delta \in \mathbb{B}_2(r)$}{Delta in B2(r)} }

In this section, we show 
\begin{equation}\label{eq:unif_ineq_1}
\frac{L_0}{n\|\Delta\|_2^2} \sum_{i=1}^{n} \varphi_{K_3\norm{\Delta}_2 }(\Delta^Tx_i ) \geq \kappa_0  -\kappa_1 \left(\frac{\|\Delta\|_{\mathscr{G},2,1} }{\|\Delta\|_2}\right)\sqrt{\frac{\log J +m}{n}} 
\end{equation}
holds for all $\|\Delta\|_2 = \delta$ with probability at least $1- \epsilon/2$ where $\kappa_1 = 2\kappa_1'$.  
Note if \eqref{eq:unif_ineq_1} holds, for any $\Delta' $ such that $\|\Delta'\|_2 = \delta'  \neq \delta$,  we can apply \eqref{eq:unif_ineq_1} to $\Delta = \Delta '(\delta/\delta')$ to obtain
\begin{equation*}
\frac{L_0}{n\|\Delta'\|_2^2} \sum_{i=1}^{n} \varphi_{K_3\norm{\Delta'}_2 }(\Delta'^Tx_i ) \geq \kappa_0  -\kappa_1 \left(\frac{\|\Delta'\|_{\mathscr{G},2,1} }{\|\Delta'\|_2}\right)\sqrt{\frac{\log J +m}{n}}
\end{equation*}
by using homogeneity property of $\varphi$,i.e. $ \varphi_\tau(x) =c^{-2}\varphi_{c\tau}(cx) $ for any $c > 0$. Thus proving that \eqref{eq:unif_ineq_1} holds for all $\|\Delta\|_2 = \delta$ with probability at least $1- \epsilon/2$ is enough to prove that the same inequality holds for all $\|\Delta\|_2 \leq r$ with the same high probability. We let $\mathbb{S}_2(\delta) := \{ \Delta \in \mathbb{R}^p; \|\Delta\|_2 = \delta \} $ and $K_w>0$ be a constant such that $\min_j w_j \geq K_w$, where the existence of $K_w$ is guaranteed by Assumption 4.
% Now we address \eqref{eq:unif_ineq_1}:
\begin{align}
&\mathbb{P}\left(\exists \Delta \in \mathbb{S}_2(\delta) \mbox{ such that }\mbox{inequality \eqref{eq:unif_ineq_1} fails } \right)\nonumber\\
&\leq \sum_{l=1}^{N_L} \mathbb{P}\left(\exists \Delta \in \mathbb{S}_2(\delta) ; K_w 2^{l-1} \leq \frac{\|\Delta\|_{\mathscr{G},2,1} }{\|\Delta\|_2} \leq K_w 2^{l} \mbox{ s.t } \mbox{inequality \eqref{eq:unif_ineq_1} fails } \right)\label{eq:fail_ineq_1}
\end{align}
where $2^{N_L} \leq (\max_j w_j/K_w) \sqrt{J}$, i.e. $N_L := \left\lceil \log_2\left(\max_j w_j \sqrt{J}/K_w\right)\right\rceil$, by the inequality $K_w\|\Delta\|_2 \leq (\min_j w_j)\|\Delta\|_2\leq \|\Delta\|_{\mathscr{G},2,1} \leq (\max_j w_j) \sqrt{J} \|\Delta\|_2$.

\begin{align*}
&\sum_{l=1}^{N_L}\mathbb{P}\left(\exists \Delta \in \mathbb{S}_2(\delta) ; K_w 2^{l-1} \leq \frac{\|\Delta\|_{\mathscr{G},2,1} }{\|\Delta\|_2} \leq K_w 2^{l} \mbox{ such that } \mbox{inequality \eqref{eq:unif_ineq_1} fails } \right)\\
&\leq \sum_{l=1}^{N_L}\mathbb{P}\left( \inf_{\Delta \in \mathbb{S}_2(\delta) ; \frac{\|\Delta\|_{\mathscr{G},2,1} }{\|\Delta\|_2} \leq (K_w 2^{l})}\frac{L_0}{n\|\Delta\|_2^2} \sum_{i=1}^{n} \varphi_{K_3\norm{\Delta}_2 }(\Delta^Tx_i ) < \kappa_0  -\kappa_1 (K_w 2^{l-1})\sqrt{\frac{\log J +m}{n}} \right)\\
& = \sum_{l=1}^{N_L}\mathbb{P}\left( \inf_{\Delta \in \mathbb{S}(\delta, (K_w 2^{l}))}\frac{L_0}{n\|\Delta\|_2^2} \sum_{i=1}^{n} \varphi_{K_3\norm{\Delta}_2 }(\Delta^Tx_i ) <\kappa_0  -\kappa_1' (K_w 2^{l})\sqrt{\frac{\log J +m}{n}} \right)\\
&\leq \exp(-c_1n + \log N_L) 
\end{align*}
by $\kappa_1 = 2\kappa_1'$ and the inequality \eqref{eq:non_unif_ineq_1}. Finally,
\begin{align*}
    \exp(-c_1n + \log N_L) 
    % \\
    % &= \exp\left(-c_1n +  \log \log_2 \left(\max_j w_j \sqrt{J}\right)\right) \\
    \leq \exp\left(-c_1n +  \log \log_2 (J^{3/2}/K_w)\right)
    \lesssim \exp\left(-c_1n + \log \log J \right)\leq \epsilon/2
\end{align*}
 by the sample size condition $n\gtrsim (\log J+m) \vee (1/\epsilon)^{1/\beta}$ and $\max_j w_j /J \leq 1$.

\subsubsection{Controlling the difference of Term \texorpdfstring{$II$}{II} from its expectation}
For the second term, we recall the definition :
\[ \large II =  \dfrac{1}{n}\sum_{i=1}^{n} e_i(f'(\theta^Tx_i)- f'({\theta^*}^Tx_i))x_i^T\Delta,\]
and note that $E[II]=0$ by $E[e_i|x_i] = 0$.
% since $ E[g(\Delta^Tx_i,{\theta^*}^Tx_i, z_i)]  =0 , \forall i $, by $ E[z|x] =  \mu(f({\theta^*}^Tx)) $.
%Define the subset of $ \mathbb{R}^p $ via
%\[ S_\delta := \{\Delta : \dfrac{\norm{\Delta}_1}{\norm{\Delta}_2} \leq \delta, \norm{\Delta}_1 \leq R \} \]
Similar to $U_1(t)$, we define a following quantity,
\[U_2(t) : = \sup_{ (1/2)t\leq \|\Delta\|_{ \mathscr{G},2,1} \leq t} \left\lvert \dfrac{1}{n\|\Delta\|_{\mathscr{G},2,1}}\sum_{i=1}^{n} e_i(f'(\theta^Tx_i)- f'({\theta^*}^Tx_i))x_i^T\Delta. \right\rvert \]
, and bound $E(U_2(t))$ using symmetrization and contraction theorem. First we define
\begin{align*}
    g_i(\Delta^Tx_i) :=  e_i \left(f'({\theta^*}^Tx_i+\Delta^Tx_i)- f'({\theta^*}^Tx_i)\right)\Delta^Tx_i.
\end{align*}
and prove that $ g_i/L_g$ is a contraction map where  $L_g := 3+(K_1^r/4)$.

\begin{lemma}\label{lem:2.10}
$ g_i(s)/L_g$ is a contraction map with $ g_i(0) = 0 $. 
\end{lemma}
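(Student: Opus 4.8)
The plan is to reduce the statement to a one-dimensional Lipschitz estimate in the scalar variable $s := \Delta^T x_i$ and then bound the Lipschitz constant uniformly over $s \in \mathbb{R}$. Write $a := {\theta^*}^T x_i$, so that $g_i(s) = e_i\,\big[(f'(a+s)-f'(a))\,s\big]$ with $e_i = \mu(f(a)) - z_i$. First I would record the elementary facts $f'(u) = 1/(1+e^u)\in(0,1)$ and $f''(u) = -e^u/(1+e^u)^2$, so that $|f'(u)| \le 1$ and $|f''(u)| \le 1/4$ for all $u$. The requirement $g_i(0)=0$ is immediate, since $g_i(0) = e_i(f'(a)-f'(a))\cdot 0 = 0$. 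Moreover $|e_i| \le 1$ because $\mu(\cdot)\in(0,1)$ and $z_i\in\{0,1\}$, so it suffices to prove that $s \mapsto (f'(a+s)-f'(a))\,s$ is $L_g$-Lipschitz; the factor $e_i$ only improves the bound.

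Next I would peel off the constant term by writing, for any $s,t$,
$$(f'(a+s)-f'(a))s - (f'(a+t)-f'(a))t = \big(f'(a+s)\,s - f'(a+t)\,t\big) - f'(a)(s-t).$$
Since $f'(a)\le 1$, the last term contributes Lipschitz constant at most $1$, and the problem reduces to controlling $\Phi(s) := f'(a+s)\,s$, with $\Phi'(s) = f''(a+s)\,s + f'(a+s)$. The term $f'(a+s)$ is bounded by $1$. For the term $f''(a+s)\,s$ the key trick is to write $s = (a+s) - a$, giving $|f''(a+s)\,s| \le |f''(a+s)(a+s)| + |f''(a+s)|\,|a|$.

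The main obstacle is the first piece $|f''(a+s)(a+s)|$, because the argument $a+s$ is unbounded; the resolution is that the linear growth is killed by the exponential decay of $f''$. Concretely, $e^u/(1+e^u)^2 \le e^{-|u|}$ (from $(1+e^u)^2 \ge e^{2u}$ for $u\ge 0$ together with the symmetry of $f''$), so $|u\,f''(u)| \le |u|\,e^{-|u|} \le 1/e \le 1$ uniformly in $u$, whence $|f''(a+s)(a+s)| \le 1$. For the second piece I would invoke Assumption~2: taking $\theta = \theta^*$ (which lies in the admissible set since $\|\theta^*-\theta^*\|_2 = 0 \le r$ and $\mathrm{supp}(0) = \emptyset$) gives $|a| = |{\theta^*}^T x_i| \le K_1^r$ almost surely, so $|f''(a+s)|\,|a| \le K_1^r/4$. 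Collecting these, $|\Phi'(s)| \le 1 + 1 + K_1^r/4 = 2 + K_1^r/4$, so $\Phi$ is $(2+K_1^r/4)$-Lipschitz; adding back the $f'(a)(s-t)$ term yields that $(f'(a+s)-f'(a))s$, and hence $g_i$, is $(3+K_1^r/4)$-Lipschitz. Therefore $g_i/L_g$ with $L_g = 3 + K_1^r/4$ is a contraction satisfying $(g_i/L_g)(0) = 0$, which is exactly what is needed to apply the contraction theorem (Theorem~\ref{thm:contraction}) to the term $II$.
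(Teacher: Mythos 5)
Your proof is correct and follows essentially the same route as the paper's: both bound $|g_i'|$ (equivalently, the Lipschitz constant) by using $|e_i|\leq 1$, rewriting $f''(a+s)\,s = f''(a+s)(a+s)-f''(a+s)\,a$ so that the unbounded factor is absorbed by the decay of $f''$, and invoking Assumption 2 to get $|a|\leq K_1^r$. The only differences are cosmetic (you bound $\sup_u|u f''(u)|$ by $1/e\leq 1$ where the paper uses $1/2$, landing at the same $L_g = 3+K_1^r/4$).
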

\begin{proof}
We consider the first derivative of $g_i$. For ease of notation, we let $u^*_i := {\theta^*}^Tx_i$. We note $f'(u) =1/(1+e^u)$,$f'(u) = -e^u/(1+e^u)^2$. Thus $\sup_u |f'(u)| \leq 1, \sup_u |f''(u)|\leq 1/4$. Also, elementary calculation shows that $\sup_u|uf'(u)|,\sup_u|uf''(u)| \leq 1/2$. Since,
\begin{align*}
    g_i(u) = e_i (f'(u^*_i+u)-f'(u^*_i))u
\end{align*}
we have,
\begin{align*}
    |g_i'(u)| &= |e_i (f''(u^*_i+u)u+f'(u^*_i+u)-f'(u^*_i))|\\
    &\leq |f''(u^*_i+u)(u^*_i+u) - f''(u^*_i+u)u^*_i+f'(u^*_i+u)-f'(u^*_i))|\\
    &\leq 3+(1/4)|u^*_i|
\end{align*}
where $|e_i|\leq 1$ was used in the first inequality.  By Assumption 2, 
$u^*_i:=|{\theta^*}^Tx_i|\leq K_1^r $, thus we can take $L_g := 3+(1/4)K_1^r$.
\end{proof}

Back to $E(U_2(t))$, by symmetrization and contraction theorem (Theorems \ref{thm:symmetrization},\ref{thm:contraction}),
\begin{align}\label{eq:U2expectationbound}
E(U_2(t)) &\leq 4 L_g E\left[ \sup_{ (1/2)t\leq \|\Delta\|_{ \mathscr{G},2,1} \leq t}\left\lvert \dfrac{1}{n\|\Delta\|_{\mathscr{G},2,1}} \sum_{i=1}^{n} \epsilon_i  \Delta^Tx_i \right\rvert\right]\nonumber\\
&\leq 4L_g E\left[ \sup_{ (1/2)t\leq \|\Delta\|_{ \mathscr{G},2,1}\leq t}\dfrac{1}{(1/2)t} \norm{\Delta}_{\mathscr{G},2,1}\norm{ \frac{1}{n}\sum_{i=1}^{n} \epsilon_i  x_i}_{\bar{\mathscr{G}},2,\infty}  \right]\nonumber\\
&\leq  8K_4L_g \sqrt{\dfrac{\log J+m}{n}} .
\end{align}
where the second inequality uses the fact that $(1/2)t\leq \|\Delta\|_{ \mathscr{G},2,1} \leq t$ and Lemma \ref{lem:blk_holder}, and the last inequality comes from Lemma \ref{lem:2.9}.

Now, we apply bounded difference inequality to show that $U_2(t)$ is close to $E(U_2(t))$ with probability at least $1-\exp(-c'n)$. 
We have, 
\begin{align*}
	\sup_{i,\theta}\frac{1}{n\|\Delta\|_{\mathscr{G},2,1}}|g_i(\Delta^Tx_i)| &=\sup_{i,\theta}\frac{1}{n\|\Delta\|_{ \mathscr{G},2,1}}\left\lvert e_i\left( f'({\theta^*}^Tx_i+\Delta^Tx_i)-f'({\theta^*}^Tx_i)\right) \Delta^Tx_i	\right\rvert\\
	&\leq \sup_{i,\theta}\frac{2}{n\|\Delta\|_{\mathscr{G},2,1}}|\Delta^Tx_i|\leq \frac{2}{n}\max_{i,j} w_j^{-1}\|(x_i)_{g_j}\|_{2}
\end{align*}
by Lemma \ref{lem:blk_holder}. We note for any $w \in \mathbb{R}^{p}$ such that $w_{g_j^c}=0$ and $\|w\|_2 = 1$, $u\in\mathbb{R}^p$, defined as $u := \theta^*+r w$, satisfies $\|u-\theta^*\|_2 \leq r$ and $supp(u-\theta^*) \subseteq g_j$. By Assumption 2, $|x_i^Tu| \leq K_1^r$ a.s. for all $i$. Then $|x_i^T w| = |x_i^T(u-\theta^*)|/r \leq 2 K_1^r/r$ a.s., which implies $\|(x_i)_{g_j}\|_2 \leq 2K_1^r/r$ since $\displaystyle\|(x_i)_{g_j}\|_2 = \sup_{v\in \mathbb{R}^{|g_j|};\|v\|_2=1}|(x_i)_{g_j}^T v|=\sup_{w\in \mathbb{R}^{p};\|w\|_2=1, w_{g_j^c}=0}|x_i^T w|$. As the bound holds for any $i,j$, we have $\max_{i,j}  \|(x_i)_{g_j}\|_2 \leq  2K_1^r/r$.

Hence by applying Theorem \ref{thm:BDI} with $c_i = (8K_1^r/K_w r)n^{-1}$, we obtain
\begin{equation*}
\mathbb{P}(U_2(t) \geq EU_2(t) + u_2) \leq  \exp(-2 u_2^2/\sum_{i=1}^n c_i^2 )
\end{equation*}
Taking $u_2 = K_4L_g\sqrt{\dfrac{\log J+m}{n}}$, we get
\begin{align*}
\mathbb{P}\left( U_2(t) \geq 9 K_4L_g \sqrt{\dfrac{\log J+m}{n}}\right)\leq  \exp (- c_2(\log J+m))
\end{align*}
where $ c_2 := (K_w rK_4L_g)^2/32(K_1^r)^2$. In other words, we have shown, for any $t>0$,
\begin{align}\label{eq:non_unif_ineq_2}
&\mathbb{P}\left(\left\lvert \dfrac{1}{n}\sum_{i=1}^{n} e_i(f'({\theta^*}^Tx_i+\Delta^Tx_i)- f'({\theta^*}^Tx_i))x_i^T\Delta \right\rvert \leq  \kappa_2' \|\Delta\|_{\mathscr{G},2,1} \sqrt{\dfrac{\log J+m}{n}},\forall (1/2)t\leq \|\Delta\|_{ \mathscr{G},2,1} \leq t \right)\nonumber\\
&\geq 1-\exp(-c_2(\log J+m))
\end{align}
where we define $\kappa_2' := 9 K_4L_g$.
\subsubsection{Extending the inequality \texorpdfstring{\eqref{eq:non_unif_ineq_2}}{(S34)} for all \texorpdfstring{$\Delta \in \mathbb{B}_2(r)$}{Delta in B2(r)}}
In this section, we obtain a uniform result for term II. More concretely, we consider the following inequality:
\begin{align}\label{eq:unif_ineq_2}
\left\lvert \dfrac{1}{n}\sum_{i=1}^{n} e_i(f'({\theta^*}^Tx_i+\Delta^Tx_i)- f'({\theta^*}^Tx_i))x_i^T\Delta \right\rvert \leq  \kappa_2 \|\Delta\|_{\mathscr{G},2,1} \sqrt{\dfrac{\log J+m}{n}}
\end{align}
where $\kappa_2 := 10K_4L_g$. Equivalently, defining
\begin{equation*}
    \phi(\Delta; x_1^n,z_1^n) := \dfrac{1}{n\|\Delta\|_{\mathscr{G},2,1}}\sum_{i=1}^{n} e_i(f'({\theta^*}^Tx_i+\Delta^Tx_i)- f'({\theta^*}^Tx_i))x_i^T\Delta
\end{equation*}
for $\Delta \neq 0$,
we aim to establish the result,
\begin{align*}
\mathbb{P}\left(|\phi(\Delta; x_1^n,z_1^n)| \leq  \kappa_2  \sqrt{\dfrac{\log J+m}{n}},  \forall \Delta \in \mathbb{B}_2(r)\right)\geq 1-\epsilon/2.
\end{align*}
We first define
\begin{align*}
\mathbb{A}(r_1,r_2) := \{ \Delta \in \mathbb{R}^p ; r_1 <\|\Delta\|_{\mathscr{G},2,1} \leq r_2\}
\end{align*}
and decompose $\mathbb{B}_2(r)$ into different regions. We have,
\begin{align}
&\mathbb{P}\left(\exists \Delta \in \mathbb{B}_2(r) \mbox{ such that }\mbox{inequality \eqref{eq:unif_ineq_2} fails } \right)\nonumber\\
&\leq \mathbb{P}\left(\exists \Delta \in \mathbb{A}(0,C_n) \mbox{ such that }\mbox{inequality \eqref{eq:unif_ineq_2} fails } \right)\label{eq:fail_ineq_2_1}\\
&+\sum_{k=1}^{N_K} \mathbb{P}\left(\exists \Delta \in \mathbb{A}(r_{k-1},r_k) \mbox{ such that }\mbox{inequality \eqref{eq:unif_ineq_2} fails } \right)\label{eq:fail_ineq_2_2}
\end{align}
where we define 
\begin{align*}
C_n&:= K_4L_g\left(\frac{(\min_jw_j)r}{K_1^r}\right)^2 \sqrt{\frac{\log J+m}{n}}\\
r_k&:= C_n2^k.
\end{align*}
Here $C_n$ is chosen to ensure the probability \eqref{eq:fail_ineq_2_1} to be small enough, which will be shown shortly. We take $N_K$ such that $r_{N_K} = C_n2^{N_K} \geq r \max_{j}w_j \sqrt{J}$ since $\|\Delta\|_{\mathscr{G},2,1} \leq (\max_j w_j) \sqrt{J} \|\Delta\|_2\leq r(\max_j w_j) \sqrt{J}$. Then we can let,
\begin{equation*}
N_K  := \left\lceil\log_2 \left(c \max_j w_j \sqrt{\frac{n J}{\log J + m}}\right)\right\rceil
\end{equation*}
for $c := ({K_1^r})^2/(r K_w^2 K_4L_g) \vee 1$.
By the sample size assumption,  $\max_j w_j /n \leq 1$ and $J \gtrsim n^{\beta}$, thus
\begin{align*}
N_K \leq \log_2 \left(c \max_j w_j \sqrt{\frac{n J}{\log J + m}}\right) \leq 2 \log \left(c'  n^{(3+\beta)/2}\right).
\end{align*}for some $c'>1$. Since $\mathbb{P}\left(\exists \Delta \in \mathbb{A}(r_{k-1},r_k) \mbox{ such that }\mbox{inequality \eqref{eq:unif_ineq_2} fails } \right) \leq \exp(-c_2(m+\log J))$ for any $k$ by \eqref{eq:non_unif_ineq_2}, we have for \eqref{eq:fail_ineq_2_2},
\begin{align*}
 \eqref{eq:fail_ineq_2_2} &\leq  \exp (-c_2(m+\log J) + \log N_K) \\
 & \leq 2\exp\left(-c_2(m+\log J)+\log \log c' n^{(3+\beta)/2}\right)\\
 & \leq c_3 \exp (-c_2(m+\log J) + \log \log n )
\end{align*}
for $c_3= 2((3+\beta)/2+\log c') >1$, as $\log \log c' n^{(3+\beta)/2}\leq \log \log n + \log ((3+\beta)/2+\log c')$.

Now we address \eqref{eq:fail_ineq_2_1}:
\begin{align*}
   \eqref{eq:fail_ineq_2_1}= \mathbb{P}\left(\exists \Delta \in \mathbb{A}(0,C_n);|\phi(\Delta; x_1^n,z_1^n)| >  \kappa_2  \sqrt{\dfrac{\log J+m}{n}} \right)
\end{align*}
For  $s \in (0,C_n]$, we define a function $\widetilde{\phi}:\mathbb{R}_+ \times \mathbb{R}^p \rightarrow \mathbb{R}$ , whose first argument takes the size (measured in $\|\cdot\|_{\mathscr{G},2,1}$ norm ), second argument takes normalized direction (i.e. $\|d\|_{\mathscr{G},2,1}=1$) such that
\begin{align*}
    \widetilde{\phi}(s,d;x_1^n,z_1^n):=\dfrac{1}{n}\sum_{i=1}^{n} e_i(f'({\theta^*}^T x_i +s x_i^Td)- f'({\theta^*}^Tx_i))x_i^Td = \phi(sd; x_1^n,z_1^n)
\end{align*}
In particular, for any $\Delta\in \mathbb{A}(0,C_n)$, we have  $\widetilde{\phi}(\|\Delta\|_{\mathscr{G},2,1},\Delta/\|\Delta\|_{\mathscr{G},2,1};x_1^n,z_1^n) = \phi(\Delta;x_1^n,z_1^n).$

Now we calculate how much $\phi$ changes when the size of the input vector varies while fixing the direction. In other words, we calculate the rate of change of $\widetilde{\phi}$ with respect to its first argument. To ease the notation, we suppress the dependence of $\phi, \widetilde{\phi}$ on $(x_1^n, z_1^n)$.
\begin{align*}
    |\frac{d}{ds}\widetilde{\phi}(s,d)|&= \left\lvert\frac{d}{ds}\left(\dfrac{1}{n}\sum_{i=1}^{n} e_i f'({\theta^*}^T x_i +sx_i^Td)- f'({\theta^*}^Tx_i))x_i^Td\right)\right\rvert\\
    % &=\left\lvert\dfrac{1}{n}\sum_{i=1}^{n} e_i f''({\theta^*}^T x_i +sx_i^Td)(x_i^Td)^2\right\rvert\\
    &\leq \dfrac{1}{n}\sum_{i=1}^{n}\left\lvert e_i f''({\theta^*}^T x_i +sx_i^Td)\right\rvert (x_i^Td)^2\\
    &\leq \dfrac{1}{4}\|x_i\|_{\bar{\mathscr{G}},2,\infty}^2\|d\|_{\mathscr{G},2,1}^2 \leq \left(\frac{K_1^r}{(\min_jw_j)r}\right)^2
\end{align*}
by $|e_i|\leq 1$ and $\|f''\|_\infty \leq (1/4)$. Then for any normalized direction $d \in \mathbb{R}^p$ such that $\|d\|_{\mathscr{G},2,1} = 1$, we have,
\begin{equation*}
    |\widetilde{\phi}(s,d) - \widetilde{\phi}(u,d)| \leq \left(\frac{K_1^r}{(\min_jw_j)r}\right)^2|s-u|
\end{equation*}
In particular, for any $0<s \leq C_n$,
\begin{equation*}
    |\widetilde{\phi}(s,\Delta/\|\Delta\|_{\mathscr{G},2,1})| \leq   |\widetilde{\phi}(C_n,\Delta/\|\Delta\|_{\mathscr{G},2,1})|+\left(\frac{K_1^r}{(\min_jw_j)r}\right)^2 C_n
\end{equation*}

Therefore,
\begin{align}
    \eqref{eq:fail_ineq_2_1}&=\mathbb{P}\left(\exists \Delta \in \mathbb{A}(0,C_n);|\widetilde{\phi}(\|\Delta\|_{\mathscr{G},2,1} ,\Delta/\|\Delta\|_{\mathscr{G},2,1}) |>  \kappa_2  \sqrt{\dfrac{\log J+m}{n}} \right)\nonumber\\
    &\leq \mathbb{P}\left(\exists \Delta \in \mathbb{A}(0,C_n);|\widetilde{\phi}(C_n,\Delta/\|\Delta\|_{\mathscr{G},2,1}) |>  \kappa_2  \sqrt{\dfrac{\log J+m}{n}}-\left(\frac{K_1^r}{(\min_jw_j)r}\right)^2  C_n\right)\nonumber\\
    & = \mathbb{P}\left(\exists \Delta \in \mathbb{A}(0,C_n);|\widetilde{\phi}(C_n,\Delta/\|\Delta\|_{\mathscr{G},2,1}) |> 9K_4L_g  \sqrt{\dfrac{\log J+m}{n}}\right)\nonumber
\end{align}
where the last line uses the fact $\left(\frac{K_1^r}{(\min_jw_j)r}\right)^2C_n =K_4L_g\sqrt{\frac{\log J+m}{n}}$.
Since $\widetilde{\phi}(C_n,\Delta/\|\Delta\|_{\mathscr{G},2,1}) = \phi(C_n\Delta/\|\Delta\|_{\mathscr{G},2,1})$ and $C_n\Delta/\|\Delta\|_{\mathscr{G},2,1} \in \{ \Delta'\in \mathbb{R}^p ; \|\Delta'\|_{\mathscr{G},2,1} =C_n\}$, we have,
\begin{align*}
    \eqref{eq:fail_ineq_2_1}&\leq \mathbb{P}\left( \sup_{\|\Delta\|_{\mathscr{G},2,1} = C_n} |\phi(\Delta) |>  9K_4L_g \sqrt{\dfrac{\log J+m}{n}}\right)\\
    &\leq \mathbb{P}\left( \sup_{(1/2)C_n\leq \|\Delta\|_{\mathscr{G},2,1} \leq C_n} |\phi(\Delta) |>  9K_4L_g \sqrt{\dfrac{\log J+m}{n}}\right)\\
    &\leq \exp(-c_2(\log J +m))
\end{align*}
by \eqref{eq:non_unif_ineq_2}.
Therefore,
\begin{align*}
\eqref{eq:fail_ineq_2_1}+\eqref{eq:fail_ineq_2_2}&\leq \exp(-c_2(\log J+m))+ c_3 \exp(-c_2(m+\log J)+\log \log n )\\
&\leq 2c_3 \exp(-c_2(m+\log J)+\log \log n ) \leq \epsilon/2
\end{align*}
 by the sample size condition $n\gtrsim (\log J+m) \vee (1/\epsilon)^{1/\beta}$, noting $\log \log n = o(\log J)$.

\clearpage

%%% extra_simul_results
\section{Supplementary simulation results in Section 4}\label{supp_sec:extra_simul_results}

In this section, we display additional classification performance results. We recall the simulation setting: dimension of features $p \in (10,5000)$, auto-correlation level among features $\rho \in (0, 0.2, 0.4, 0.6, 0.8)$, separation distance $ d\in (1.5,2.5,3.5)$, and the model specification scheme (logistic, misspecified). The sample size is $n_\ell = n_u =500$ in all setting and experiments are repeated 50 times. 
\subsection{The logistic model scheme}
\subsubsection{\texorpdfstring{$F_1$}{F1} scores under the logistic model scheme}

\begin{figure}[htbp]
    \centering
    \includegraphics[width=1\linewidth,height=.5\textheight]{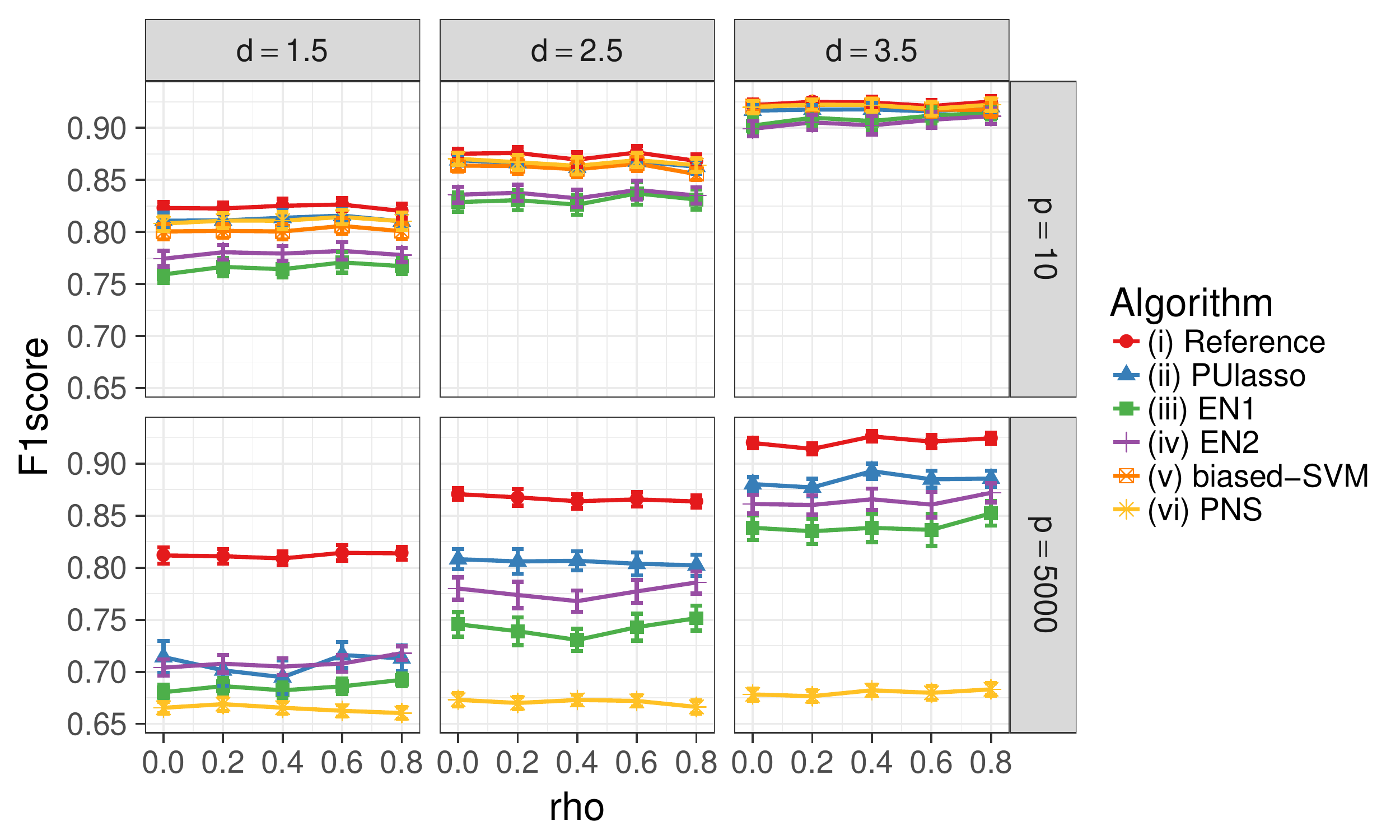}
    \caption{$F_1$ scores of algorithms (i)-(vi) under correct (logistic) model specification }
    \label{fig:S3.1-f1}
\end{figure}

\clearpage
\subsection{The misspecified model scheme}
Heavy-tailed distribution tends to generate more separated samples, leading to better classification performance. The scaling of $\Sigma_\rho$, which sets $Var(x_i^T\theta^*)$ the same across $\rho$, indirectly changes the separation between the two classes. As a result, we observe improved classification performance with higher $\rho$ in the misspecified setting. PUlasso algorithm continues to out-perform other algorithms in most cases, but performance difference among algorithms decreases under the model misspecification scheme.
\subsubsection{Mis-classification rates under the misspecified model}

\begin{figure}[h]
\centering
\includegraphics[width=1\linewidth,height=.5\textheight]{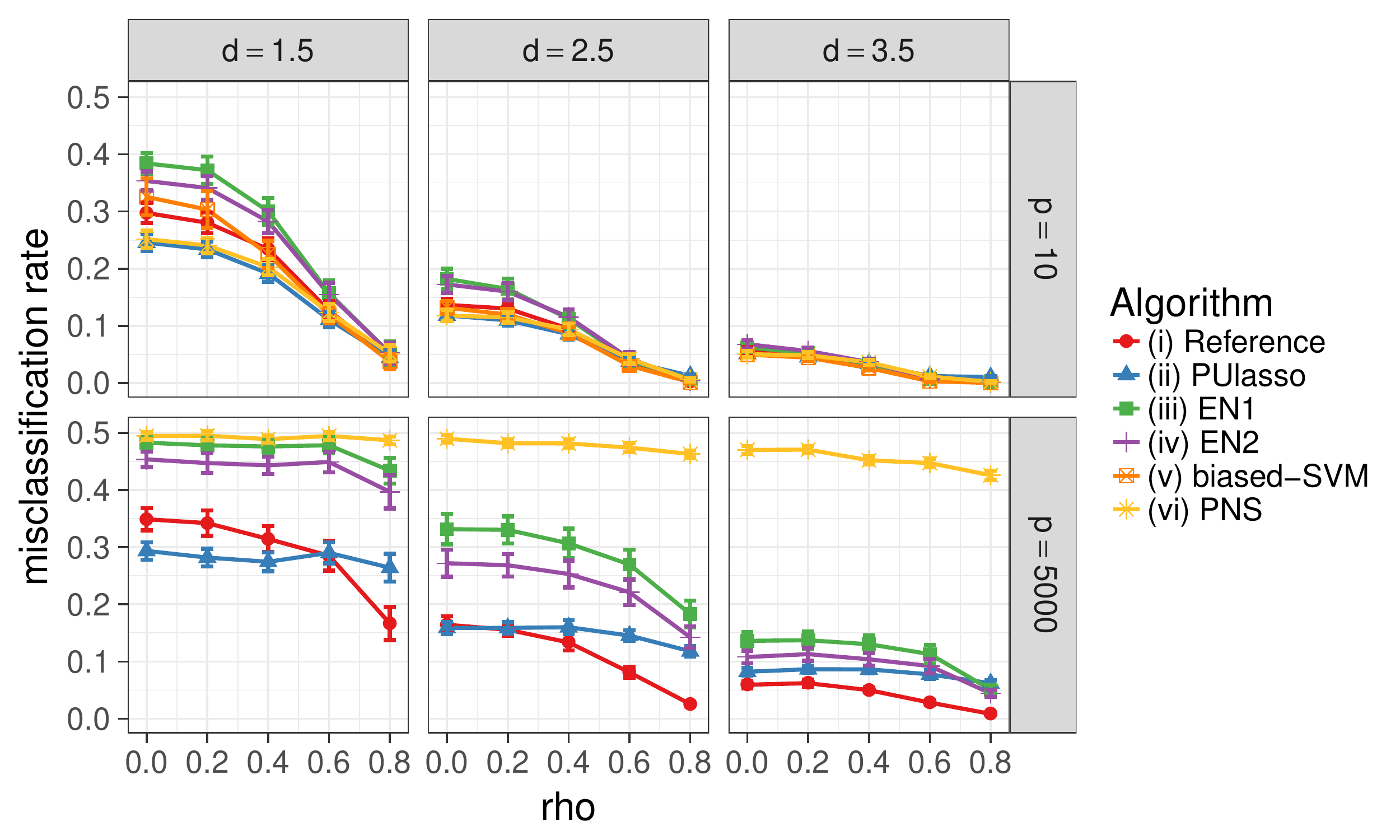}
    \captionof{figure}{Mis-classification rates of algorithms (i)-(vi) under model misspecification. }
     \label{fig:S3.2-mclr}
\end{figure}

\clearpage
\subsubsection{\texorpdfstring{$F_1$}{F1} scores under the misspecified model}
\vspace{.5in}
\begin{figure}[h]
\includegraphics[width=1\linewidth,height=.5\textheight]{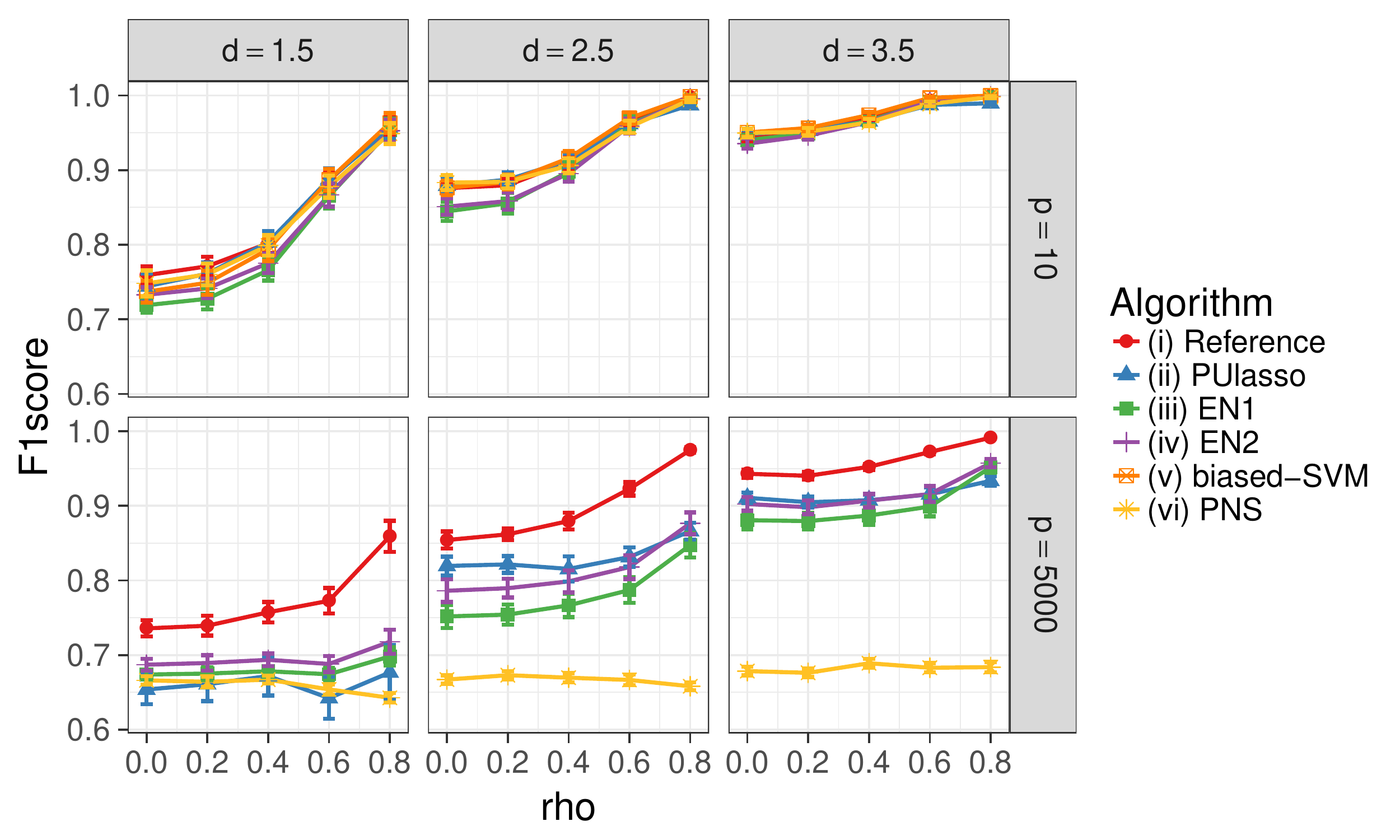}
    \captionof{figure}{$F_1 $ scores of algorithms (i)-(vi) under model misspecification}
    \label{fig:S3.2-f1}
    \end{figure}
\bibliographystyleSupp{plainnat}
\bibliographySupp{PUlasso}
% \printbibliography
% \printbibliography[heading=subbibliography]
% \end{refsection}
\end{document}